\documentclass[reqno,11pt]{amsart}
\usepackage{amsmath,amsfonts,mathrsfs,amssymb}
\allowdisplaybreaks[4]
\newtheorem{theo}{Theorem}[section]
\newtheorem{prop}[theo]{Proposition}
\newtheorem{lem}[theo]{Lemma}
\newtheorem{cor}[theo]{Corollary}

\newtheorem{defi}{Definition}
\newtheorem{rem}{Remark}
\def\tr{\mathop{\rm tr}\nolimits}

\newcommand{\N}{{\mathbb N}}
\newcommand{\Z}{{\mathbb Z}}
\newcommand{\R}{{\mathbb R}}

\begin{document}
\title[Fractal dimensions of Spectrum]
{The fractal dimensions of the spectrum of
Sturm Hamiltonian}
\author{
Qing-Hui LIU, Yan-hui Qu, Zhi-Ying WEN }


\maketitle

\begin{abstract}
Let $\alpha\in(0,1)$ be irrational and $[0;a_1,a_2,\cdots]$ be the continued fraction expansion of $\alpha$.
Let $H_{\alpha,V}$ be the Sturm Hamiltonian with
frequency $\alpha$ and coupling $V$, $\Sigma_{\alpha,V}$ be the spectrum of $H_{\alpha,V}$.
The fractal dimensions of the spectrum have been determined by Fan, Liu and Wen (Erg. Th. Dyn. Sys.,2011) when $\{a_n\}_{n\ge1}$  is bounded.
The present paper will treat the most difficult case, i.e, $\{a_n\}_{n\ge1}$  is unbounded. We prove that
for $V\ge24$,
$$
\dim_H\ \Sigma_{\alpha,V}=s_*(V)\ \ \ \text{ and }\ \ \  \overline{\dim}_B\ \Sigma_{\alpha,V}=s^*(V),
$$
where $s_*(V)$ and $s^*(V)$ are lower and upper pre-dimensions respectively. By this result, we determine the fractal dimensions of the spectrums
for all Sturm Hamiltonians.

We also show      the following results: $s_*(V)$ and $s^*(V)$ are Lipschitz continuous
on any bounded interval of $[24,\infty)$; the limits
$s_*(V)\ln V$ and $s^*(V)\ln V$ exist as $V$ tend to infinity,
and the limits are constants  only depending on $\alpha$; $s^\ast(V)=1$ if and only if $\limsup_{n\to\infty}(a_1\cdots a_n)^{1/n}=\infty,$ which can be compared with the fact:  $s_\ast(V)=1$ if and only if $\liminf_{n\to\infty}(a_1\cdots a_n)^{1/n}=\infty$( Liu and Wen,  Potential anal. 2004).

Key words:  Sturm Hamiltonian; fractal dimensions; Gibbs like measure; Cookie-Cutter-like.

Mathematics Subject Classification:28A78, 37C45, 81Q10
\end{abstract}

\section{Introduction}

The Sturm Hamiltonian is a  discrete Schr\"odinger operator
$$
(H\psi)_n:=\psi_{n-1}+\psi_{n+1}+v_n \psi_n
$$
on $\ell^2(\Z),$ where
the potential $(v_n)_{n\in\mathbb{Z}}$ is given by
\begin{equation}\label{sturm}
v_n=V\chi_{[1-\alpha,1)}(n\alpha+\phi \mod 1),\quad \forall n\in\mathbb{Z},
\end{equation}
where $\alpha\in(0,1)$ is  irrational, and is called frequency,
$V>0$ is called  coupling, $\phi\in[0,1)$ is
called phase.  It is known that the spectrum of Sturm Hamiltonian
 is independent of $\phi$, so we take $\phi=0$ and denote the spectrum by $\Sigma_{\alpha,V}$.
We often simplify the notation $\Sigma_{\alpha,V}$ to $\Sigma_V$ or $\Sigma$ when $\alpha$ or $V$ are fixed.
The present paper is devoted to  determine  the fractal dimensions of $\Sigma_{\alpha,V}$ for all irrational $\alpha$.

The most prominent model among the Sturm Hamiltonian is the Fibonacci Hamiltonian, which is given by taking $\alpha$ to be the golden number $\alpha_0:=(\sqrt{5}-1)/2.$ This model was  introduced by physicists  to model the quasicrystal  system(\cite{KKT,OPRSS}).
S\"ut\"o  showed that the spectrum always has zero Lebesgue measure \cite{Su},
$$
L(\Sigma_{\alpha_0,V})=0,\ \ \ \ \text{ for all } V>0.
$$

Then it is  natural  to ask what is the fractal dimension of the spectrum.
Raymond first estimated the Hausdorff dimension \cite{R}, and he showed that $\dim_H\Sigma_{\alpha_0,V}<1$ for $V>4$.  Jitomirskaya and Last \cite{JL} showed that for any $V>0$, the spectral measure of the operator has positive Hausdorff dimension, as a consequence $\dim_H\Sigma_{\alpha_0,V}>0.$ By using dynamical method, Damanik et al. \cite{DEGT} showed that if $V\ge 16$ then
\begin{equation}\label{coincide-H-B}
\dim_B\Sigma_{\alpha_0,V}=\dim_H \Sigma_{\alpha_0,V}.
\end{equation}
They also got lower and upper bounds for the dimensions. Due to these bounds they further showed that
\begin{equation}\label{asym-Fibo}
\lim_{V\to\infty} \dim_H \Sigma_{\alpha_0,V}\ln  V =\ln (1+\sqrt{2}).
\end{equation}
We remark that more than a natural question, the fractal dimensions of the spectrum are also related to the rates of propagation of the fastest part of the wavepacket(see  \cite{DEGT} for detail).

Write $d(V)=\dim_H \Sigma_{\alpha_0,V}$. Cantat \cite{C}, Damanik and Gorodetski \cite{DG} showed that: $d(V)\in(0,1)$ is analytic on $(0,\infty)$.
In \cite{DG2}, Damanik and Gorodetski further showed that $\lim_{V\downarrow 0}d(V)=1$ and the speed is linear.

\bigskip

Now we  go back to the general Sturm Hamiltonian case. We fix an irrational $\alpha\in(0,1)$  with continued fraction expansion $[0;a_1,a_2,\cdots]$. Write
\begin{equation}\label{K-u-l}
K_\ast(\alpha)=\liminf_{k\rightarrow\infty}
(\prod_{i=1}^k a_i)^{1/k}\ \text{ and }\  K^\ast(\alpha)=
\limsup_{k\rightarrow\infty}(\prod_{i=1}^k a_i)^{1/k}.
\end{equation}

 Bellissard et al.  \cite{BIST} showed
 that $\Sigma_{\alpha,V}$ is a Cantor set of
Lebesgue measure zero.
Damanik, Killip and Lenz \cite{DKL} showed that, if   $\limsup\limits_{k\rightarrow\infty}\frac{1}{k}\sum_{i=1}^k
a_i<\infty$, then  $\dim_H \Sigma_{\alpha,V}>0$, notice that the set of such $\alpha$ has Lebesgue measure 0 in $(0,1).$ Basing on the analysis of Raymond \cite{R} about  the structure of spectrum, Liu and Wen \cite{LW} showed that  for $V\ge 20$
\begin{equation}\label{dim-hausdorff}
\begin{cases}
\dim_H \Sigma_{\alpha,V} \in(0,1) & \text{ if } K_\ast(\alpha)< \infty\\
\dim_H \Sigma_{\alpha,V} =1 & \text{ if } K_\ast(\alpha)= \infty.
\end{cases}
\end{equation}

Raymond \cite{R}, Liu and Wen \cite{LW} showed that the spectrum $\Sigma_{\alpha,V}$ has a natural covering structure. This structure makes it possible to define the so called pre-dimensions $s_\ast(V)$ and $s^\ast(V)$(see \eqref{predim} for the definition).
Liu, Peyriere and Wen \cite{LPW07} showed that
\begin{equation}\label{dimineq}
{\dim}_H \Sigma_{\alpha,V}\le s_*(V),\quad \overline{\dim}_B
\Sigma_{\alpha,V} \ge s^*(V).
\end{equation}
 Moreover, they show that, for $\alpha$ of {\it bounded type}, i.e. $\{a_k\}_{k\ge1}$ bounded
\begin{equation*}
\lim_{V\to \infty} s_*(V)\ln V = -\ln f_*(\alpha),\quad \lim_{V\to
\infty} s^*(V)\ln V = -\ln f^*(\alpha).
\end{equation*}
(see \eqref{f-u-l-alpha} for the definition of $f_*(\alpha)$ and $f^*(\alpha)$).
When $\alpha=\alpha_0$ they proved that
$$
f_\ast(\alpha_0)=f^\ast(\alpha_0)=(1+\sqrt{2})^{-1}.
$$

Recently Fan, Liu and Wen \cite{FLW} showed that  for $\alpha$ of bounded type, the two inequalities in \eqref{dimineq} are indeed equalities.  Moreover if  $\{a_k\}_{k\ge 1}$ is eventially periodic, then $s_\ast(V)=s^\ast(V).$ Thus for $\alpha$ of bounded type, they determined the fractal dimensions of the spectrum and generalized \eqref{coincide-H-B} and
\eqref{asym-Fibo}.

In this paper we will complete the picture for the fractal dimensions of the spectrum of Sturm Hamiltonian by treating the most difficult part: $\alpha$ is of unbounded type,
i.e., $\{a_k\}_{k\ge 1}$ is unbounded. We state now the main results of the paper and some remarks.

\begin{theo}\label{fracdim}
Let $V\ge 24$,  and $\alpha\in(0,1)$ be irrational. Then
\begin{equation}\label{formula-dim}
\dim_H \Sigma_{\alpha,V}=s_*(V)
\ \ \ \text{ and }\ \ \  \overline{\dim}_B \Sigma_{\alpha,V}=s^*(V).
\end{equation}
Moreover
\begin{equation}\label{dim-box}
\begin{cases}
\overline{\dim}_B \Sigma_{\alpha,V}\in(0,1) & \text{ if } K^\ast(\alpha)<\infty\\
\overline{\dim}_B \Sigma_{\alpha,V} =1 & \text{ if } K^\ast(\alpha)= \infty.
\end{cases}
\end{equation}
\end{theo}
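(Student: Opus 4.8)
The plan is to exploit the canonical nested band structure of the spectrum. By Raymond \cite{R} and Liu--Wen \cite{LW}, $\Sigma_{\alpha,V}$ carries a decreasing sequence $\{\mathcal B_n\}_{n\ge 0}$ of finite covers by closed ``bands'' with $\Sigma_{\alpha,V}=\bigcap_n\bigcup_{B\in\mathcal B_n}B$, and the subdivision of a band of $\mathcal B_n$ into bands of $\mathcal B_{n+1}$ is governed by the partial quotient $a_{n+1}$. I would regard this as a \emph{Cookie-Cutter-like} set whose $n$-th generation branching is essentially $a_n$, and prove the two inequalities that are not yet known:
$$\dim_H\Sigma_{\alpha,V}\ge s_*(V)\qquad\text{and}\qquad \overline{\dim}_B\Sigma_{\alpha,V}\le s^*(V).$$
Together with the reverse bounds \eqref{dimineq} of Liu--Peyri\`ere--Wen these give \eqref{formula-dim}, after which \eqref{dim-box} reduces to deciding when $s^*(V)=1$.

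The first step is to make the band recursion explicit and to extract, for $V\ge 24$, the geometric facts that turn $\{\mathcal B_n\}$ into a bona fide Cookie-Cutter-like system: the length ratio of a band to its parent is bounded above by a constant $<1$ and below by a constant depending only on $V$; distinct children of a common parent are separated by at least a fixed fraction of the parent's length; and the lengths of all children of a given band are mutually comparable (bounded distortion). Raising the coupling threshold from the $20$ of \cite{LW} to $24$ is precisely what keeps the constants in these estimates from degenerating as the branching number $a_n$ grows without bound.

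For the Hausdorff lower bound I would then construct a Gibbs-like measure $\mu$ on $\Sigma_{\alpha,V}$. Letting $s_n$ be the level-$n$ pre-dimension exponent of \eqref{predim} (so that $\sum_{B\in\mathcal B_n}|B|^{s_n}$ is of order one), I would distribute mass across generations so that $\mu(B)$ is comparable to $|B|^{s_n}$ for each $B\in\mathcal B_n$, obtaining $\mu$ as a weak-$*$ limit and reading off its Gibbs property from the bounded distortion of Step~1. A mass distribution principle argument, with $\mu(B(x,r))$ compared to $r^{s_n}$ for the generation $n$ at which a band through $x$ has length of order $r$, then gives $\dim_H\Sigma_{\alpha,V}\ge\liminf_n s_n=s_*(V)$. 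For the box dimension upper bound I would use, for each small $r$, the adapted cover $\mathcal C_r$ obtained by descending each branch of the tree until the band length first drops below $r$, so that every $B\in\mathcal C_r$ has $|B|$ of order $r$; estimating $\#\mathcal C_r$ through a Moran-type sum of the $|B|^{s}$ with $s$ slightly above $s^*(V)$, and showing this sum stays bounded uniformly in $r$ despite the unbounded branching and the variation of the level exponents, yields $\overline{\dim}_B\Sigma_{\alpha,V}\le s^*(V)$.

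Finally, \eqref{formula-dim} follows by combining the two new inequalities with \eqref{dimineq}. For \eqref{dim-box} I would argue straight from \eqref{predim} that $s^*(V)=1$ if and only if $K^*(\alpha)=\infty$: if $K^*(\alpha)<\infty$ then $\sum_{B\in\mathcal B_n}|B|^{s}$ stays bounded along all $n$ for some $s=s(V)<1$, forcing $s^*(V)\le s<1$; if $K^*(\alpha)=\infty$ then along a subsequence of generations the bands become so numerous, while their lengths contract only geometrically, that the pre-dimension exponent $s_n$ is forced to $1$. The hard part throughout --- and the feature genuinely absent from the bounded-type case of \cite{FLW} --- is controlling the Cookie-Cutter-like geometry uniformly across generations in which $a_n$ is arbitrarily large: there a small ball can meet many sibling bands and the band lengths spread out widely within a single generation, so the reinforced estimates available for $V\ge 24$, together with a careful accounting of the bands stratified by length, are what make the measure construction and the band count go through.
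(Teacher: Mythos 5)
Your overall framework (band structure, bounded distortion, a Gibbs-type measure plus the mass distribution principle, adapted covers for the box dimension) is the right toolkit, but the proposal is missing the one idea that the unbounded-type case actually turns on, and the step where you invoke it would fail as written. You propose to ``distribute mass across generations so that $\mu(B)$ is comparable to $|B|^{s_n}$ for each $B\in\mathcal B_n$'' and then run the mass distribution principle directly on $\Sigma_{\alpha,V}$. No such measure exists here: a $(k,I)$-band has a single child of type $II$ whose length ratio $\zeta_{a_{k+1}}$ can be as small as $(2(V+5))^{-(a_{k+1}-1)}$, so consistency of the measure across generations forces type-dependent correction factors (this is exactly the ``weak'' Gibbs property of Theorem \ref{gibbs0}, with the factor $\zeta_{a_{k+1}}^\beta/a_{k+1}^{1-\beta}$ on type $I$ bands). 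Worse, even granting the weak Gibbs measure, the ball-counting step of the mass distribution principle breaks down: inside a type $II$ or $III$ band of level $k$ the type $I$ children have ratios $\sim a_{k+1}^{-1}\sin^2\bigl(l\pi/(a_{k+1}+1)\bigr)$, spreading from $a_{k+1}^{-1}$ down to $a_{k+1}^{-3}$, so a small ball sitting near the edge of the parent can meet $\sim |U|a_{k+1}^{3}/|B|$ children, and the interpolation $d_1^\beta d_2^{1-\beta}$ picks up an unbounded factor (a positive power of $a_{k+1}$). This is why the paper does not apply the measure to $\Sigma$ directly but introduces the truncation: for small $\varepsilon$ it deletes the children with index $l$ within relative distance $\varepsilon$ of the endpoints, obtaining $E_\varepsilon\subset\Sigma$ on which all surviving ratios are $\gtrsim\varepsilon^2/a_{k+1}$, proves $\dim_H E_\varepsilon=s_*(\varepsilon)$ by the Gibbs-like measure on $E_\varepsilon$ (Proposition \ref{dim-epsilon}), and then shows $s_*(\varepsilon)\to s_*(V)$ as $\varepsilon\to0$ (Proposition \ref{dim-converge}, itself a nontrivial estimate using bounded covariation). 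Without this truncation-and-limit step, your Hausdorff lower bound does not go through.

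Two further points. First, \eqref{dim-box} does not follow ``straight from \eqref{predim}'': your heuristic that for $K^*(\alpha)=\infty$ the band lengths ``contract only geometrically'' is false (a child can be smaller than its parent by a factor $\sim a_k^{-3}$), and deciding when $s^*(V)=1$ requires genuine quantitative two-sided bounds on $b_{k,\gamma}$; the paper gets these from products of $3\times3$ nonnegative matrices encoding the type transitions (Lemma \ref{infsup}, Proposition \ref{upper-lower}), and in the $K^*=\infty$ direction it again uses a truncated family ($\varepsilon_0=1/4$) to lower-bound $b_{k,s_k}$ and force $s_k\to1$; the $K^*<\infty$ direction uses the norm estimate $\|Q_kQ_{k+1}\|\le 2(a_ka_{k+1})^{1-\gamma}t_1^{-\gamma}$ to get $s^*(V)\le(\ln K^*+\ln\sqrt2)/(\ln K^*+\ln\frac{V-8}{3})<1$. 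Second, in your box-dimension cover $\mathcal C_r$ the bands at the stopping time need not have length of order $r$ (again because of the unbounded per-generation ratios), so ``$|B|$ of order $r$'' cannot be used as stated; this part is repairable (and the paper instead bounds $\overline{\dim}_B$ via the gap-sum characterization together with $b_{k,s}/b_{k+1,s}\sim a_{k+1}^{s-1}$ and the uniform decay $b_{k,s}\le 4^{1-\varepsilon k/4}$), but the missing truncation idea in the Hausdorff bound is the essential gap.
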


\begin{theo}\label{limfor}
Fix $\alpha\in(0,1)$ irrational.  Let $f_*(\alpha)$ and $f^*(\alpha)$
be defined as in \eqref{f-u-l-alpha}, then
\begin{equation}\label{asym-general}
\lim_{V\to \infty} s_*(V)\ln V
= -\ln f_*(\alpha),\quad
\lim_{V\to \infty} s^*(V)\ln V
= -\log f^*(\alpha).
\end{equation}
\end{theo}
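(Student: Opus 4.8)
The plan is to show that, uniformly over the level $n$, the $n$-th pre-dimension $s_n(V)$ agrees — after multiplication by $\ln V$ — with an explicit, $V$-independent combinatorial quantity $\gamma_n=\gamma_n(a_1,\dots,a_{n+1})$ up to an error $O(1/\ln V)$, and then to interchange $\lim_{V\to\infty}$ with the $\liminf_n$ (resp.\ $\limsup_n$) that defines $s_*(V)$ (resp.\ $s^*(V)$). First I would recall, from \eqref{predim} and the Raymond--Liu--Wen description of the covering structure (\cite{R,LW,LPW07}), that $s_*(V)=\liminf_n s_n(V)$ and $s^*(V)=\limsup_n s_n(V)$, where $s_n(V)$ is the level-$n$ pre-dimension: the bands of the $n$-th periodic approximation split into boundedly many types, with cardinalities $N_\tau(n)$ that are explicit functions of $a_1,\dots,a_{n+1}$, and $s_n(V)$ is determined by a pressure / transfer-matrix condition built from the band lengths. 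The key analytic input I would need to pin down is the bandwidth asymptotics: for $V\ge 24$, every band $B$ of type $\tau$ at level $n$ should satisfy
\begin{equation*}
c^{-1}\,V^{-m_\tau}\ \le\ |B|\ \le\ c\,V^{-m_\tau},
\end{equation*}
with $m_\tau\in\N$ an explicit exponent depending on the type and on $a_1,\dots,a_{n+1}$, and $c>1$ a constant independent of $n$ and of the size of the partial quotients.

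Granting this, I would let $\gamma_n$ be the quantity obtained by substituting $|B|\asymp V^{-m_\tau}$ and $t=s\ln V$ into the defining condition for $s_n(V)$; concretely, $\gamma_n$ is the unique $t\ge 0$ at which the associated \emph{tropical} transfer product $M_1(t)\cdots M_n(t)$ has unit norm (equivalently, in the scalar case, $\sum_\tau N_\tau(n)\,e^{-m_\tau t}=1$), with the convention $\gamma_n=+\infty$ when no such $t$ exists. The heart of the proof is then the estimate
\begin{equation}\label{keyest}
\bigl|\,s_n(V)\ln V-\gamma_n\,\bigr|\ \le\ \frac{C}{\ln V}
\end{equation}
for every level $n$ with $\gamma_n$ below a fixed threshold (and, for the remaining levels, the one-sided bound $s_n(V)\ln V\ge(\text{threshold})-C/\ln V$), where $C$ is an \emph{absolute} constant and $V\ge 24$. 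One proves \eqref{keyest} by sandwiching: plugging $t=\gamma_n\pm C/\ln V$ into the genuine defining function for $s_n(V)$ and using the two-sided bandwidth bounds together with monotonicity in $s$, the stray factors $c^{\pm s}$ shift the answer by only $O(1/\ln V)$. The delicate point is that $C$ must not depend on $n$: the discrepancies produced by the successive renormalization steps must not accumulate, and they must stay bounded even when some $a_{n+1}$ is enormous. This is exactly where the cookie-cutter-like structure enters — each renormalization step being a refinement of uniformly bounded combinatorial type, the associated Gibbs-like weights are comparable from one level to the next, with constants depending only on the a priori bound $V\ge 24$.

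Having \eqref{keyest}, I would conclude as follows. When $\limsup_n\gamma_n<\infty$, all but finitely many levels have $\gamma_n$ below a fixed threshold, so \eqref{keyest} applies to them and, since $C/\ln V$ is independent of $n$,
\begin{equation*}
\Bigl|\,s_*(V)\ln V-\liminf_n\gamma_n\,\Bigr|\ \le\ \frac{C}{\ln V}
\qquad\text{and}\qquad
\Bigl|\,s^*(V)\ln V-\limsup_n\gamma_n\,\Bigr|\ \le\ \frac{C}{\ln V};
\end{equation*}
letting $V\to\infty$ gives $s_*(V)\ln V\to\liminf_n\gamma_n$ and $s^*(V)\ln V\to\limsup_n\gamma_n$. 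In the degenerate regime $\limsup_n\gamma_n=+\infty$ (equivalently, by \eqref{f-u-l-alpha}, $K^*(\alpha)=\infty$ and $f^*(\alpha)=0$) the corresponding statement is immediate from $s^*(V)=1$ — shown elsewhere in this paper, cf.\ also \cite{LW} for the analogous fact about $s_*(V)$ — and symmetrically when $\liminf_n\gamma_n=+\infty$; if only $\limsup_n\gamma_n=+\infty$, the one-sided bound in \eqref{keyest} combined with the $\liminf$ along a subsequence realizing $\liminf_n\gamma_n$ still yields the $s_*(V)$-part. It then remains only to unwind \eqref{f-u-l-alpha} to identify $\liminf_n\gamma_n=-\ln f_*(\alpha)$ and $\limsup_n\gamma_n=-\ln f^*(\alpha)$ (with the convention $-\ln 0=+\infty$), consistently with \eqref{dim-hausdorff}, \eqref{dim-box} and the equivalences $f_*(\alpha)>0\iff K_*(\alpha)<\infty$, $f^*(\alpha)>0\iff K^*(\alpha)<\infty$.

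The hardest step will be \eqref{keyest} with an absolute constant. In the bounded-type case of \cite{LPW07} the partial quotients are bounded, so each level carries boundedly many bands and this uniformity is close to automatic; in the unbounded case one has to work to (i) establish the bandwidth asymptotics $|B|\asymp V^{-m_\tau}$ with comparability constants independent of $a_{n+1}$, and (ii) control the level-to-level distortion of the Gibbs-like weights so that it does not grow with $n$. The near-degenerate situation — some $\gamma_n$ close to (or above) its threshold, caused by a very large partial quotient — also demands care, since there the defining condition for $s_n(V)$ becomes ill-conditioned and the crude sandwiching loses its grip; this is the same mechanism responsible for the dichotomies in \eqref{dim-hausdorff} and \eqref{dim-box}.
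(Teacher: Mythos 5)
Your plan hinges on the ``key analytic input'' that every band $B$ of a given type at level $n$ satisfies $c^{-1}V^{-m_\tau}\le|B|\le c\,V^{-m_\tau}$ with $c$ independent of $n$ \emph{and of the size of the partial quotients}, and on the resulting estimate $|s_n(V)\ln V-\gamma_n|\le C/\ln V$ with an absolute constant. That input is false in exactly the regime this paper treats. By Lemma \ref{lem-bc} the two-sided bounds on $|B_w|$ are $\prod_{e_i=e_{12}}t_2^{-(a_i-1)}\prod_{e_i\ne e_{12}}(t_2a_i^{3})^{-1}\le|B_w|\le 4\prod_{e_i=e_{12}}t_1^{-(a_i-1)}\prod_{e_i\ne e_{12}}(t_1a_i)^{-1}$, and within a single parent band of type $III$ the $a_k$ type-$I$ children have relative lengths $\sim a_k^{-1}\sin^2\bigl(l\pi/(a_k+1)\bigr)$, i.e.\ spread over a ratio of order $a_k^{2}$. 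So bands of the same order and type differ in length by unbounded factors when $\{a_k\}$ is unbounded, and no $V$-power with a partial-quotient-independent constant can capture them; the discrepancy feeds into $s_n(V)\ln V$ as an error of size comparable to $\ln\delta_n/\ln V$ (with $\delta_n=(a_1\cdots a_n)^{1/n}$), not $O(1/\ln V)$ with an absolute constant. Your own closing paragraph flags this as ``the hardest step,'' but it is not merely hard: the uniform comparability you would need cannot be established by the proposed sandwiching, so the central estimate \eqref{keyest} is unproven (and, with an absolute constant, false).

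What saves the theorem — and what the paper actually does in Proposition \ref{upper-lower} — is that an absolute constant is unnecessary: since $\alpha$ is fixed as $V\to\infty$, it suffices to sandwich $s_*(V)$ and $s^*(V)$ with constants depending on $\alpha$. Concretely, $b_{k,\gamma}$ is compared with the matrix products $\mathbf S_k(t_1^{-\gamma})$ and $\delta_k^{-3k\gamma}\mathbf S_k(t_2^{-\gamma})$, the polynomial spread in the $a_i$ being absorbed into the factor $\delta_k^{-3k\gamma}$; this yields $\frac{-\ln f_*(\alpha)}{6\ln 4K_*^2+\ln 2(V+5)}\le s_*(V)\le\frac{-\ln f_*(\alpha)}{\ln((V-8)/3)}$ (and the analogue for $s^*$ with $K^*$), whose denominators differ from $\ln V$ only by $\alpha$-dependent additive constants, which is enough for \eqref{asym-general}. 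Moreover, when $K_*=\infty$ or $K^*=\infty$ the two-sided comparison breaks down on the levels with huge $\delta_k$ — precisely the ``ill-conditioned'' levels you mention — and a genuinely different, one-sided mechanism is needed: the paper truncates to $\Omega_k(\varepsilon_0)$ with $\varepsilon_0=1/4$ and derives $s_k\ge\frac{\ln\delta_k-\ln 8}{\ln\delta_k+\ln(t_2/4)}$, giving $s_*(V)=1$ (resp.\ $s^*(V)=1$) so that both sides of \eqref{asym-general} are $+\infty$. Your proposal gestures at a one-sided bound for such levels but supplies no mechanism for it, and it also leaves unaddressed the identification of $\liminf_n\gamma_n$ and $\limsup_n\gamma_n$ with $-\ln f_*(\alpha)$ and $-\ln f^*(\alpha)$, which in the paper is built into the comparison at shifted arguments via \eqref{2-2} and the monotonicity of $\psi,\phi$ from Lemma \ref{infsup} rather than following by merely ``unwinding'' \eqref{f-u-l-alpha}.
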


\begin{theo}\label{lip-conti}
$s_*(V)$ and $s^*(V)$ are
  Lipschitz continuous on any bounded interval of $[24,\infty).$
\end{theo}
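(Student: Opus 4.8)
\medskip
\noindent\textit{Proof proposal.}
Recall (see \eqref{predim}) that $s_*(V)$ and $s^*(V)$ are the lower and upper limits of a sequence of \emph{level-$n$ pre-dimensions} $s_n(V)$: the covering structure of Raymond and Liu--Wen provides, for each $n$, a finite family $\mathcal B_n(V)$ of bands (closed intervals covering $\Sigma_{\alpha,V}$) whose lengths $|B(V)|$ are real-analytic in $V$ for $V\ge24$, and $s_n(V)$ is the unique exponent with $\sum_{B\in\mathcal B_n(V)}|B(V)|^{s_n(V)}=1$; then $s_*(V)=\liminf_n s_n(V)$ and $s^*(V)=\limsup_n s_n(V)$. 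Since the lower (resp.\ upper) limit of a family of functions sharing one common Lipschitz constant $L$ on an interval is again $L$-Lipschitz there, it suffices to prove: \emph{for every $M$ there is $L=L(M)$ with $|s_n(V)-s_n(V')|\le L|V-V'|$ for all $n$ and all $V,V'\in[24,M]$.}

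\medskip
The whole argument then rests on one geometric estimate, which I would isolate as a lemma: there are constants $c_1>0$, $C_1<\infty$, $C_0<\infty$, depending only on $M$, so that for all $n\ge1$, all $B\in\mathcal B_n(V)$ and all $V\in[24,M]$,
\begin{equation}\label{pp-key}
|B(V)|\le e^{-c_1 n},\qquad
\Big|\frac{d}{dV}\log|B(V)|\Big|\le C_1 n,\qquad s_n(V)\le C_0 .
\end{equation}
The bound $s_n\le C_0$ and the exponential decay of band lengths follow from the standard uniform exponential control of the number and the lengths of the bands of $\mathcal B_n$, of the type already established in \cite{R,LW,LPW07}. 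The derivative bound is the substantive point. One uses that a band of $\mathcal B_{n+1}(V)$ is cut out inside its parent band of $\mathcal B_n(V)$ by requiring that an appropriate iterate of the Sturm trace map take values in $[-1,1]$, so the relative length $r(V)=|B'(V)|/|B(V)|$ is an explicit ratio of differences of the trace polynomials (real-analytic in $(E,V)$) at the band endpoints. Differentiating in $V$ the defining equations $x(E,V)=\pm1$ of those endpoints and invoking the uniform hyperbolicity available for $V\ge24$ --- uniform lower bounds on $|\partial_E x|$ over the spectral bands, bounded distortion, and uniform control of the Chebyshev-type polynomials $p_a$ governing an $a$-fold transfer step --- yields $|\frac{d}{dV}\log r(V)|\le C_1$ for each elementary step, with $C_1$ independent of the step and of the continued-fraction digits; telescoping over the $n$ steps from level $0$ to level $n$ gives the middle inequality of \eqref{pp-key}.

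\medskip
Granting \eqref{pp-key}, the uniform Lipschitz bound is immediate. Fix $V,V'\in[24,M]$ and put $t=s_n(V')-s_n(V)$; by symmetry assume $t\ge0$. Integrating the derivative bound gives $e^{-C_1 n|V-V'|}\le|B(V')|/|B(V)|\le e^{C_1 n|V-V'|}$ for every $B$, hence, using $s_n(V')\le C_0$,
\begin{equation}\label{pp-comp}
1=\sum_{B}|B(V')|^{s_n(V')}\le e^{C_0 C_1 n|V-V'|}\sum_{B}|B(V)|^{s_n(V')} ;
\end{equation}
on the other hand $|B(V)|^{s_n(V')}=|B(V)|^{s_n(V)}|B(V)|^{t}\le e^{-c_1 n t}|B(V)|^{s_n(V)}$, so $\sum_B|B(V)|^{s_n(V')}\le e^{-c_1 n t}$. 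Comparing the two displays gives $0\le C_0C_1 n|V-V'|-c_1 n t$, that is $|s_n(V)-s_n(V')|\le (C_0C_1/c_1)\,|V-V'|$ for every $n$. Thus $L=C_0C_1/c_1$ works, so $s_*$ and $s^*$ are $L$-Lipschitz on $[24,M]$; since $M$ is arbitrary, the proof is complete.

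\medskip
The main obstacle is the derivative estimate in \eqref{pp-key}. In the bounded-type case of \cite{FLW} the number of elementary levels between the continued-fraction scales $q_k$ and $q_{k+1}$ is bounded, whereas here it is $\asymp a_{k+1}$ and hence unbounded; one must therefore control $\frac{d}{dV}\log r(V)$ for the trace-map steps \emph{uniformly in the digits $a_k$}. This is exactly where the hypothesis $V\ge24$ enters: it supplies quantitative hyperbolicity and bounded-distortion estimates for the Sturm trace map that do not degenerate as $a_k\to\infty$. Once that uniformity is secured, the remainder of the argument is soft.
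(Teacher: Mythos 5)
Your overall mechanism --- a per--level Lipschitz bound for $s_n(V)$ obtained by playing an upper bound on the $V$-variation of $\log|B(V)|$ against a matching exponential lower bound on $-\log|B(V)|$, then passing to $\liminf/\limsup$ --- is exactly the mechanism of the paper's Proposition \ref{cont-1} (which works with the finite-difference covariation estimate of Proposition \ref{diff}(iii) rather than with derivatives, but that is cosmetic). However, the middle inequality of your key lemma, $\bigl|\frac{d}{dV}\log|B(V)|\bigr|\le C_1 n$ for $B\in\mathscr{G}_n$, is false in the unbounded-type case, and this is precisely the difficulty the theorem is about. The passage from a type-$I$ band of order $k$ to its unique type-$II$ child of order $k+1$ contracts lengths by a factor comparable to $(2V)^{-(a_{k+1}-1)}$ (Lemma \ref{quo-deri}, case $e=e_{12}$), so that \emph{single} level already contributes about $(a_{k+1}-1)/V$ to $\frac{d}{dV}\log|B(V)|$; summing over levels gives a bound of order $\sum_{k\le n}a_k$, not $C_1n$, and $\frac1n\sum_{k\le n}a_k$ is unbounded when $\{a_k\}$ is unbounded. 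Your closing remark that one should control each elementary trace-map step ``uniformly in the digits'' does not rescue the bound $C_1n$: there are $\asymp a_{k+1}$ elementary steps inside that one level, so uniform per-step control still only yields $C_1\sum_{k\le n}a_k$. With the mismatched pair ``decay $e^{-c_1n}$ versus variation $e^{C_1|V-V'|\sum a_k}$'' your comparison in the last display produces a Lipschitz constant proportional to $\sup_n\frac1n\sum_{k\le n}a_k$, which is infinite.

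The repair --- and this is what the paper actually does --- is to index \emph{both} estimates by the length $m_w$ of the modified ladder of $B_w$ (Section \ref{ladder}), where $n/2\le m_w\le a_1+\cdots+a_n$ by \eqref{modibd}. The proof of Lemma \ref{lem-bc} gives $|B_w|\le 4\cdot 4^{-m_w}$, i.e.\ exponential decay at the scale $m_w$, and Proposition \ref{diff}(iii) gives $|B_w(V)|/|B_w(V')|\le C_1\exp\bigl(C_2(V+V')+C_3\,m_w V|V-V'|\bigr)$, i.e.\ the $V$-variation of $\log|B_w|$ is $O(m_wV|V-V'|)$ at the \emph{same} scale. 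Since both exponents now carry the same band-dependent factor $m_w$, your band-by-band comparison closes verbatim and yields $|s_n(V)-s_n(V')|\lesssim V|V-V'|$ (the paper runs this for $d=\limsup_n|s_n-\tilde s_n|$ rather than for each $n$, which is all that is needed). Two smaller remarks: the constant $C_2(V+V')$ coming from bounded variation is harmless because it is independent of $n$ and is absorbed for large $n$; and establishing the $m_w$-linear variation bound is not ``soft'' --- it is the content of Propositions \ref{lip} and \ref{diff}, which require matching the two modified ladders of $\Sigma_{\alpha,V}$ and $\Sigma_{\alpha,V'}$ rung by rung and propagating the estimate \eqref{prop2} along the ladder.
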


\begin{rem}
{\rm
 1) Formula \eqref{dim-box}  is  the box dimension counterpart of \eqref{dim-hausdorff}, and the formulas \eqref{dim-hausdorff} and \eqref{dim-box} give the sufficient and necessary conditions such that Hausdorff dimensions and box dimension are strictly less than 1 and positive.

 2) In general we can not expect $s_\ast(V)=s^\ast(V).$ The simplest example is  as follows: take $\alpha=[0;a_1,a_2,\cdots]$ such that
$$
K_\ast(\alpha)=1 \ \ \ \text{ and }\ \ \  K^\ast(\alpha) =\infty.
$$
Then by \eqref{dim-hausdorff} and \eqref{formula-dim} we have $s_\ast(V)<1$,  by \eqref{dim-box} and \eqref{formula-dim} we have $s^\ast(V)=1.$

3) Formula \eqref{asym-general} is a complete generalization of \eqref{asym-Fibo}.

4) We know that in  the Fibonacci case, the dimension function $d(V)$ is real analytic (\cite{C,DG}). For the Sturm case, we can not expect such strong regularity. However by Theorem \ref{lip-conti}, both Hausdorff and Box dimension functions are still Lipschitz continuous, which will be obtained essentially from the formula \eqref{formula-dim}.
}
\end{rem}

We will compare the present work with some previous works \cite{LW, LPW07, FLW} to explain  the main difficulties we will meet
and indicate some new ideas and techniques we will introduce.

The main idea in \cite{LW} is essentially introducing a natural covering structure by construct spectral generating bands,
and estimate the length of spectral generating bands by computing one-order derivative of spectral generating polynomial.
The key points in \cite{FLW} consists of, on the one hand,  generalizing the  Cookie-Cutter-like structure  introduced by Ma, Rao and Wen \cite{MRW} and developing some related
techniques for establishing the Gibbs like measure; and on the other hand, giving a more exact formula for the derivative of spectral generating polynomial,
and estimating of the two-order derivative.

But if $\{a_k\}_{k\ge 1}$ is  unbounded, these techniques and methods are not enough. To see this,  we recall first  the definition of Cookie-Cutter set. Taking $I=[0,1]$, $I_0,I_1\subset I$ be two disjoint
subintervals of $I$, let $f:I_0\cup I_1\rightarrow I$ satisfies

(C-i)\ \ $f|_{I_0}$, $f|_{I_1}$ are $1-1$ mappings onto $I$;

(C-ii)\ \ $C^{1+\gamma}$ H\"older($\gamma>0$), i.e., $\exists c>0$,
$$
|f'(x)-f'(y)|\le c |x-y|^{\gamma},\quad \forall x,y\in I_0\cup I_1;
$$

(C-iii)\ \ expansion, i.e. there exist $B>b>1$, for any $x\in I_0\cup
I_1$,
$$1<b\le|f'(x)|\le B<\infty.$$
We call $f$ a Cookie-Cutter map. The hyperbolic attractor of $f$ is defined as
\begin{equation}\label{CC}
E:=\{x\in\mathbb{R}\ |\ \forall k\ge0, \
f^k(x)\in[0,1]\}.
\end{equation}
 $E$ is called the Cookie-Cutter set associated with the Cookie-Cutter map $f$.

Let
$\phi_0=(f|_{I_0})^{-1}$,$\phi_1=(f|_{I_1})^{-1}$ and
$\Sigma=\{0,1\}$. For any $k\ge1$, $\sigma=i_1 i_2\cdots
i_k\in\Sigma^k$, define
$I_\sigma=\phi_{i_1}\circ\phi_{i_2}\circ\cdots\circ\phi_{i_k}(I)$,
then $f^k(I_\sigma)=I$ and
$E=\bigcap_{k\ge1}\bigcup_{\sigma\in\Sigma^k}I_\sigma$.

As in \cite{Fa}, the system
satisfies the principle of bounded variation, i.e., there exists
$\xi\ge1$ such that, for any $k\ge1$,  $\sigma\in\Sigma^k$, and
any $x,y\in I_\sigma$,
$$|(f^k)'(x)/(f^k)'(y)|\le\xi;$$  and the system also satisfies  the principle of bounded distortion,
i.e. for any $x\in I_\sigma$,
$$\xi^{-1}\le|(f^k)'(x)|\,|I_\sigma|\le\xi.$$
Notice that  by the chain rule, we have
\begin{equation}\label{comp}
(f^k)'(x)=f'(f^{k-1}(x))f'(f^{k-2}(x))\cdots f'(x).
\end{equation}
By these two principles, we see that the length of the interval $I_\sigma$ could be
estimated by the derivative of $f^k$ at any point of $I_\sigma$.

Moreover, Ma, Rao and Wen \cite{MRW} showed that the system also satisfies
 the principle of bounded covariation, i.e., for any $m>k>0$,
 $\sigma_1,\sigma_2\in\Sigma^k$, and $\tau\in\Sigma^{m-k}$,
$$\frac{|I_{\sigma_1*\tau|}}{|I_{\sigma_1}|}\le\xi^2
\frac{|I_{\sigma_2*\tau|}}{|I_{\sigma_2}|}.$$

With these principles, one can prove  the existence of the Gibbs measure, i.e., for any $0<\beta<1$,
there exists probability measure $\mu_\beta$ such that, for any
$k>0$ and $\sigma\in\Sigma^k$,
$$\xi^{-2}\frac{|I_\sigma|^\beta}{\sum_{\tau\in\Sigma^k}|I_\tau|^\beta}\le
\mu_\beta(I_\sigma)
\le\xi^2\frac{|I_\sigma|^\beta}{\sum_{\tau\in\Sigma^k}|I_\tau|^\beta}.$$
These measures are crucial for analyzing  fractal dimensions of the attractors, such as  formulas for Hausdorff dimension, box dimension and
continuous dependence  of dimensions with respect to  $f$.

Now we turn to the Cookie-Cutter-like set introduced by Ma, Rao and Wen (\cite{MRW}) which generalizes the classical
Cookie-Cutter set:
\begin{equation}\label{CClike}
E=\{x\in\mathbb{R}\ |\ \forall k\ge0, f_k\circ f_{k-1}\circ\cdots\circ f_1(x)\in[0,1]\},
\end{equation}
where for any $k\ge1$, $f_k$ satisfies

(U-i)\ \ $\exists I_j^k\subset I=[0,1]$, $j=1,2,\cdots,m_k$, mutually disjoint, such that
$f_k|_{I_j^k}$ are $1-1$ mappings onto $I$;

(U-ii)\ \ $C^{1+\gamma}$ H\"older($\gamma>0$), i.e., $\exists c_k>0$,
$$
|f_k'(x)-f_k'(y)|\le c_k |x-y|^{\gamma},\quad \forall x,y\in \bigcup_j I_j^k;
$$

(U-iii)\ \ expansion, i.e. there exist $B_k>b_k>1$, for any $x\in \bigcup_j I_j^k$,
$$1<b_k\le|f_k'(x)|\le B_k<\infty.$$

Comparing with \eqref{CC}, we see that the $k$-th iteration of
the same mapping $f$ is replaced by composition of $k$ different
mappings in \eqref{CClike}.

Under the conditions of uniformly H\"older
and uniformly bounded expansion, i.e.,
\begin{equation}\label{bounded}
\sup c_k<\infty,\quad 1<\inf b_k\le \sup B_k<\infty,
\end{equation}
 the principles of bounded variation, bounded distortion, bounded covariation
and the existence of Gibbs like measure were proven in \cite{MRW}. They gave  formulas for the dimensions and showed the  continuous dependence  of
dimensions with respect to  $\{f_k\}_{k\ge1}$.

In \cite{FLW}, to study the dimensional property of spectrum with bounded type,
they apply the technique of Cookie-Cutter-like set in the following way.
For every spectral generating band $B$, there is a generating polynomial
$h_{B}$ such that $h_{B}$ is monotone on $B$ and $h_B(B)=[-2,2]$.
They estimated the length of $B$ by help of $h_B$. Suppose $(B_k)_{k=0}^n$
is a sequence of spectral generating bands of order from $0$ to $n$ with
$$B_n\subset B_{n-1}\subset \cdots B_0,$$
and suppose their corresponding generating polynomials are
$(h_i)_{i=0}^n$. Noting that $h_0'=1$, and
$$h_n'=\frac{h_n'}{h_{n-1}'}\frac{h_{n-1}'}{h_{n-2}'}\cdots\frac{h_{1}'}{h_{0}'}.$$
Comparing with \eqref{comp}, they analyze ${h'_{k+1}}/{h'_{k}}$
in stead  of analyzing $f'(f^k(x))$.
Analogous to condition (U-iii), they proved
\begin{equation}\label{uii}
4<{h_{k}'(x)}/{h_{k-1}'(x)}<B_k
\end{equation}
And instead of H\"older condition (U-ii), they proved (see also \eqref{prop2})
\begin{equation}\label{holder}
\begin{array}{rcl}
\left|\frac{h'_{k+1}(x)}{h'_{k}(x)}-
\frac{h'_{k+1}(y)}{h'_{k}(y)}\right|
&\le& t_k(|h_{k}(x)-h_{k}(y)|+\frac{d_k}{6}|h_{k-1}(x)-h_{k-1}(y)|)\\
&&+\frac{1}{6e_k}
\left|\frac{h'_{k}(x)}{h'_{k-1}(x)}-\frac{h'_{k}(y)}{h'_{k-1}(y)}\right|.
\end{array}\end{equation}

Notice that all parameters $B_k, t_k, d_k, e_k$ in \eqref{uii} and \eqref{holder}
depend on $a_k$. If $\{a_k\}$ is bounded, $B_k, t_k, d_k, e_k$ are also bounded,
thus they can apply techniques of \cite{MRW} directly.

But if the sequence $\{a_n\}$ is unbounded, then
$\sup_k t_k=\infty,\ \sup_k B_k=\infty$.
Return to the Cookie-Cutter case, comparing with \eqref{bounded},
this is equivalent to
$$\sup_k c_k=\infty,\quad \sup_k B_k=\infty,$$
i.e., neither  uniformly H\"older nor uniformly
bounded expansion.
By carefully  analyzing the relation between $c_k$ and $B_k$,
we find that the conclusion of \cite{MRW} still holds
if we relax the condition \eqref{bounded} to
require that for some constant $C>0$,
$$c_k\le C\cdot\inf_{x\in\bigcup_j I_j^k} |f_k'(x)|,\quad\forall k>0.$$
By this way, we could overcome the difficulty $c_k$ and $B_k$ not  bounded.

This technique can be accommodated to our case to  show the principles of bounded variation, distortion and covariation for the spectrum,
by making more accurate estimations for
$t_k,d_k,e_k$ and $h_k(x)-h_k(y)$ in \eqref{holder}.

It is more tricky to construct a  Gibbs like measure, since
when  $\{a_k\}_{k\ge 1}$ is unbounded,
we can not distribute mass evenly on different types of spectral generating bands.
However, with  much effort,  we can still construct a weak type Gibbs like measure
which plays the  same role  as Gibbs like measure.

Finally, in applying mass distribution principle to get a good lower bound for Hausdorff dimension, we will meet the following difficulty: for a spectral generating band $B$
of order $k$ and  type $III$,
it contains $a_k$ spectral generating bands (denote as $B_l$ for $1\le l\le a_k$)
of order $k+1$ and  type $I$ with contraction ratios
$$|B_l|/|B|\sim a_k^{-1}\sin^2 \frac{l\pi}{a_k+1}, \quad l=1,2,\cdots, a_k.$$
The contraction ratios vary from  $a_k^{-1}$ to $a_k^{-3}$,
so the weak Gibbs like measure fails to give desired
estimation.

To overcome this difficulty, we introduce
a truncation technique.
For any small $\varepsilon>0$, we delete the intervals $B_l$ with
$$0<l/(a_k+1)<\varepsilon\quad \mbox{or}\quad 1-\varepsilon<l/(a_k+1)<1.$$
So the remaining intervals satisfy
$$|B_l|/|B|\gtrsim\varepsilon^2 a_k^{-1}.$$
Denote the remaining set by $E_\varepsilon$.
Now we can apply weak Gibbs like measure supported on $E_\varepsilon$
to get lower bound of Hausdorff dimension for $E_\varepsilon$
(here we use idea from \cite{FWW}),
and then obtain the desired lower bound for $E$ as $\varepsilon$ tends to $0$.

\medskip

The plan of the paper is  as follows.
In Section \ref{structure}, we will study the structure of the spectrum,
especially  we will give a coding for the spectrum. In Section \ref{prepare},
we state some results which  we need to prove the main Theorems.
Section \ref{thm1} and Section \ref{thm2} are devoted to the proofs of Theorem \ref{fracdim}
and Theorem \ref{limfor} respectively. The proof of Theorem \ref{lip-conti} will be postponed to Section \ref{bdcov}
since the proof of which need a technique lemma.
The rest sections are devoted to the proofs of the results stated in section \ref{prepare}.

\section{The  structure and coding of the spectrum }\label{structure}

We describe  the structure of the spectrum
$\Sigma=\Sigma_{\alpha,V}$ for some fixed $\alpha$ and $V$. We will see that $\Sigma$  has a natural covering structure which can be associated with a natural coding.

Let $\alpha=[0;a_1,a_2,\cdots]\in(0,1)$ be  irrational,
let $p_k/q_k$$(k>0)$ be the $k$-th partial quotient  of $\alpha$
given by:
$$\begin{array}{l}
p_{-1}=1,\quad p_0=0,\quad p_{k+1}=a_{k+1} p_k+p_{k-1},\ k\ge 0,\\
q_{-1}=0,\quad q_0=1,\quad q_{k+1}=a_{k+1} q_k+q_{k-1},\ k\ge 0.
\end{array}$$

\smallskip

\noindent Let $k\geq1$ and $x\in\mathbb{R}$, the transfer matrix $M_k(x)$
over $q_k$ sites is defined by
$${\mathbf M}_k(x):=
\left[\begin{array}{cc}x-v_{q_k}&-1\\ 1&0\end{array}\right]
\left[\begin{array}{cc}x-v_{q_k-1}&-1\\ 1&0\end{array}\right]
\cdots
\left[\begin{array}{cc}x-v_1&-1\\ 1&0\end{array}\right],$$
where $v_n$ is defined in \eqref{sturm}. By convention we  take
$$\begin{array}{l}
{\mathbf M}_{-1}(x)= \left[\begin{array}{cc}1&-V\\
0&1\end{array}\right],\quad {\mathbf M}_{0}(x)=
\left[\begin{array}{cc}x&-1\\ 1&0\end{array}\right].
\end{array}$$

\smallskip

For $k\ge0$, $p\ge-1$, let $t_{(k,p)}(x)=\tr {\mathbf M}_{k-1}(x) {\mathbf M}_k^p(x)$ and
$$
\sigma_{(k,p)}=\{x\in\mathbb{R}:|t_{(k,p)}(x)|\leq2\}
$$ where  $\tr M$
stands for the trace of the matrix $M$.

With these notations, we
collect some known facts that will be used later, for more
details, we refer to \cite{BIST,R,Su,T}.
\begin{itemize}
\item[(A)]\ Renormalization relation.
For any $k\ge0$
$${\mathbf M}_{k+1}(x)={\mathbf M}_{k-1}(x)({\mathbf
M}_k(x))^{a_{k+1}},
$$
so, $t_{(k+1,0)}=t_{(k-1,a_{k})}$, $t_{(k,-1)}=t_{(k-1,a_k-1)}$.

\item[(B)]\ Structure of $\sigma_{(k,p)}(k\ge0,p\ge-1)$.
For $V>0$, $\sigma_{(k,p)}$ is made of $\deg t_{(k,p)}$ disjoint
closed intervals.

\item[(C)]\ Trace relation.
Defining $\Lambda(x,y,z)=x^2+y^2+z^2-xyz-4$, then
$$\Lambda(t_{(k+1,0)},t_{(k,p)},t_{(k,p+1)})=V^2.
$$
Thus for any $k\in\mathbb{N}$, $p\geq 0$ and $V>4$,
\begin{equation}\label{empty}
\sigma_{(k+1,0)}\cap \sigma_{(k,p)}\cap\sigma_{(k,p-1)}=\emptyset.
\end{equation}

\item[(D)]\ Covering property.
For any $k\ge0$, $p\ge-1$,
\begin{equation}\label{tao}
\sigma_{(k,p+1)}\subset \sigma_{(k+1,0)}\cup \sigma_{(k,p)},
\end{equation}
then
$$(\sigma_{(k+2,0)}\cup\sigma_{(k+1,0)})\subset
(\sigma_{(k+1,0)}\cup\sigma_{(k,0)}).$$
Moreover
$$\Sigma=\bigcap_{k\ge0}(\sigma_{(k+1,0)}\cup\sigma_{(k,0)}).$$
\end{itemize}

The intervals of $\sigma_{(k,p)}$ will be called the {\em bands},
when we discuss only one of these bands, it is often denoted as
$B_{(k,p)}$. Property (B) also implies $t_{(k,p)}(x)$ is monotone on
$B_{(k,p)}$, and
$$t_{(k,p)}(B_{(k,p)})=[-2,2],$$
we call $t_{(k,p)}$ the {\em generating polynomial} of $B_{(k,p)}$.

$\{\sigma_{(k+1,0)}\cup\sigma_{(k,0)}:k\ge 0\}$ form a covering of $\Sigma$.
However there are some repetitions between $\sigma_{(k,0)}\cup\sigma_{(k-1,0)}$
and $\sigma_{(k+1,0)}\cup\sigma_{(k,0)}$.
It is possible to choose a coverings of $\Sigma$ elaborately such that
we can get rid of these repetitions, as we will describe in the follows:

\begin{defi}{\rm (\cite{R,LW})}\label{def1}
For $V>4$, $k\ge0$, we define three types of bands as follows:

$(k,{\rm I})$-type band: a band of $\sigma_{(k,1)}$ contained in a
band of $\sigma_{(k,0)}$;

$(k,{\rm II})$-type band: a band of $\sigma_{(k+1,0)}$ contained
in a band of $\sigma_{(k,-1)}$;

$(k,{\rm III})$-type band: a band of $\sigma_{(k+1,0)}$ contained
in a band of $\sigma_{(k,0)}$.
\end{defi}

By the property (B),  \eqref{empty} and \eqref{tao}, all three kinds
of types of bands are well defined, and we call these bands {\em
spectral generating bands of order $k$} (the type I band is called
the type I gap in \cite{R}). Note that for order $0$, there is only
one $(0,{\rm I})$-type band $\sigma_{(0,1)}=[V-2,V+2]$ (the
corresponding generating polynomial is $t_{(0,1)}=x-V$), and only
one $(0,{\rm III})$ type band $\sigma_{(1,0)}=[-2,2]$ (the
corresponding generating polynomial is $t_{(1,0)}=x$). They are
contained in $\sigma_{(0,0)}=(-\infty,+\infty)$ with corresponding
generating polynomial $t_{(0,0)}\equiv2$. For the convenience, we
call $\sigma_{(0,0)}$ the spectral generating band of order $-1$.

\smallskip

For any $k\ge-1$, denote by $\mathscr{G}_k$ the set of all spectral
generating bands of order $k$, then the intervals in $\mathscr{G}_k$ are disjoint.
Moreover (\cite{LW,FLW})
\begin{itemize}
\item $(\sigma_{(k+2,0)}\cup\sigma_{(k+1,0)})\subset
\bigcup_{B\in\mathscr{G}_k}B
\subset (\sigma_{(k+1,0)}\cup\sigma_{(k,0)})$,
thus
\begin{equation}\label{struc-spec}
\Sigma=\bigcap_{k\ge0}
\bigcup_{B\in\mathscr{G}_k}B;
\end{equation}
\item
any $(k,I)$-type band contains only one band in $\mathscr{G}_{k+1}$, which is of  $(k+1,II)$-type.
\item
any $(k,II)$-type band contains $2a_{k+1}+1$ bands in $\mathscr{G}_{k+1}$,
$a_{k+1}+1$ of which are of  $(k+1,I)$-type and $a_{k+1}$ of which are of  $(k+1,III)$-type.
\item
any $(k,III)$-type band contains $2a_{k+1}-1$ bands in $\mathscr{G}_{k+1}$,
$a_{k+1}$ of which are of  $(k+1,I)$-type and $a_{k+1}-1$ of which are of  $(k+1,III)$-type.
\end{itemize}

Thus $\{\mathscr{G}_k\}_{k\ge0}$ forms a natural
covering(\cite{LW05,LPW07}) of the spectrum $\Sigma$. For any $k\ge1$, let $s_k$ be the unique real number in $[0,1]$ satisfies
$$\sum_{B\in\mathscr{G}_k}|B|^{s_k}=1,$$
and define the pre-dimensions of $\Sigma $ by
\begin{equation}\label{predim}
s_*(V)=\liminf_{k\rightarrow\infty}s_k,\quad
s^*(V)=\limsup_{k\rightarrow\infty}s_k.
\end{equation}

In the following we will give a coding for $\Sigma$.
Let
$$
\mathcal E:=\{(I,II),(II,I),(II,III),(III,I),(III,III)\}
$$
be the admissible edges.
To simplify the notation, we write
$$
e_{12}=(I,II), e_{21}=(II,I),e_{23}=(II,III), e_{31}=(III,I), e_{33}=(III,III).
$$
For each $n\in\N$
define
$$
\tau_{e}(n)=
\begin{cases}
1& e=e_{12}\\
n+1& e=e_{21}\\
n& e=e_{23}\\
n& e=e_{31}\\
n-1& e=e_{33}.
\end{cases}
$$
Then define
\begin{eqnarray*}
\mathscr{E}_n&=&\{ (e,\tau_e(n),l)\ :\ e\in\mathcal E,\ 1\le l\le \tau_e(n)\}\\
\mathscr{E}_n^\ast&=&\{ (e,\tau_e(n),l)\in\mathscr{E}_n\ :\  e\ne e_{21},e_{23} \}.
\end{eqnarray*}

For any $w=(e,\tau_e(n),l)\in \mathscr{E}_n$, we use the notation $e_w:=e.$

For any $n,n^\prime\in\N$ and any
$(e,\tau_e(n),l)\in \mathscr{E}_n$ and $(e^\prime,\tau_e^\prime(n^\prime),
l^\prime)\in \mathscr{E}_{n^\prime}$ we say $(e,\tau_e(n),l)
(e^\prime,\tau_e^\prime(n^\prime),l^\prime)$ is {\it  admissible }
if the end point of $e$ is the initial point of $e^\prime.$
We denote it by $(e,\tau_e(n),l)\to(e^\prime,\tau_e^\prime(n^\prime),l^\prime)$.

 Define
$$
\Omega=\{\omega\in\mathscr{E}_{a_1}^\ast\times\prod_{k=2}^\infty
\mathscr{E}_{a_k}: \omega=\omega_1\omega_2\cdots \text{ s.t. }
\omega_{k}\to\omega_{k+1} \text{ for all } k\ge 1 \}.
$$
Define $\Omega_1=\mathscr{E}_{a_1}^\ast$ and for $n\ge 2,$ define
$$
\Omega_n=\{w\in\mathscr{E}_{a_1}^\ast\times\prod_{k=2}^n\mathscr{E}_{a_k}:
w=w_1\cdots w_n\text{ s.t. } w_{k}\to w_{k+1} \text{   for all } 1\le k<n \}.
$$
Define finally $\Omega_\ast=\bigcup_{n\ge 1}\Omega_n$.

Given any $w\in\Omega_n$, $1\le k<n$, we write $w=u*v$ or $w=uv$,
where $u=w_1\cdots w_k$, $v=w_{k+1}\cdots w_n$.

Given any $w\in\Omega_n$, define $B_w$ inductively as follows:
Let $B_{I}=[V-2,V+2]$ be the unique $(0,I)$-type band in
$\mathscr{G}_0$ and let $B_{III}=[-2,2]$ be the unique $(0,III)$-type band in $\mathscr{G}_0$.

Given $w\in\Omega_1$. If  $w=(e_{12},1,1),$
 then define $B_w$ to be the unique $(1,II)$-type band contained in  $B_I.$
 If $w=(e_{31},\tau_{e_{31}}(a_1),l)$, then  define $B_w$ to
 be the unique $l$-th $(1,I)$-type band contained in  $B_{III}.$
 If $w=(e_{33},\tau_{e_{33}}(a_1),l)$, then  define $B_w$
 to be the unique $l$-th $(1,III)$-type band contained in  $B_{III},$
 where we order the bands of the same type from left to right.

If $B_w$ has been defined for any $w\in \Omega_{n-1}$.
Given $w\in \Omega_n$ and write $w=w^\prime\ast(e,\tau_e(a_n),l)$,
then $w^\prime\in\Omega_{n-1}$. If $e=(T,T^\prime),$
define $B_w$ to be the unique $l$-th $(n,T^\prime)$-type band inside $B_{w^\prime}.$

With these notations we can rewrite \eqref{struc-spec} as
$$
\Sigma=\bigcap_{n\ge 0}\bigcup_{w\in\Omega_n} B_w.
$$

Given $w\in\Omega_k$, we say $w$ has length $k$ and denote by $|w|=k$.
If $B_w$ is of $(k,T)$ type, sometimes we also say simply that $B_w$ has type $T.$


\section{Variation, covariation and Gibbs like measure}\label{prepare}

In this section, we will present three  properties related to  the spectrum,
that is, bounded variation; bounded covariation and the existence of Gibbs like measures.
These properties play essential roles in the proof of the main theorems of the paper. We fix $V>0$ and $\alpha\in(0,1)$ irrational with continued fraction expansion $[0;a_1,a_2,\cdots]$.
Since now  $(a_k)_{k\ge 1}$ can be unbounded,
the proofs of these properties are much more difficult than \cite{FLW}, and we put the proofs to the sections \ref{bdvar}, \ref{bdcov} and \ref{gibbs-meas}.

We also collect several other basic properties which will be used in the
proofs of the main theorems.

\subsection{Bounded variation}\

\begin{theo}[Bounded variation]\label{bvar}
Let $V\ge20$ and $\alpha$ be irrational. Then there exists a constant
$\xi=\exp\left(180V\right)>1$  such that for any spectral
generating band $B$ of $\Sigma_{\alpha,V}$ with generating
polynomial $h$,
$$\xi^{-1}\le \left|\frac{h'(x_1)}{h'(x_2)}\right|\le \xi,\quad  \forall x_1,x_2\in B.$$

\end{theo}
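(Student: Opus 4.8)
The plan is to reduce the statement to a uniform estimate on the logarithmic derivative of the generating polynomial along the renormalization recursion, and then propagate it through the inductive tower of spectral generating bands. Concretely, for a band $B=B_w$ of order $k$ with generating polynomial $h=t_{(k,p)}$ (equivalently, one of the traces $t_{(k+1,0)}$, $t_{(k,1)}$, $t_{(k,-1)}$ according to the type), I would write $h$ as a product built out of the previous-order traces via the renormalization relation (A) and the trace relation (C): starting from $\sigma_{(0,0)}$, $\sigma_{(0,1)}$, $\sigma_{(1,0)}$ with $h'_{-1}=0$, $h'_0=1$, each step multiplies by a factor $h'_{k+1}/h'_k$, so that for the band $B$ of order $k$ sitting in a chain $B=B_k\subset B_{k-1}\subset\cdots\subset B_0$ one has
$$
h'(x)=\prod_{j=1}^{k}\frac{h'_j(x)}{h'_{j-1}(x)}.
$$
Then $|h'(x_1)/h'(x_2)|\le\prod_{j}|(h'_j/h'_{j-1})(x_1)|\,/\,|(h'_j/h'_{j-1})(x_2)|$, and it suffices to bound each ratio factor and show the product of the bounds converges to something like $\exp(180V)$.

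The first key step is therefore a one-step estimate: on a band $B$ of order $k$, control $h'_{k}/h'_{k-1}$ both from below (expansion, giving a factor $>4$ as in \eqref{uii}, which I would re-derive here from the trace relation $\Lambda(t_{(k+1,0)},t_{(k,p)},t_{(k,p+1)})=V^2$ together with the fact that on a band the relevant trace ranges over $[-2,2]$ while the "transverse" trace is forced to be large, of size comparable to $V$) and via a Hölder-type/Lipschitz estimate controlling its oscillation on $B$, of the form \eqref{holder}. The second key step is to convert the oscillation estimate \eqref{holder} into a bound on $\bigl|\log|h'(x_1)|-\log|h'(x_2)|\bigr|$ by summing a telescoping/geometric series: the crucial structural point, exactly as flagged in the introduction, is that the "bad" coefficients $t_k,d_k,B_k$ grow with $a_k$, but they are dominated by $\inf_{B}|h'_k/h'_{k-1}|$ (which also grows with $a_k$), so the contributions $c_k/b_k$ to the sum stay summable with a uniform geometric bound. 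That is, I would show $\bigl|\log|h'_j(x_1)/h'_{j-1}(x_1)| - \log|h'_j(x_2)/h'_{j-1}(x_2)|\bigr|\le C\,\rho^{\,k-j}$ for $x_1,x_2\in B$ with some $\rho<1$ independent of everything, using that $h_j(x_1)-h_j(x_2)$ shrinks geometrically as $j$ decreases (because $h_j$ is $4$-expanding from band to band and $h_k(x_1),h_k(x_2)\in[-2,2]$), and summing over $j=1,\dots,k$ gives the constant $\sum_j C\rho^{k-j}\le C/(1-\rho)$, which I would track explicitly to land at $\log\xi=180V$.

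The main obstacle is the oscillation/distortion estimate in the regime where $a_k$ is large: when $\{a_k\}$ is unbounded the naive bounds blow up, and one must exploit the precise cancellation between the growth of the Lipschitz constant of $h'_k/h'_{k-1}$ and the growth of its infimum. This requires the sharp two-sided control of $h'_k/h'_{k-1}$ on type $I$, $II$, $III$ bands separately, including the delicate case of a type $III$ band of order $k-1$ which contains $a_k$ type $I$ sub-bands whose contraction ratios range over the full window from $a_k^{-1}$ to $a_k^{-3}$ — here the per-band estimate must be uniform in $l$, the index of the sub-band, and that uniformity is what makes the telescoping sum converge. A secondary technical point is bookkeeping the base cases ($k=0,1$, where $h_{-1}'=0$ forces a separate initial estimate) and confirming that the hypothesis $V\ge 20$ is exactly what makes the transverse trace large enough (of size $\gtrsim V-$const) for the $4$-expansion and for the geometric ratio $\rho$ to be bounded away from $1$.
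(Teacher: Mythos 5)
Your plan is essentially the paper's proof: the paper telescopes $\ln|h'|$ along a ladder of nested bands, establishes exactly the one-step oscillation recursion you describe (Proposition \ref{lip}, inequality \eqref{prop2}), verifies the geometric decay $|\hat{h}_i(x_1)-\hat{h}_i(x_2)|\le 4q_i^{-1}\sin^2\theta_i\,\lambda^{i-m}$ with $\lambda=(V-8)/3$ by integrating $\hat{h}_i'$ against $\hat{h}_{i+1}'$, and sums, with the prefactor $q_i^{-1}\sin^2\theta_i$ coming from the lower bound on $|\hat{h}_{i+1}'/\hat{h}_i'|$ absorbing the large coefficients exactly as you anticipate. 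The one device you omit is that the telescoping must run along the \emph{modified} ladder, in which a type-I-to-type-II step with $a_{i+1}>2$ is split into $a_{i+1}-1$ Chebyshev substeps (and collapsed when $a_{i+1}=1$); without this the single-step oscillation bound for $e_{12}$ edges is not uniform in $a_{i+1}$.
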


\begin{cor}[Bounded distortion]\label{bdist}
Let $V\ge20$ and $\alpha$ be irrational. Then there exists a constant
$\xi=4\exp\left( 180V\right)>1$  such that for any spectral
generating band $B$
 of $\Sigma_{\alpha,V}$ with generating polynomial $h$,
$$\xi^{-1}\le|h'(x)|\cdot|B|\le \xi,\quad \forall x\in B.$$
\end{cor}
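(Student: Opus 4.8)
The final statement to prove is Corollary~\ref{bdist} (Bounded distortion), which should follow quickly from Theorem~\ref{bvar} (Bounded variation) together with the basic structural fact that $h(B)=[-2,2]$ for a generating polynomial $h$ of a spectral generating band $B$.

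\medskip

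\noindent\textbf{Proof proposal.}
The plan is to derive the pointwise estimate $|h'(x)|\cdot|B|\asymp 1$ from the already-established fact that $|h'|$ varies by at most a factor $\xi_0:=\exp(180V)$ across $B$ (Theorem~\ref{bvar}), combined with the normalization $h(B)=[-2,2]$, so that $\int_B |h'(t)|\,dt = |h(B)| = 4$ (here I use that $h$ is monotone on $B$, which is recorded right after property (B) in Section~\ref{structure}). First I would fix a band $B$ with generating polynomial $h$ and fix $x\in B$. By Theorem~\ref{bvar}, for every $t\in B$ we have $\xi_0^{-1}|h'(x)|\le |h'(t)|\le \xi_0 |h'(x)|$. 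Integrating these inequalities over $B$ and using $\int_B|h'(t)|\,dt=4$ gives
\begin{equation*}
\xi_0^{-1}|h'(x)|\cdot|B|\le 4\le \xi_0 |h'(x)|\cdot|B|,
\end{equation*}
that is, $\frac{4}{\xi_0}\le |h'(x)|\cdot|B|\le 4\xi_0$. Setting $\xi:=4\exp(180V)$ then yields $\xi^{-1}\le |h'(x)|\cdot|B|\le\xi$ for all $x\in B$, since $4/\xi_0 = (4\xi_0)^{-1}\cdot 16\ge (4\xi_0)^{-1}$ is too crude — more precisely one simply notes $4/\xi_0\ge 1/(4\xi_0)=\xi^{-1}$ and $4\xi_0=\xi$, which closes the two-sided bound.

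\medskip

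There are essentially no obstacles here: the only things being invoked are (i) monotonicity of $h$ on $B$ and the identity $h(B)=[-2,2]$, both stated in Section~\ref{structure}, and (ii) Theorem~\ref{bvar}. The one point that deserves a line of care is the passage from the trace-polynomial identity $|h(B)|=4$ to $\int_B|h'|\,dt=4$: since $h$ is a real polynomial that is monotone on the interval $B$ and maps $B$ onto $[-2,2]$, the change of variables $u=h(t)$ gives $\int_B |h'(t)|\,dt=\int_{-2}^{2}du=4$ directly. The constant $\xi=4\exp(180V)$ in the statement already anticipates exactly the loss of the multiplicative factor $4$ coming from $|h(B)|=4$ against the factor $\xi_0=\exp(180V)$ from bounded variation, so the bookkeeping matches the claimed constant with room to spare.

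\medskip

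If one instead wanted to avoid integration entirely, an alternative is the mean value theorem: there exists $\theta\in B$ with $|h'(\theta)|\cdot|B|=|h(B)|=4$, and then Theorem~\ref{bvar} applied to the pair $x,\theta$ gives $\xi_0^{-1}\cdot 4\le |h'(x)|\cdot|B|\le \xi_0\cdot 4$, the same conclusion. I would present whichever version is shortest in context; both are routine and the corollary is immediate once Theorem~\ref{bvar} is in hand.
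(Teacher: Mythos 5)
Your proposal is correct and, in its mean-value-theorem form at the end, is exactly the argument the paper gives: pick $x_0\in B$ with $|h'(x_0)|\,|B|=|h(B)|=4$ and apply Theorem~\ref{bvar} to compare $|h'(x)|$ with $|h'(x_0)|$. The integration variant is an equivalent rephrasing, and your constant bookkeeping matches the stated $\xi=4\exp(180V)$.
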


We will prove Theorem \ref{bvar} and Corollary \ref{bdist} in Section \ref{bdvar}.

\subsection{Bounded covariation}

\begin{theo}[Bounded covariation]\label{bco}
Let $V\ge 24$ and $\alpha$ be irrational. Then there exist absolute
constants $C_1,C_2>1$   such
that if $w,\widetilde w, wu,\widetilde wu\in\Omega_\ast$,
then, for $\eta=C_1\exp(2C_2V)$,
$$\eta^{-1} \frac{|{B}_{wu}|}{|{B}_{w}|}\le
\frac{|B_{\widetilde wu}|}{|B_{\widetilde w}|}\le \eta
\frac{|{B}_{wu}|}{|{B}_{w}|}.$$
\end{theo}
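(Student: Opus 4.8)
\textbf{Proof proposal for the bounded covariation (Theorem \ref{bco}).}

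The plan is to reduce the statement to two ingredients that are already available in the bounded-variation package: first, the comparison between $|B_w|$ and the reciprocal derivative $|h_w'|^{-1}$ of the generating polynomial $h_w$ on $B_w$ (Corollary \ref{bdist}), and second, a renormalization identity expressing $h_{wu}$ as a composition built from $h_w$ and the finitely many ``building block'' polynomials attached to the edges of $u$. Concretely, for $w=w_1\cdots w_n$ and $u=u_1\cdots u_m$ with $wu\in\Omega_\ast$, the band $B_{wu}$ sits inside $B_w$ and the generating polynomial satisfies a relation of the form $h_{wu} = P_{u}\circ h_w$ (on $B_{wu}$), where $P_u$ depends only on $u$ together with the letters $a_{n+1},\dots,a_{n+m}$ (through the traces $t_{(k,p)}$ and the trace/renormalization relations (A) and (C) quoted in Section \ref{structure}), but \emph{not} on $w$ itself beyond its endpoint type. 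Differentiating gives
$$
h_{wu}'(x) = P_u'(h_w(x))\, h_w'(x), \qquad x\in B_{wu},
$$
and since $h_w$ maps $B_{wu}$ into a subinterval $J_u\subset[-2,2]$ that again depends only on $u$ and the $a_k$'s (not on $w$), we get
$$
\frac{|B_{wu}|}{|B_w|} \;\asymp\; \frac{|h_{wu}'(\cdot)|^{-1}}{|h_w'(\cdot)|^{-1}} \;=\; \frac{1}{|P_u'(h_w(x))|},
$$
up to a multiplicative error controlled by $\xi$ from Theorem \ref{bvar} and Corollary \ref{bdist}.

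Next I would compare this ratio for two different prefixes $w$ and $\widetilde w$ (necessarily with the same endpoint type, otherwise $wu,\widetilde wu$ cannot both be admissible). Since $P_u$ is the same polynomial in both cases, the only discrepancy is that $h_w(x)$ and $h_{\widetilde w}(y)$ are different points, both lying in $J_u\subset[-2,2]$. Hence
$$
\frac{|B_{wu}|/|B_w|}{|B_{\widetilde wu}|/|B_{\widetilde w}|}
\;\asymp\; \frac{|P_u'(h_{\widetilde w}(y))|}{|P_u'(h_w(x))|},
$$
and the whole problem collapses to a bounded-variation statement for $P_u'$ on $J_u$: I need $|P_u'(s)/P_u'(t)|$ to be bounded by an absolute constant (times $\exp(CV)$) uniformly over $s,t\in J_u$ and over all admissible words $u$. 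But $P_u$ is itself, up to affine normalization, the generating polynomial of a spectral generating band (the band $B_{w^*u}$ for a suitably short prefix $w^*$ realizing the correct endpoint type), so this is precisely Theorem \ref{bvar} applied to that band — the constant $\xi=\exp(180V)$ is absolute in the required sense, and collecting the finitely many $\xi$-factors incurred along the way yields $\eta = C_1\exp(2C_2 V)$ with absolute $C_1,C_2$.

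The main obstacle is establishing the composition identity $h_{wu}=P_u\circ h_w$ cleanly and verifying that $P_u$ — equivalently the target interval $J_u=h_w(B_{wu})$ — genuinely depends only on $u$ and the tail $(a_{n+1},\dots,a_{n+m})$ and not on the internal structure of $w$. This requires carefully unwinding the inductive definition of $B_w$ together with the renormalization relation ${\mathbf M}_{k+1}={\mathbf M}_{k-1}{\mathbf M}_k^{a_{k+1}}$ and the trace relation $\Lambda(t_{(k+1,0)},t_{(k,p)},t_{(k,p+1)})=V^2$, so that the ``new'' trace polynomials over the block $u$ are expressed as fixed polynomial functions of the ``old'' trace $t_{(n,\cdot)}$ evaluated on $B_w$; one must check that the branch/type bookkeeping (which of the $2a_{k+1}\pm1$ sub-bands one descends into) is recorded entirely by $u$. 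A secondary technical point, genuinely new relative to \cite{FLW}, is that when $a_k$ is large the interval $J_u$ can be very short and $P_u'$ very large, so one must make sure the $\xi$-estimate of Theorem \ref{bvar} — which is stated uniformly over \emph{all} spectral generating bands, with an $a_k$-independent constant — is really being used, rather than any bound that degrades with $a_k$; this is exactly why the hypothesis $V\ge24$ (rather than $V\ge20$) and the more delicate estimates deferred to Section \ref{bdcov} are needed.
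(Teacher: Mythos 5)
There is a genuine gap, and it sits at the heart of your reduction: the composition identity $h_{wu}=P_u\circ h_w$ (with $P_u$ depending only on $u$ and the relevant $a_k$'s) does not hold. The renormalization here acts on transfer matrices, equivalently on \emph{triples} of traces, not on a single trace. Concretely, writing ${\mathbf M}_{k+1}={\mathbf M}_{k-1}{\mathbf M}_k^{a_{k+1}}$ and using Cayley--Hamilton, one finds for example $t_{(k+1,1)}=\tr({\mathbf M}_k{\mathbf M}_{k+1})$ is a polynomial in the three traces $t_{(k+1,0)},\,t_{(k,0)},\,t_{(k,1)}$, not a function of $t_{(k+1,0)}=h_w$ alone; this is exactly what the ladder recursion \eqref{ladder-i} encodes, where $\hat h_{i+1}(x)$ depends on \emph{both} $\hat h_i(x)$ and $\hat h_{i-1}(x)$ (and on a branch choice $z_\pm$ involving a square root, so it is not even polynomial in those two). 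Consequently there is no single-variable polynomial $P_u$ whose distortion you could control by Theorem \ref{bvar}, and the interval $J_u=h_w(B_{wu})$ is not determined by $u$ alone either: Proposition \ref{index} only places it inside the $w$-independent interval $I_{p,l}$, it does not pin it down. So the step ``the whole problem collapses to bounded variation of $P_u'$ on $J_u$'' has nothing to apply to.

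The paper's actual route addresses precisely this obstruction. Since $wu$ and $\widetilde w u$ descend through the same edge word $u$, the two modified ladders from $B_w$ to $B_{wu}$ and from $B_{\widetilde w}$ to $B_{\widetilde wu}$ share the same type and index sequences. One then takes points $x_1,x_2$ in the two deepest rungs with $\hat h_m(x_1)=\hat{\tilde h}_m(x_2)$ (possible because both generating polynomials map their bands onto $[-2,2]$), and Proposition \ref{diff}(i)--(ii) shows, via the three-term recursion \eqref{ladder-i} and the lower bound on $|S'_{p+1}|$ from Proposition \ref{keyLW}, that the discrepancies $\Delta_i=|\hat h_i(x_1)-\hat{\tilde h}_i(x_2)|$ obey $\Delta_i\lesssim q_i^{-1}\sin^2\theta_i\,\lambda^{-i}$, i.e.\ the two ladders are only \emph{approximately} conjugate, with geometrically controlled errors. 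Feeding this into Proposition \ref{lip} (the quantitative distortion estimate along ladders) and combining with Theorem \ref{bvar} and Corollary \ref{bdist} gives Proposition \ref{diff}(iii), of which Theorem \ref{bco} is an immediate consequence. Your instinct that ``the only discrepancy is that $h_w(x)$ and $h_{\widetilde w}(y)$ are different points in $[-2,2]$'' is in the right spirit, but to make it rigorous one must propagate that discrepancy through the full two-variable trace recursion level by level, which is exactly the content of Proposition \ref{diff} and cannot be bypassed by an exact conjugation.
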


\begin{cor} \label{C-n}
Let $V\ge 24$ and $\alpha=[0;a_1,a_2,\cdots]$ be irrational.
Write $\mathscr{N}=\{n: n=a_i \text{ for some } i\}.$
Then there exist absolute constants $C_1,C_2>1$ and sequence
$\{\zeta_n:0<\zeta_n\leq 1, n\in\mathscr{N}\}$ depending on $\alpha, V$ and $n$,
 such that for any $k\in\mathbb{N}$   if $a_{k+1}=n, w\in\Omega_k, u=(e_{12},1,1)$ and
$wu\in\Omega_{k+1}$ then
$$
\eta^{-1}\zeta_n\leq \frac{|B_{wu}|}{|B_{w}|}\leq \eta
\zeta_n
$$
with $\eta=C_1\exp(C_2V).$   Moreover
$\zeta_1$ can be taken as $1.$
\end{cor}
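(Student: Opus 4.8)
The plan is to derive the two‑sided bound from Bounded Covariation (Theorem~\ref{bco}) together with one exact identity, after first isolating how $|B_{wu}|/|B_w|$ depends on $n=a_{k+1}$.

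I would dispose of the assertion $\zeta_1=1$ first, since it is an identity rather than an estimate. If $a_{k+1}=1$ and $B_w$ is a $(k,\mathrm I)$-type band, then $B_{wu}$ is the unique $(k{+}1,\mathrm{II})$-type band inside $B_w$, hence a band of $\sigma_{(k+2,0)}$; but the renormalization relations (property~(A)) give $t_{(k+2,0)}=t_{(k,a_{k+1})}=t_{(k,1)}$, which is exactly the generating polynomial of $B_w$. Thus $B_{wu}$ and $B_w$ are both bands of one and the same set $\sigma_{(k,1)}$; since distinct bands of any $\sigma_{(k,p)}$ are disjoint (property~(B)) and $B_{wu}\subseteq B_w$, we must have $B_{wu}=B_w$, so $|B_{wu}|/|B_w|\equiv1$ and $\zeta_1=1$ works.

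For a general $n\in\mathscr{N}$ the content is that $|B_{wu}|/|B_w|$ is, up to a factor controlled only by $V$, a function of $n$ alone. For a \emph{fixed} order $k$ with $a_{k+1}=n$ this is immediate: any $w\in\Omega_k$ admitting the continuation $u=(e_{12},1,1)$ necessarily ends in a type-$\mathrm I$ band (as $e_{12}=(\mathrm I,\mathrm{II})$ has initial vertex $\mathrm I$), so Theorem~\ref{bco} compares $|B_{wu}|/|B_w|$ across all such $w$. To get a single $\zeta_n$ valid simultaneously for all orders with $a_{k+1}=n$, and to pin down its size, I would estimate the ratio directly: by bounded distortion (Corollary~\ref{bdist}) it is comparable, on $B_{wu}$, to $|t'_{(k,1)}|/|t'_{(k,a_{k+1})}|$, and one inserts the Cayley--Hamilton expansion $t_{(k,n)}=U_{n-1}(t_{(k+1,0)})\,t_{(k,1)}-U_{n-2}(t_{(k+1,0)})\,t_{(k,0)}$ in Chebyshev polynomials $U_j$, uses that on a type-$\mathrm I$ band $|t_{(k,0)}|,|t_{(k,1)}|\le2$ while the trace relation (property~(C)) confines $|t_{(k+1,0)}|$ to a fixed neighbourhood of $V$, and uses the Bounded Variation bounds (Theorem~\ref{bvar}) for $t'_{(k,0)},t'_{(k,1)},t'_{(k+1,0)}$. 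Differentiating and collecting terms then gives $|B_{wu}|/|B_w|=\zeta_n\cdot\big(\text{factor bounded by an absolute power of }e^{V}\big)$ with $\zeta_n\asymp V^{-(n-1)}$ depending only on $\alpha,V,n$; any residual cross-order discrepancy can in addition be smoothed out by extending a shorter type-$\mathrm I$ word down to the longer order through a chain of type-$\mathrm{III}$ bands and re-applying Theorem~\ref{bco}. One then defines $\zeta_n$ to be the value of the ratio at a reference type-$\mathrm I$ band — or the geometric mean of its extreme values, which replaces the error exponent $2C_2V$ by $C_2V$ — and $0<\zeta_n\le1$ follows because the bands are non-degenerate closed intervals and $B_{wu}\subseteq B_w$.

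The real obstacle, and where this parts company with \cite{FLW}, is the uniformity in $n$ of the error factor $\eta=C_1\exp(C_2V)$: the generating polynomial $t_{(k,a_{k+1})}$ of $B_{wu}$ has degree $q_{k+1}$, so $|t'_{(k,a_{k+1})}|$ grows with $a_{k+1}$, yet the bound must not deteriorate as $a_{k+1}\to\infty$. This forces a precise treatment of the near-cancellation in $U_{n-1}(t_{(k+1,0)})t_{(k,1)}-U_{n-2}(t_{(k+1,0)})t_{(k,0)}$ on $B_{wu}$ — where $t_{(k,1)}$ lies very close to $t_{(k,0)}/t_{(k+1,0)}$ — fine enough that the $n$-dependent growth cancels out of the quotient $|t'_{(k,1)}|/|t'_{(k,a_{k+1})}|$, leaving only the $n$-free factor together with a factor bounded in terms of $V$.
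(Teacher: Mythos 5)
Your treatment of $\zeta_1=1$ (via $t_{(k+2,0)}=t_{(k,a_{k+1})}=t_{(k,1)}$, so $B_{wu}$ and $B_w$ are bands of the same $\sigma_{(k,1)}$ and hence coincide) and your idea of defining $\zeta_n$ as the value of the ratio at one reference type-$I$ band, with deviations controlled by bounded covariation, are exactly the paper's proof. The paper does nothing more: it fixes one $w^{(n)}\in\Omega_{l_n}$ of type $I$ with $a_{l_n+1}=n$, sets $\zeta_n:=|B_{w^{(n)}u}|/|B_{w^{(n)}}|$, and invokes Theorem \ref{bco} with $\widetilde w=w^{(n)}$; the bound $0<\zeta_n\le1$ is immediate from $B_{w^{(n)}u}\subseteq B_{w^{(n)}}$, and $\eta=C_1\exp(2C_2V)$ becomes the stated form after renaming constants.

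Where you go astray is in believing that a separate argument is needed for the cross-order comparison and for ``pinning down the size'' of $\zeta_n$. Theorem \ref{bco} does not require $|w|=|\widetilde w|$: the letter $u=(e_{12},1,1)$ belongs to every $\mathscr{E}_m$ (since $\tau_{e_{12}}\equiv 1$), so $wu,\widetilde wu\in\Omega_\ast$ whenever $w,\widetilde w$ end at vertex $I$, and the two modified ladders for the $u$-segment both consist of $n-1$ rungs with $(p_i,l_i)=(1,1)$ once $a_{k+1}=a_{l_n+1}=n$, which is all that Proposition \ref{diff}(iii) needs. Crucially, the resulting $\eta$ is already uniform in $n$ (and in $m$, when $V=\widetilde V$) --- that uniformity is precisely the content of Propositions \ref{lip} and \ref{diff}, i.e.\ the ``near-cancellation'' you identify as the real obstacle has already been done once and for all there. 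Your proposed substitute --- estimating $|t'_{(k,1)}|/|t'_{(k,a_{k+1})}|$ directly from the Chebyshev expansion --- is therefore redundant, and as written it is also incomplete: the crude bounds available from Lemma \ref{quo-deri} alone locate the ratio only between $(2(V+5))^{-(n-1)}$ and $(2(V-8)/3)^{-(n-1)}$, a discrepancy of order $3^{n-1}$ that is not absorbed by any $C_1\exp(C_2V)$, so the claim $\zeta_n\asymp V^{-(n-1)}$ with $n$-free constants is not established by what you write, only asserted. Since the corollary never requires the size of $\zeta_n$, the correct move is to delete the entire direct computation and apply Theorem \ref{bco} to $w$ versus the fixed reference word.
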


 We will prove Theorem \ref{bco} and Corollary \ref{C-n} in Section \ref{bdcov}.

\subsection{Existence of Gibbs like measures}\

At first we introduce some notations used in this paper. For two positive
sequences $\{a_n\}$ and $\{b_n\}$, $a_n\sim b_n$ means that there
exist constants $0<d_1\leq d_2$ such that $d_1a_n\leq b_n \leq d_2
a_n $ for every $n\in \mathbb{N}.$ $a_n\lesssim b_n$ means  that
there exists a constant $d>0$ such that $a_n\leq d b_n$ for every
$n\in\mathbb{N}.$ $a_n\gtrsim b_n$ can  be defined similarly.

For any $\beta>0$ define
$$
b_{k,\beta}:=\sum_{w\in \Omega_k} |B_w|^\beta=\sum_{B\in\mathscr{G}_k}|B|^\beta.
$$

\begin{theo}[Existence of Gibbs like measures]\label{gibbs0}
For any $0<\beta<1$, there exists a probability measure $\mu_\beta$
supported on $\Sigma$ such that

 if $B_w$  has type $(k,I)$, let $u=(e_{12},1,1)$,  then
$$\mu_{\beta}(B_{w})\sim\frac{\zeta_{a_{k+1}}^\beta}{ a_{k+1}^{1-\beta}}
\frac{|B_{w}|^\beta}{b_{k,\beta}}\sim \frac{|B_{wu}|^\beta}{b_{k+1,\beta}}.
$$

If $B_w$ has type $(k,II)$, then
$$
\mu_{\beta}(B_{w})\sim
\frac{|B_{w}|^\beta}{b_{k,\beta}}.
$$

If $B_w$ has type $(k,III)$, then
$$
\mu_{\beta}(B_{w})\sim
\begin{cases}
\frac{|B_{w}|^\beta}{b_{k,\beta}}& a_{k+1}>1;\\
\frac{\zeta_{a_{k+2}}^\beta}{a_{k+2}^{1-\beta}}
\frac{|B_{w}|^\beta}{b_{k,\beta}}& a_{k+1}=1.
\end{cases}
$$
\end{theo}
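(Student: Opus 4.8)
\textbf{Proof proposal for Theorem \ref{gibbs0}.}

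The plan is to construct $\mu_\beta$ as a weak-$*$ limit of the sequence of measures $\mu_\beta^{(N)}$ that distribute mass on the cylinders of generation $N$ proportionally to $|B_w|^\beta$, i.e. $\mu_\beta^{(N)}(B_w) = |B_w|^\beta / b_{N,\beta}$ for $w \in \Omega_N$, and then to control how this mass is split among the children of a band when we pass from generation $k$ to generation $k+1$. The three combinatorial facts recalled after Definition \ref{def1} tell us exactly which children a band has: a $(k,I)$-band has a single $(k+1,II)$ child, a $(k,II)$-band has $a_{k+1}+1$ children of type $I$ and $a_{k+1}$ of type $III$, and a $(k,III)$-band has $a_{k+1}$ children of type $I$ and $a_{k+1}-1$ of type $III$. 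The bounded distortion Corollary \ref{bdist} and the bounded covariation Theorem \ref{bco} (with Corollary \ref{C-n} giving the precise scale $\zeta_n$ for the $I$-to-$II$ contraction $|B_{wu}|/|B_w| \sim \zeta_{a_{k+1}}$) are what make the splitting ratios essentially independent of which band of a given type we look at. So the first step is to show the sequence $(b_{k,\beta})$ and the ratios $b_{k+1,\beta}/b_{k,\beta}$ behave multiplicatively up to bounded factors, by summing $|B_{w'}|^\beta$ over the children $w'$ of each $w\in\Omega_k$ and using covariation to pull the ``shape'' of the children out of the sum.

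Next I would make the recursive estimate explicit. Fix $w\in\Omega_k$ of type $T$. Using Corollary \ref{bdist} to replace $|B_{w'}|$ by $|B_w|\cdot |B_{w'}|/|B_w|$ and Theorem \ref{bco}/Corollary \ref{C-n} to say $|B_{w'}|/|B_w|$ depends (up to the factor $\eta$) only on $T$, $a_{k+1}$ and the position index $l$, one gets
$$
\sum_{w' \text{ child of } w} |B_{w'}|^\beta \sim |B_w|^\beta \cdot S_T(a_{k+1},a_{k+2},\beta),
$$
where $S_T$ is an explicit sum over the children's relative scales. For type $II$, say, $S_{II}$ contains one term $\zeta_{a_{k+1}}^\beta$-type contribution from each of the $a_{k+1}+1$ $I$-children (each of relative size $\sim \zeta_{a_{k+1}}/a_{k+1}$ up to position-dependent factors, giving a total $\sim \zeta_{a_{k+1}}^\beta a_{k+1}^{-\beta}\cdot a_{k+1} = \zeta_{a_{k+1}}^\beta a_{k+1}^{1-\beta}$ — wait, here one must be careful: the $I$-children of a $II$-band each already have size comparable to $|B_w|$, it is only the subsequent $II$-grandchild that contracts by $\zeta$; so the relevant scales are the ones recorded in the displayed formulas of the theorem and the bookkeeping must follow the coding in Section \ref{structure} exactly). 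Summing over all $w\in\Omega_k$ of each type and then over types gives $b_{k+1,\beta} \sim \sum_T (\text{mass of type } T \text{ at level }k) \cdot S_T$, from which $\mu_\beta(B_w) = \lim_N \mu_\beta^{(N)}(B_w)$ exists and satisfies, for $w\in\Omega_k$,
$$
\mu_\beta(B_w) \sim \frac{|B_w|^\beta}{b_{k,\beta}} \cdot \rho_T(a_{k+1},a_{k+2},\beta),
$$
where $\rho_T$ is the (bounded-ratio) correction coming from the fact that the natural mass $|B_w|^\beta/b_{k,\beta}$ is not exactly the total mass its descendants receive; matching $\rho_I$ with $\zeta_{a_{k+1}}^\beta a_{k+1}^{\beta-1}$, $\rho_{II}\sim 1$, and $\rho_{III}$ with the two-case expression for $a_{k+1}>1$ versus $a_{k+1}=1$ is then a finite check using the child-count list. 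The identity $\mu_\beta(B_w)\sim |B_{wu}|^\beta/b_{k+1,\beta}$ for $u=(e_{12},1,1)$ when $B_w$ is type $I$ follows immediately, since $B_{wu}$ is then the unique child of $B_w$ and carries (up to bounded factor) all of $\mu_\beta(B_w)$, while $|B_{wu}|\sim \zeta_{a_{k+1}}|B_w|$ by Corollary \ref{C-n} and $b_{k+1,\beta}\sim b_{k,\beta}\cdot(\text{the }S\text{-factor})$.

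I expect two places to be the real obstacles. The first is that the contraction ratios of the $I$-children of a $(k,III)$-band range from $\sim a_{k+1}^{-1}$ down to $\sim a_{k+1}^{-3}$ (the $\sin^2(l\pi/(a_{k+1}+1))$ behaviour flagged in the introduction), so the sum $S_{III}$ is genuinely dominated by a few terms and one cannot treat the children as uniform; establishing that $\mu_\beta(B_w)$ nonetheless has the clean form above — and in particular that the limit exists, i.e. that $b_{k+1,\beta}/b_{k,\beta}$ is comparable to the per-band child-sum uniformly in $k$ — requires the full strength of bounded covariation rather than just distortion. The second obstacle is consistency of the weak-$*$ limit: one must check that the masses assigned to a band by $\mu_\beta^{(N)}$ for different $N>k$ stay within a fixed multiplicative constant of each other (not a constant growing with $N-k$), which again reduces to iterating the bounded-covariation estimate and verifying the implied constant does not compound. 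Once that uniformity is in hand, a standard compactness/diagonal argument produces $\mu_\beta$ and the stated two-sided bounds pass to the limit. The $a_{k+1}=1$ special case for type $III$ is exactly the degenerate situation where a $III$-band has only $a_{k+1}=1$ $I$-child and $a_{k+1}-1=0$ $III$-children, so its mass is forced one generation further down onto a $II$-grandchild whose scale involves $\zeta_{a_{k+2}}$ — this is why the second line of the type-$III$ formula looks like the type-$I$ formula shifted by one index.
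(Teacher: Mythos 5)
Your overall architecture is the paper's: take $\mu_{\beta,m}(B_w)=|B_w|^\beta/b_{m,\beta}$ on level-$m$ bands, prove two-sided bounds uniform in $m$, and pass to a weak-$*$ limit (the paper does this in Theorem \ref{gibbs}, with $\varepsilon=0$ giving Theorem \ref{gibbs0}). You also correctly identify the target correction factors $\rho_T$ and the reason for the $\zeta_{a_{k+2}}$ shift when $a_{k+1}=1$. But the decisive step is missing, and the mechanism you propose for it would fail. You say the uniformity in $N$ ``reduces to iterating the bounded-covariation estimate and verifying the implied constant does not compound'': iterating one-generation child-sum estimates of the form $\sum_{w'}|B_{w'}|^\beta\sim|B_w|^\beta S_T$ across $N-k$ generations compounds the comparability constant exponentially in $N-k$, so this route gives nothing. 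The point of Theorem \ref{bco} is precisely that it compares $|B_{w\tau}|/|B_w|$ with $|B_{\sigma\tau}|/|B_\sigma|$ with one constant $\eta$ \emph{independent of the length of} $\tau$; the paper applies it once to the entire tail sum, obtaining for $w$ of type $T$ at level $k$
\begin{equation*}
\mu_{\beta,m}(B_w)\ \sim\ \frac{|B_w|^\beta}{b_{k,\beta}}\cdot\frac{b_{k,\beta}}{b_{k,\beta}^{(k,T)}}\cdot\frac{b_{m,\beta}^{(k,T)}}{b_{m,\beta}},
\end{equation*}
so that all the $m$-dependence is concentrated in the global type-share $b_{m,\beta}^{(k,T)}/b_{m,\beta}$.

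The second missing ingredient is the computation showing that this type-share stabilizes: the paper proves (formulas \eqref{three} and \eqref{cki}) that $b_{k,\beta}^{(k,T)}/b_{k,\beta}$ and, for all $m\ge k+3$, $b_{m,\beta}^{(k,T)}/b_{m,\beta}$ are $\sim$ explicit expressions in $\zeta_{a_{k\pm j}}^\beta$ and $A_{a_{k\pm j},\beta}\sim a_{k\pm j}^{1-\beta}$, with constants depending only on $V,\beta$. This is done not by iteration but by introducing the normalized descendant sums $c_{k+1}^{(I)},c_{k+1}^{(II)},c_{k+1}^{(III)}$ (again well-defined up to constants only because of covariation), deriving one-step relations among them via Proposition \ref{lm-2} and Corollary \ref{C-n}, and exploiting the cancellation $c_{k+1}^{(I)}+c_{k+1}^{(III)}\sim c_{k+1}^{(II)}$, which is what makes the answer independent of $m$. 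Without this (or an equivalent argument) your $\rho_T$ ``finite check'' is not available, since $\rho_T$ is exactly the product of the level-$k$ and level-$m$ type-shares. A smaller point: you assert $\mu_\beta(B_w)=\lim_N\mu_\beta^{(N)}(B_w)$ exists, but this is neither needed nor established; one only needs the uniform two-sided bounds, after which any weak-$*$ limit point of $\{\mu_{\beta,m}\}$ satisfies the stated estimates.
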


We will prove  a generalized version of this
theorem, i.e. Theorem \ref{gibbs} in Section \ref{gibbs-meas}. Indeed the measure constructed in this theorem is a weak type Gibbs like measure, compared with that constructed in \cite{MRW,FLW}. However we still call it Gibbs like measure for convenience.

\subsection{Other useful facts}\

In this subsection we collect several other useful facts,
which are essentially contained in \cite{FLW}.

\begin{lem}{\rm (\cite{FLW})}\label{quo-deri}
Assume  $w\in\Omega_k, wu\in\Omega_{k+1}$ with $u=(e,p,l)$.
Let $h_w, h_{wu}$ be the  generating polynomials of $B_w, B_{wu}$ respectively.
Then for any $x\in B_{wu}$, if $e\ne e_{12}$,
$$\frac{V-8}{3}(p+1)\csc ^2\frac{l\pi}{p+1}
\le\left|\frac{h_{wu}^\prime(x)}{h_w^\prime(x)}\right|\le (V+5)(p+1)\csc ^2\frac{l\pi}{p+1},
$$
if $e=e_{12}$, then $p=1$, we have
$$\left(\frac{2(V-8)}{3}\right)^{a_{k+1}-1}
\le\left|\frac{h_{wu}^\prime(x)}{h_w^\prime(x)}\right|\le
\left(2(V+5)\right)^{a_{k+1}-1}.
$$
\end{lem}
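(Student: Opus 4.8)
The plan is to compute $h_{wu}'/h_w'$ explicitly via the renormalization relation (A) and the trace relation (C), exploiting the fact that the generating polynomial of a band is, up to a change of variable, a Chebyshev-type polynomial. Recall that if $B_w$ has type $(k,T)$ then $h_w = t_{(k,p_0)}$ for the appropriate $p_0\in\{0,-1,1\}$, and that $B_{wu}$ is a band of $\sigma_{(k,p+1)}$ or of $\sigma_{(k+1,0)}$ sitting inside $B_w$. By the renormalization relation, $t_{(k,p)}$ and $t_{(k+1,0)}=t_{(k-1,a_k)}$ satisfy a three-term-type recursion in the index $p$ coming from the Cayley--Hamilton identity $\mathbf M_k^{p+1}=(\tr\mathbf M_k)\mathbf M_k^{p}-\mathbf M_k^{p-1}$. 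Writing $t_{(k+1,0)}=2\cos\theta$ on the relevant band (here one uses that $|t_{(k+1,0)}|\le 2$ on $B_w$ for the non-$e_{12}$ edges), one gets $t_{(k,p)}= A\sin(p\theta)+B\sin((p+1)\theta)$ for functions $A,B$ of $x$, which lets one express $h_{wu}$ as (a linear piece of) $h_w$ composed with a Chebyshev polynomial $U_{p}$ of degree $p$. Differentiating and evaluating at the point of $B_{wu}$ where the Chebyshev variable takes the value $\cos\frac{l\pi}{p+1}$ (the $l$-th band corresponds to the $l$-th monotonicity interval of $U_p$) produces the factor $(p+1)\csc^2\frac{l\pi}{p+1}$, since $U_p'$ at $\cos\frac{l\pi}{p+1}$ has size comparable to $(p+1)/\sin^2\frac{l\pi}{p+1}$.

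The remaining ingredient is to pin down the scalar prefactor, which is where the constants $\frac{V-8}{3}$ and $V+5$ enter. First I would use the trace relation $\Lambda(t_{(k+1,0)},t_{(k,p)},t_{(k,p+1)})=V^2$ to control $A,B$ (equivalently $\tr\mathbf M_{k-1}\mathbf M_k^{\pm1}$ and a Wronskian-type quantity) on $B_w$: this gives bounds of the form (quantity) $\in[(V-8)/3,\,V+5]$ once one knows $|t_{(k+1,0)}|\le 2$ and the crude a priori bound $|V-$something$|$ on the traces over $q_{k-1}$ sites, which follows from the spectral estimates of \cite{R,LW} for $V\ge$ the relevant threshold. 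For the $e_{12}$ edge the situation is different: there $B_{wu}$ is the $(k+1,II)$-band inside a $(k,I)$-band, and by renormalization $h_{wu}=t_{(k,-1)}=t_{(k-1,a_k-1)}$ while $h_w=t_{(k,1)}$, so instead of a single Chebyshev step one iterates the recursion $a_{k+1}-1$ times; since on a type-$I$ band $t_{(k,0)}$ is bounded away from $[-2,2]$ (it is essentially $\pm V$), each step multiplies the derivative ratio by a factor in $[2(V-8)/3,\,2(V+5)]$, yielding the stated $((2(V-8)/3)^{a_{k+1}-1},(2(V+5))^{a_{k+1}-1})$ bounds.

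The main obstacle I anticipate is making the ``scalar prefactor in $[(V-8)/3,V+5]$'' step honest: one must track the $x$-dependent coefficients $A(x),B(x)$ across the band $B_{wu}$ and show they do not vary too much, and one must know the traces $\tr\mathbf M_{k-1}\mathbf M_k^{\pm 1}$ are confined to the claimed ranges on all of $B_w$, not just at one point. This is exactly the kind of estimate that in \cite{FLW} was carried out via the explicit structure of the spectral generating polynomials together with the covering/disjointness facts (B), \eqref{empty}, \eqref{tao}; since the statement here is quoted verbatim from \cite{FLW} and the argument there does not use boundedness of $\{a_k\}$, I would simply invoke that computation, checking only that the constants $\frac{V-8}{3}$ and $V+5$ survive at the threshold $V\ge 20$ relevant to the ambient hypotheses. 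A secondary technical point is the identification of the $l$-th band of $\sigma_{(k,p+1)}$ inside $B_w$ with the $l$-th lobe of the degree-$p$ Chebyshev factor, so that the argument of $\csc^2$ is correctly $\frac{l\pi}{p+1}$ rather than an off-by-one variant; this is settled by counting zeros of $h_{wu}$ using property (B) ($\deg t_{(k,p+1)}$ disjoint bands).
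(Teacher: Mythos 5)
Your proposal is essentially the paper's own route: the paper gives no independent proof of this lemma but simply cites \cite{FLW} (Proposition 3.3), noting it is a restatement of Proposition \ref{lm-2}, and the mechanism you sketch (Cayley--Hamilton/Chebyshev recursion, the trace relation confining the large trace $z_\pm$ to $[V-2,V+2]$, and the location of the $l$-th band producing the $(p+1)\csc^2\frac{l\pi}{p+1}$ factor) is exactly the ladder machinery \eqref{ladder-i}--\eqref{exprderi} together with Propositions \ref{index} and \ref{keyLW} that the paper recalls from \cite{FLW}. One minor slip worth fixing: in the $e_{12}$ case $h_{wu}=t_{(k+2,0)}=t_{(k,a_{k+1})}$ rather than $t_{(k,-1)}$, and the trace forced to size $\approx V$ on a type-$I$ band is $t_{(k+1,0)}$ (not $t_{(k,0)}$, which satisfies $|t_{(k,0)}|\le 2$ there); your count of $a_{k+1}-1$ recursion steps is nonetheless correct.
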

We remark that here $p=\tau_e(a_{k+1})$.
This lemma is stated in another way in Proposition \ref{lm-2}.
See  \cite{FLW} Proposition 3.3 for a proof.

\begin{lem}\label{lem-bc}
Assume $V\ge 20.$ Write $t_1=(V-8)/3$ and $t_2=2(V+5).$
Then for any $w=w_1\cdots w_n \in\Omega_n$ with $w_i=(e_i,\tau_{e_i}(a_i),l_i)$ we have
\begin{equation}\label{bd-basic}
\prod_{e_i=e_{12}} \frac{1}{t_2^{a_i-1}}\cdot\prod_{e_i\ne e_{12}}
\frac{1}{t_2 a_i^3}\le|B_w|\le 4 \prod_{e_i=e_{12}} \frac{1}{t_1^{a_i-1}}
\cdot\prod_{e_i\ne e_{12}}\frac{1}{t_1 a_i}
\end{equation}
  Especially we have
  \begin{equation}\label{upper-bd}
|B_w|\le 4^{1-n/2}.
\end{equation}
\end{lem}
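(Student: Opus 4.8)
The plan is to derive the two-sided bound \eqref{bd-basic} by combining the bounded distortion estimate (Corollary \ref{bdist}) with the telescoping identity for derivatives of generating polynomials and the per-step bounds of Lemma \ref{quo-deri}. Write $w=w_1\cdots w_n$ with $w_i=(e_i,\tau_{e_i}(a_i),l_i)$, and let $h_0\equiv 2$, $h_1,\dots,h_n$ be the generating polynomials of the nested bands $B_{w_1}\supset B_{w_1w_2}\supset\cdots\supset B_w$ (here $h_i$ is the generating polynomial of $B_{w_1\cdots w_i}$, with the convention that $B_{\emptyset}=\sigma_{(0,0)}$ and $h_0'\equiv 1$ after the first step, since for order $0$ the generating polynomial is $t_{(0,1)}=x-V$ or $t_{(1,0)}=x$, both with derivative $1$). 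For any $x\in B_w$ the chain rule gives
\begin{equation}\label{telescope-bc}
|h_n'(x)|=\prod_{i=1}^n\left|\frac{h_i'(x)}{h_{i-1}'(x)}\right|.
\end{equation}
By Corollary \ref{bdist}, $|B_w|\sim |h_n'(x)|^{-1}$ up to the factor $\xi=4\exp(180V)$; so it suffices to bound the right side of \eqref{telescope-bc} from above and below.

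Next I would plug in Lemma \ref{quo-deri} factor by factor. For a step with $e_i\ne e_{12}$ one has $p=\tau_{e_i}(a_i)$ and
$$
\frac{V-8}{3}\,(p+1)\csc^2\frac{l_i\pi}{p+1}\le\left|\frac{h_i'(x)}{h_{i-1}'(x)}\right|\le (V+5)(p+1)\csc^2\frac{l_i\pi}{p+1}.
$$
Using $\csc^2\theta\ge 1$ one gets the lower bound $\ge\frac{V-8}{3}$ trivially, but for the stated form we need the crude upper bound $(p+1)\csc^2\frac{l_i\pi}{p+1}\le (p+1)\cdot\frac{(p+1)^2}{\text{const}}\lesssim p^3$ coming from $\csc^2\frac{l\pi}{p+1}\le \csc^2\frac{\pi}{p+1}\lesssim (p+1)^2$ (valid since $1\le l\le p$ and $\sin\frac{\pi}{p+1}\gtrsim \frac{1}{p+1}$), together with $p=\tau_{e_i}(a_i)\le a_i+1\le 2a_i$; this yields the per-step bound $\frac{1}{t_2 a_i^3}\le |B$-ratio$|^{-1}$ contribution (after absorbing constants into the $\xi$ of Corollary \ref{bdist}, or rather tuning $t_1,t_2$). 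Symmetrically, for the lower bound on $|h_i'/h_{i-1}'|$ one throws away $\csc^2\ge1$ and $(p+1)\ge 2$ to get $\ge 2(V+5)=t_2$... — more carefully, the upper bound on $|B_w|$ needs the \emph{lower} bound on the derivative ratio, which is $\ge\frac{V-8}{3}=t_1$ for an $e_i\ne e_{12}$ step (using $(p+1)\csc^2\ge 2$ would give a sharper $2t_1$, but the statement only claims $t_1 a_i$, and $a_i\ge1$ so $t_1 a_i\ge t_1$ is consistent — one actually uses the full $(p+1)\csc^2\frac{l\pi}{p+1}\ge$ something like $\frac{(p+1)}{?}\gtrsim a_i$ after summing/worst-casing over $l$; the cleanest route is $(p+1)\csc^2\frac{l\pi}{p+1}\ge (p+1)\ge a_i$ when $e_i\in\{e_{23},e_{31}\}$ and $\ge p+1=a_i$ likewise for $e_{33}$, $\ge a_1+1$ for $e_{21}$, always $\gtrsim a_i$). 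For an $e_i=e_{12}$ step, $p=1$ and Lemma \ref{quo-deri} directly gives $(2(V-8)/3)^{a_{i}-1}\le|h_i'/h_{i-1}'|\le(2(V+5))^{a_i-1}$, i.e. between $(2t_1)^{a_i-1}$ and $t_2^{a_i-1}$; note $2t_1=2(V-8)/3\ge t_1$, so writing $t_1^{a_i-1}$ in the bound is consistent. Multiplying all factors over $i$ and inserting into \eqref{telescope-bc}–Corollary \ref{bdist} gives \eqref{bd-basic}, with the overall constant $\xi$ from bounded distortion absorbed; one should double-check that the leading factor $4$ on the right of \eqref{bd-basic} comes exactly from $|B_w|\le \xi|h_n'|^{-1}$ with the $\exp(180V)$ part of $\xi$ matched against the slack between the true $t_i$ and the nominal ones (or that the $t_1,t_2$ here already account for it — this bookkeeping of constants is the one genuinely fiddly point).

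For \eqref{upper-bd}: from the right inequality of \eqref{bd-basic}, each factor is at most $t_1^{-(a_i-1)}\le (V-8)/3)^{-(a_i-1)}$ for an $e_{12}$-step and at most $\frac{1}{t_1 a_i}\le\frac{1}{t_1}$ for the others; since $V\ge 20$ gives $t_1=(V-8)/3\ge 4$, an $e_{12}$-step contributes $\le 4^{-(a_i-1)}$ and any other step contributes $\le 4^{-1}$. Because each $w_i$ (for $i\ge 2$; the first symbol lives in $\mathscr E_{a_1}^\ast$) corresponds to one step and $a_i\ge1$, in all cases a single symbol contributes a factor $\le 4^{-1}$ in the worst case $a_i=1$ and strictly smaller otherwise, so the product over $n$ symbols is $\le 4\cdot 4^{-n/2}=4^{1-n/2}$ — the extra factor $4=4^{1}$ absorbing the $4$ in front of \eqref{bd-basic} and the half-power reflecting that an $e_{12}$-block of length $a_i$ contributes $4^{-(a_i-1)}$ which is $\le 4^{-a_i/2}$ precisely when $a_i\ge 2$, while $a_i=1$ ($e_{12}$) contributes $1=4^0\ge 4^{-1/2}$, balanced against neighbouring non-$e_{12}$ steps; one verifies this exponent count holds along any admissible path. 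I expect the main obstacle to be exactly this constant-chasing — making the nominal $t_1,t_2$ and the $4$ and the distortion constant $\xi=4\exp(180V)$ all line up so that the clean statements \eqref{bd-basic} and \eqref{upper-bd} hold with no extra fudge factors; the analytic content (chain rule + Lemma \ref{quo-deri} + $\csc^2\frac{l\pi}{p+1}\sim (p/l)^2\wedge(p/(p+1-l))^2$) is routine.
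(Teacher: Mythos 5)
Your overall strategy (telescoping $|h_n'(x)|=\prod_i |h_i'(x)/h_{i-1}'(x)|$, feeding in Lemma \ref{quo-deri} step by step, converting $(p+1)\csc^2\frac{l\pi}{p+1}$ into powers of $a_i$ via $\tau_e(a_i)+1\sim a_i$ and $\sin x\ge 2x/\pi$, and counting non-$e_{12}$ steps via the fact that no two consecutive bands are of type I to get \eqref{upper-bd}) is exactly the paper's. But the step you yourself flag as "the one genuinely fiddly point" is a genuine gap, not a bookkeeping detail. If you pass from $|h_n'(x)|$ to $|B_w|$ via Corollary \ref{bdist}, you pick up the multiplicative constant $\xi=4\exp(180V)$ on each side. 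That constant cannot be absorbed anywhere: the constants $4$ (upper) and $1$ (lower) in \eqref{bd-basic} are exact, the per-step constants $t_1,t_2$ are fixed by the statement, and the "slack" between the nominal per-step factors and the true bounds of Lemma \ref{quo-deri} is at most a bounded factor per step (something like $2$ or $27/8$), so over $n$ steps it is at most $C^n$ — hopeless against $e^{180V}$ when $n$ is small (take $n=1$ and $V$ large). So as written your argument proves \eqref{bd-basic} and \eqref{upper-bd} only up to an extra factor $e^{180V}$, which is not the lemma.

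The missing idea is that no distortion estimate is needed at all: since $h_n$ is monotone on $B_w$ with $h_n(B_w)=[-2,2]$, the mean value theorem produces a single point $x_0\in B_w$ with $|h_n'(x_0)|\,|B_w|=4$ \emph{exactly}, and then one evaluates the entire telescoping product at that one point $x_0$ (the bounds of Lemma \ref{quo-deri} hold at every point of the innermost band, in particular at $x_0$). This yields $|B_w|=4\prod_i|h_{i-1}'(x_0)/h_i'(x_0)|$ with the clean leading constant $4$, and the rest of your computation then goes through essentially as you describe (the paper runs it on the modified ladder, which is equivalent to your use of Lemma \ref{quo-deri} on the initial ladder since that lemma already packages the $e_{12}$ blocks). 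With this replacement your proposal becomes the paper's proof.
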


We will prove this lemma in the end of Section \ref{ladder}.


\section{Dimension formulas }\label{thm1}
 This section is devoted to the proof  of \eqref{formula-dim} in  Theorem \ref{fracdim}.
 At first we will show the box dimension formula, which is easier.
 Then we will propose a truncation  procedure to derive  the Hausdorff dimension formula. The formula \eqref{dim-box} will be proven in Section \ref{thm2}.

\subsection{$s^\ast(V)=\overline{\dim}_B\ \Sigma$}\

By \eqref{dimineq}, we only need to show that $\overline{\dim}_B\ \Sigma\le s^\ast(V).$
We recall an equivalent
definition of the upper box dimension(see for example \cite{Tr}).
Let $A\subset\mathbb{R}$ be a Cantor set. Let $a=\inf E$ and
$b=\sup E$. The complement of $A$, i.e. $[a,b]\backslash A$, consists
of countable many open intervals $\{G_i\}_{i\ge1}$, which is called
the gaps of $A$. For $i\ge1$, let $t_i=|G_i|$, then
$$\overline{\dim}_B\, A=\inf\left\{\beta\
 |\ \sum_{i=1}^\infty t_i^\beta<\infty\right\}.$$

We now consider all the gaps of the spectrum $\Sigma$.
  We call a gap  of {\it order $k$} if it is
covered by some band in  $\mathscr{G}_k$  but is not covered by any
 band in $\mathscr{G}_{k+1}$. Let $\mathscr{P}_k$ be the collection
of all gaps of order $k$, then
$$\overline{\dim}_B\, \Sigma=
\inf\left\{\beta\ |\ \sum_{k\ge0}\sum_{J\in\mathscr{P}_k}
|J|^\beta<\infty\right\}.$$

Now let us prove $\overline{\dim}_B\, \Sigma\le
s^\ast(V).$

If $s^\ast(V)=1,$ the result is trivial. So in the following we assume
$s^\ast(V)<1$. Fix $s$ such that $s^\ast(V)<s<1$.

Let $B$ be a generating band of order $k$. Suppose it contain $n$
generating bands of order $k+1$, then it contains $n-1$ gaps of
order $k$, which we denote  by $J_1,\cdots,J_{n-1}$. It is seen that
$|J_1|+\cdots+|J_{n-1}|\le |B|$. By concavity of the function $x^s$
we get
$$
\sum_{i=1}^{n-1}|J_i|^s=(n-1)\sum_{i=1}^{n-1}\frac{|J_i|^s}{n-1}
\le (n-1)\left(\frac{\sum_{i=1}^{n-1}|J_i|}{n-1}\right)^s \le
(n-1)^{1-s}|B|^s.
$$ For any generating band of order $k$, it
contain at most $2a_{k+1}+1$ generating bands of order $k+1$, so we
have
$$
\sum_{J\in\mathscr{P}_k}|J|^s\le (2a_{k+1})^{1-s}
\sum_{B\in\mathscr{G}_k}|B|^s =(2a_{k+1})^{1-s}b_{k,s}.
$$
By \eqref{b-k-b-k-1} and Lemma \ref{A-n-beta}(taking $\varepsilon$ to be 0), we have
$
b_{k,s}/b_{k+1,s}\sim a_{k+1}^{s-1}
$, so
there exists $C>0$ such that
$$\sum_{J\in\mathscr{P}_k}|J|^s\le Cb_{k+1,s}.$$

 Let $\varepsilon=s-s^*(V)$. Since $s^*(V)<s$, there exists $N>0$ such that
for any $k>N$, $s>s_k+\varepsilon/2$ and for any $B\in\mathscr{G}_k$ we
have $|B|<1$. By   \eqref{upper-bd} we have
$$
b_{k,s}\le \sum_{B\in\mathscr{G}_k}|B|^{s_k+\varepsilon/2} \le
4^{1-\varepsilon k/4}\sum_{B\in\mathscr{G}_k}|B|^{s_k} =4^{1-\varepsilon
k/4}.
$$
 Hence we have
$$\sum_{k\ge0}\sum_{J\in\mathscr{P}_k}|J|^s=\widetilde C+
\sum_{k=N}^\infty\sum_{J\in\mathscr{P}_k}|J|^s \le\widetilde C+
C\sum_{k=N+1}^\infty b_{k,s}<\infty.$$ So we get $\overline{\dim}_B\
\Sigma\le s$. Since $s>s^\ast(V)$ is arbitrary, we conclude that
$\overline{\dim}_B\
\Sigma\le s^\ast(V)$.

\subsection{$s_\ast(V)={\dim}_H\ \Sigma$}\

 Recall that $\alpha$ has expansion $[0;a_1,a_2,\cdots].$  If
$a_k$ is very  large, as discussed in the introduction,  the length of the bands of order
$k$ contained in the same band of order $k-1$ can differ from each other very much,
which makes the estimation very difficult. To overcome this difficulty,
  we propose the
following truncation  procedure.

Fix $0\le\varepsilon<1/12$.
Define
\begin{equation}\label{trunc}
\mathscr{E}_n(\varepsilon)=\{(e_{12},1,1)\}\bigcup \bigcup_{e\ne e_{12}}\{(e,\tau_e(n),l) :
(\tau_e(n)+1)\varepsilon< l < (\tau_e(n)+1)(1-\varepsilon)\}.
\end{equation}
It is seen that if $n\le 10,$ then $\mathscr{E}_n(\varepsilon)=\mathscr{E}_n.$ Define
\begin{equation}\label{truncO}
\Omega_n(\varepsilon)= \Omega_n\bigcap \left(\mathscr{E}_{a_1}^\ast
\times\prod_{k=2}^n\mathscr{E}_{a_k}(\varepsilon)\right)\ (n\ge 2),\quad
\Omega_\ast(\varepsilon)=\bigcup_{n\ge 2}\Omega_n(\varepsilon)
\end{equation}
and
\begin{equation}\label{truncE}
E_\varepsilon:=\bigcap_{n\geq 1}\bigcup_{w\in
\Omega_{n}(\varepsilon)}B_w.
\end{equation}
It is obvious that $E_\varepsilon\subset \Sigma.$ For this set we can also
define the associated pre-dimensions $s_\ast(\varepsilon)$.

\begin{prop}\label{dim-converge}
$s_\ast(\varepsilon)\to s_\ast$ when $\varepsilon\to 0.$
\end{prop}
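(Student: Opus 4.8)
The plan is to show the two one-sided inequalities $\limsup_{\varepsilon\to0}s_\ast(\varepsilon)\le s_\ast$ and $\liminf_{\varepsilon\to0}s_\ast(\varepsilon)\ge s_\ast$. The first is immediate: since $\Omega_n(\varepsilon)\subset\Omega_n$, each truncated covering is a sub-collection of the full covering, so $\sum_{w\in\Omega_n(\varepsilon)}|B_w|^\beta\le\sum_{w\in\Omega_n}|B_w|^\beta=b_{n,\beta}$. Hence the level-$n$ pre-dimension exponent $s_n(\varepsilon)$ (the root of $\sum_{w\in\Omega_n(\varepsilon)}|B_w|^{s}=1$) satisfies $s_n(\varepsilon)\le s_n$ for every $\varepsilon$, and therefore $s_\ast(\varepsilon)=\liminf_n s_n(\varepsilon)\le\liminf_n s_n=s_\ast$. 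So it remains to prove the lower bound, which is where the truncation parameter must be used quantitatively.

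For the lower bound I would compare the truncated sum $\sum_{w\in\Omega_n(\varepsilon)}|B_w|^\beta$ with $b_{n,\beta}$ from below, layer by layer. The relation \eqref{b-k-b-k-1} together with Lemma \ref{A-n-beta} controls how $b_{k,\beta}$ grows from $b_{k-1,\beta}$ through a factor that sums, over the children of a given band, the quantities $(\tau_e(a_k)+1)^{\beta-1}\csc^{2\beta}(l\pi/(\tau_e(a_k)+1))$ (via Lemma \ref{quo-deri} and bounded distortion, Corollary \ref{bdist}). Deleting the indices $l$ with $l/(\tau_e(a_k)+1)\notin(\varepsilon,1-\varepsilon)$ removes from this inner sum only the terms near the two endpoints; since $\sum_{l=1}^{p}(p+1)^{\beta-1}\csc^{2\beta}(l\pi/(p+1))$ is, up to constants independent of $p$, comparable to a fixed convergent-type quantity (the small-$l$ terms behave like $\sum_l l^{-2\beta}$, which for $\beta<1$ has tails controlled by $\varepsilon^{1-2\beta}$ when $\beta<1/2$ and is uniformly bounded when $\beta\ge1/2$), the deleted portion is at most $\rho(\varepsilon)$ times the full inner sum with $\rho(\varepsilon)\to0$ as $\varepsilon\to0$, uniformly in $p$ and in the edge type. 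Consequently there is a constant $c>0$ and a function $\rho(\varepsilon)\to 0$ such that each layer contributes at least a factor $(1-\rho(\varepsilon))$ of the untruncated layer, giving
$$
\sum_{w\in\Omega_n(\varepsilon)}|B_w|^\beta\ \ge\ c\,(1-\rho(\varepsilon))^{\,n}\,b_{n,\beta}.
$$
This is the heart of the matter and the step I expect to be the main obstacle: one must make the "endpoint contribution is negligible" claim uniform over all $a_k$ (which are unbounded) and handle the three admissible edge types $e_{21},e_{23},e_{31},e_{33}$ as well as the $e_{12}$ edge (which has only one child and is never truncated), and one must also fold in the bounded-distortion constants so that the comparison constants do not depend on $n$. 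The case split according to whether $\beta$ is below, equal to, or above $1/2$ in the tail estimate $\sum_{l>\varepsilon p} l^{-2\beta}$ is the delicate bookkeeping.

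Granting the displayed inequality, the conclusion is routine. Fix $\delta>0$. By definition of $s_\ast=\liminf_n s_n$ there are infinitely many $n$ with $s_n>s_\ast-\delta$, i.e. $b_{n,s_\ast-\delta}\ge b_{n,s_n}=1$ for such $n$ — actually one uses the stronger fact, available from \eqref{upper-bd} and the monotonicity of $\beta\mapsto b_{n,\beta}$, that for $\beta=s_\ast-\delta$ one has $b_{n,\beta}\ge 4^{\,\delta n/4}$ along a subsequence (this follows the same estimate as in the box-dimension argument above). Then for this $\beta$,
$$
\sum_{w\in\Omega_n(\varepsilon)}|B_w|^\beta\ \ge\ c\,(1-\rho(\varepsilon))^{\,n}\,4^{\,\delta n/4}\ \longrightarrow\ \infty
$$
provided $\varepsilon$ is chosen small enough that $(1-\rho(\varepsilon))\,4^{\,\delta/4}>1$. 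This forces $s_n(\varepsilon)\ge\beta=s_\ast-\delta$ along that subsequence, hence $s_\ast(\varepsilon)\ge s_\ast-\delta$ for all sufficiently small $\varepsilon$. Since $\delta>0$ is arbitrary, $\liminf_{\varepsilon\to0}s_\ast(\varepsilon)\ge s_\ast$, and combined with the first paragraph we obtain $s_\ast(\varepsilon)\to s_\ast$.
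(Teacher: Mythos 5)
Your skeleton matches the paper's (first $s_n(\varepsilon)\le s_n$, then a per-level lower bound of the form ``truncated sum $\ge(1-C\varepsilon)^n\times$ full sum'', then a conversion via $|B_w|\le 4^{1-n/2}$), but the step you yourself identify as the heart of the matter is justified incorrectly, and as written the justification would not go through. The child-to-parent length ratios are the \emph{reciprocals} of the derivative ratios in Lemma \ref{quo-deri}: by Corollary \ref{bdist}, $|B_{w\ast(e,p,l)}|/|B_w|\sim \sin^2\bigl(l\pi/(p+1)\bigr)/(p+1)$, so the per-level weights are $(p+1)^{-\beta}\sin^{2\beta}\bigl(l\pi/(p+1)\bigr)$ --- this is exactly the quantity $A_{n,\beta}$ of Lemma \ref{A-n-beta} --- and not $(\tau_e(a_k)+1)^{\beta-1}\csc^{2\beta}\bigl(l\pi/(\tau_e(a_k)+1)\bigr)$ as you wrote. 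With your $\csc^{2\beta}$ weights the endpoint terms are the \emph{largest}: for $\beta>1/2$ the terms $l=1$ and $l=p$ alone carry a fixed positive proportion of the inner sum (comparable to $1/\zeta(2\beta)$) independently of $\varepsilon$, and for $\beta=1/2$ the deleted proportion tends to $1$ as $p\to\infty$ for fixed $\varepsilon$; so no uniform-in-$p$ factor $\rho(\varepsilon)\to0$ exists for those weights, and your case split at $\beta=1/2$ is a symptom of this inversion. With the correct $\sin^{2\beta}$ weights the deleted terms are the smallest ones, the deleted proportion is $O(\varepsilon)$ uniformly in $p$ and in $\beta\in(0,1)$, and no case analysis is needed.

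There is also a structural omission: what must be compared level by level is not the one-step quantities entering \eqref{b-k-b-k-1}, but the total $|B|^{\beta}$-mass of the whole subtree of $\Omega_n(\varepsilon)$ below a deleted child versus a retained child, and these subtree masses do not factor across levels. The paper handles this by introducing the interpolating families $\Omega_n^{(j)}(\varepsilon)$, fixing the prefix $u\in\Omega_j$, and invoking bounded covariation (Theorem \ref{bco}) to get $\gamma_l\sim\gamma_{l_0}\,|B_{u\theta_l}|^{s_n}/|B_{u\theta_{l_0}}|^{s_n}\sim\gamma_{l_0}\sin^{2s_n}\bigl(l\pi/(a_{j+1}+1)\bigr)$, where $\gamma_l$ is the mass of the subtree through the $l$-th child; then the middle $\approx a_{j+1}/2$ indices each contribute $\gtrsim\gamma_{l_0}$ while the $\le 2\varepsilon a_{j+1}$ deleted indices each contribute $\lesssim\gamma_{l_0}$, which gives the factor $(1-C\varepsilon)$ per level uniformly in the unbounded $a_{j+1}$ (this is \eqref{omega-j-m}, hence \eqref{preclaim}). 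Bounded distortion and Lemma \ref{quo-deri} alone do not yield this comparison of subtree masses. Two smaller points: the paper only needs the inequality at $\beta=s_n$, where $b_{n,s_n}=1$, and then reads off $0\le s_n-s_n(\varepsilon)\le 2/n-\ln(1-C\varepsilon)/\ln 2$ directly from \eqref{upper-bd}, which avoids your fixed-$\beta$ detour; and in your final paragraph you need that $s_n>s_\ast-\delta$ holds for \emph{all} sufficiently large $n$ (which $\liminf$ does give), not merely along a subsequence, since a subsequential bound on $s_n(\varepsilon)$ would not control $s_\ast(\varepsilon)=\liminf_n s_n(\varepsilon)$; as stated that inference is a (fixable) non sequitur.
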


\begin{proof} We begin with the comparison of $s_n$ and $s_n(\varepsilon).$
By the definitions
$$
\sum_{w\in \Omega_{n}}|B_w|^{s_n}=1;\ \ \
\sum_{w\in \Omega_{n}(\varepsilon)}|B_w|^{s_n(\varepsilon)}=1.
$$
It is seen that $s_n(\varepsilon)\leq s_n$ for $n\in\mathbb{N}$.

\noindent{\bf Claim:} There exists a constant $C>0$ such that, for
any small $\varepsilon$,
\begin{equation}\label{preclaim}
\sum_{w\in \Omega_{n}(\varepsilon)}|B_w|^{s_n}\geq
(1-C\varepsilon)^n.
\end{equation}
We first show that the claim implies the result. In fact
 if the claim holds, then
 \begin{eqnarray*}
 (1-C\varepsilon)^n&\leq&\sum_{w\in
 \Omega_{n}(\varepsilon)}|B_w|^{s_n}\\
 &=&\sum_{\omega\in \Omega_{n}(\varepsilon)}|B_w|^{s_n(\varepsilon)}
|B_w|^{s_n-s_n(\varepsilon)}\\
&\leq&4\cdot 4^{-(s_n-s_n(\varepsilon))n/2}\sum_{w\in
\Omega_{n}(\varepsilon)}|B_w|^{s_n(\varepsilon)}\\
&=&4\cdot 4^{-(s_n-s_n(\varepsilon))n/2},
\end{eqnarray*}
where the second inequality is due to \eqref{upper-bd}.
Consequently
$$
0\leq s_n-s_n(\varepsilon)\leq \frac{2}{n}-\frac{\ln(1-C\varepsilon)}{\ln
2}.
$$
From this we can conclude that $s_\ast(\varepsilon)\to s_\ast$ when $\varepsilon\to0.$

Now we go back to the proof of  the claim. For this purpose we introduce the
following intermediate symbolic spaces. For $1\leq j\leq n-1$ define
$$
\Omega_{n}^{(j)}(\varepsilon):= \Omega_n\cap
\left(\mathscr{E}_{a_1}^\ast\times\prod_{l=2}^j
\mathscr{E}_{a_l}\times \prod_{l=j+1}^n \mathscr{E}_{a_l}(\varepsilon)\right).
$$
Thus $\Omega_{n}^{(1)}(\varepsilon)=\Omega_{n}(\varepsilon).$ We also
write $\Omega_{n}^{(n)}(\varepsilon):=\Omega_{n}$ to unify the
notation.

To prove \eqref{preclaim}, we only need to show that, for
$j=1,\cdots,n-1$,
\begin{equation}\label{omega-j-m}
\sum_{w\in
\Omega_{n}^{(j)}(\varepsilon)}|B_w|^{s_n}\geq(1-C\varepsilon)
\sum_{w\in \Omega_{n}^{(j+1)}(\varepsilon)}|B_w|^{s_n}.
\end{equation}

By the definition if $a_{j+1}\leq 10,$ then
$\mathscr{E}_{a_{j+1}}(\varepsilon)=\mathscr{E}_{a_{j+1}}$ and consequently
$\Omega_{n}^{(j)}(\varepsilon)=\Omega_{n}^{(j+1)}(\varepsilon)$, thus
\eqref{omega-j-m} is trivial.

Now assume $a_{j+1}>10.$  We can write
$$
\displaystyle\sum_{w\in \Omega_{n}^{(j+1)}(\varepsilon)}
|B_w|^{s_n}=\sum_{e\in\mathcal E} Z_e\ \ \ \text{ and }\ \ \
\displaystyle\sum_{w\in
\Omega_{n}^{(j)}(\varepsilon)}|B_w|^{s_n}=\sum_{e\in\mathcal E} Z_e(\varepsilon),
 $$
where for $e=e_{12}$
$$
Z_{e_{12}}=Z_{e_{12}}(\varepsilon)=\displaystyle \sum_{w\in
\Omega_{n}^{(j+1)}(\varepsilon),w_{j+1}=(e_{12},1,1)}
|B_w|^{s_n}
$$
and for $e\ne e_{12}$
$$
\begin{array}{lcll}
Z_e&=&\displaystyle\sum_{l=1}^{\tau_e(a_{j+1})}
\sum_{w\in \Omega_{n}^{(j+1)}(\varepsilon),w_{j+1}=(e,\tau_e(a_{j+1}),l)}
|B_w|^{s_n},\\
Z_e(\varepsilon)&=&\displaystyle\sum_{l=[a_{j+1}\varepsilon]}^{[a_{j+1}(1-\varepsilon)]}
\sum_{w\in \Omega_{n}^{(j+1)}(\varepsilon),w_{j+1}=(e,\tau_e(a_{j+1}),l)}
|B_w|^{s_n}
\end{array}
$$

Note that for $e\ne e_{12}$, $Z_e(\varepsilon)$ is different from $Z_e$
only in the range of the index $l$. To show \eqref{omega-j-m} it is
enough to show that
$$
Z_e(\varepsilon)\geq (1-C\varepsilon)Z_e, \ \ \ e\in\mathcal E.
$$

The case $e=e_{12}$ is trivial. Now we consider $e=e_{21}.$
 In this case we have $\tau_e(a_{j+1})=a_{j+1}+1.$Write $\theta_l=(e,\tau_e(a_{j+1}),l)$.
 Then we can rewirte $Z_e(\varepsilon)$ as
$$
Z_e(\varepsilon)=\displaystyle\sum_{l=[a_{j+1}\varepsilon]}^{[a_{j+1}(1-\varepsilon)]}
\sum_{u\theta_lv\in \Omega_{n}^{(j+1)}(\varepsilon) }
|B_w|^{s_n}
$$

 Fix
$u\in\Omega_{j}$, write
$$\displaystyle\gamma_l:=\sum_{
u\theta_lv\in\Omega_{n}^{(j+1)}(\varepsilon)}|B_{u\theta_lv}|^{s_n}.
$$
Let $l_0=[a_{j+1}/2].$ By Theorem \ref{bco} we
have
$$
\gamma_l=|B_{u\theta_l}|^{s_n}\sum_{
u\theta_lv\in\Omega_{n}^{(j+1)}(\varepsilon)}
\frac{|B_{u\theta_l
v}|^{s_n}}{|B_{u\theta_l}|^{s_n}}\ \ \sim\ \
\frac{|B_{u\theta_l}|^{s_n}}{|B_{u\theta_{l_0}}|^{s_n}}
\gamma_{l_0}.
$$
By Corollary \ref{bdist} and   Lemma \ref{quo-deri},
$$
\frac{|B_{u\theta_l}|}{|B_{u\theta_{l_0}}|}=
\frac{|B_{u\theta_l}|/|B_{u}|}{|B_{u\theta_{l_0}}|/|B_{u}|}
\sim \sin^2\frac{l\pi}{a_{j+1}+1}.$$
Then there exists $c>1$
independent to $l,j$ such that
$$
c^{-1}\gamma_{l_0}\sin^{2s_n}\frac{l\pi}{a_{j+1}+1}\le
\gamma_l\le c\gamma_{l_0}\sin^{2s_n}\frac{l\pi}{a_{j+1}+1}.
$$ So we
have
\begin{eqnarray*}
\sum_{l=1}^{a_{j+1}+1}\gamma_l&\ge&
c^{-1}\gamma_{l_0}\sum_{l=1}^{a_{j+1}+1}
\sin^{2s_n}\frac{l\pi}{a_{j+1}+1}\\
&\ge& c^{-1}\gamma_{l_0}\sum_{l=l_0/2}^{3l_0/2}
\sin^{2s_n}\frac{l\pi}{a_{j+1}+1}\\
&\ge&c^{-1}\gamma_{l_0}l_0\sin^{2s_n}\pi/4.
\end{eqnarray*}
Let $A=\{1\le
l\le a_{j+1}+1\ |\ l<[a_{j+1}\varepsilon]$ or $l>[a_{j+1}(1-\varepsilon)]\}$,
we have
$$
\sum_{l\in A}\gamma_l\le c\gamma_{l_0}\sum_{l\in A}
\sin^{2s_m}\frac{l\pi}{a_{j+1}+1}\le c\gamma_{l_0}\cdot
2a_{j+1}\varepsilon\cdot\sin^{2s_n}\pi/4.
$$
This implies for all $u\in\Omega_{j}$,
$$
\sum_{l=[a_{j+1}\varepsilon]}^{[a_{j+1}(1-\varepsilon)]}\gamma_l\ge
(1-4c^2\varepsilon)\sum_{l=1}^{a_{j+1}+1}\gamma_l,
$$
so we get $Z_e(\varepsilon)\geq (1-4c^2\varepsilon)Z_e.$

For other $e\ne e_{12}$, the proof is the same. Thus the proof is completed.
\end{proof}

\begin{prop}\label{dim-epsilon}
$\dim_H E_\varepsilon=s_*(\varepsilon)$.
\end{prop}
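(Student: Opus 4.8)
The strategy is the standard two-sided estimate: the upper bound $\dim_H E_\varepsilon\le s_*(\varepsilon)$ follows from the natural covering $\{\bigcup_{w\in\Omega_n(\varepsilon)}B_w\}_n$ together with the definition of $s_*(\varepsilon)$, so the substantive content is the lower bound $\dim_H E_\varepsilon\ge s_*(\varepsilon)$, which I would obtain by the mass distribution principle. First I would recall the upper bound: for $s>s_*(\varepsilon)$ there are infinitely many $n$ with $s>s_n(\varepsilon)$, and since by \eqref{upper-bd} every $B_w$ with $|w|=n$ has $|B_w|\le 4^{1-n/2}\to 0$, the sum $\sum_{w\in\Omega_n(\varepsilon)}|B_w|^s\le 4^{1-(s-s_n(\varepsilon))n/2}\sum_{w\in\Omega_n(\varepsilon)}|B_w|^{s_n(\varepsilon)}=4^{1-(s-s_n(\varepsilon))n/2}$ tends to $0$ along that subsequence; hence $\mathcal H^s(E_\varepsilon)=0$ for all $s>s_*(\varepsilon)$, giving $\dim_H E_\varepsilon\le s_*(\varepsilon)$.

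For the lower bound, fix $s<s_*(\varepsilon)$; then there is $N$ with $s<s_n(\varepsilon)$ for all $n\ge N$, equivalently $b_{n,s}^{(\varepsilon)}:=\sum_{w\in\Omega_n(\varepsilon)}|B_w|^{s}\ge 1$ for all large $n$ (since $s\mapsto\sum_{w\in\Omega_n(\varepsilon)}|B_w|^{s}$ is decreasing and equals $1$ at $s_n(\varepsilon)$). I would then construct a probability measure $\mu$ on $E_\varepsilon$ assigning to each cylinder $B_w$, $w\in\Omega_n(\varepsilon)$, a mass comparable to $|B_w|^s/b_{n,s}^{(\varepsilon)}$. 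The existence of such a $\mu$ with uniformly bounded multiplicative constants is exactly the kind of Gibbs-like measure furnished (in the truncated setting) by the argument behind Theorem \ref{gibbs0}; the key input making the construction consistent across scales is the bounded covariation estimate, Theorem \ref{bco}, which guarantees $|B_{wu}|/|B_w|\sim|B_{\widetilde wu}|/|B_{\widetilde w}|$ and hence that the conditional masses $\mu(B_{wu})/\mu(B_w)$ can be chosen proportional to $|B_{wu}|^s$ summed appropriately, uniformly in the ambient word $w$. Here the truncation is essential: on $\Omega_n(\varepsilon)$ the contraction ratios of the children of a type-III band satisfy $|B_{wu}|/|B_w|\gtrsim\varepsilon^2 a_{k+1}^{-1}$, so they differ by only a bounded-in-$\varepsilon$ factor from the "typical" child ratio, and one gets clean two-sided control of $b_{n+1,s}^{(\varepsilon)}/b_{n,s}^{(\varepsilon)}$ (an analogue of Lemma \ref{A-n-beta} on the truncated space).

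With $\mu$ in hand, the final step is the local mass estimate: for $x\in E_\varepsilon$ and small $r>0$, choose $n$ so that $B_w\subset(x-r,x+r)$ for the order-$n$ cylinder $B_w\ni x$ but the order-$(n-1)$ cylinder has length comparable to or larger than $r$; bounded distortion (Corollary \ref{bdist}) controls the ratio of lengths of a band to that of its parent from below, and combined with the separation of distinct bands in $\mathscr G_n$ it shows that $(x-r,x+r)$ meets only a bounded number of order-$n$ cylinders, each of $\mu$-mass $\lesssim |B_w|^s\lesssim r^s$ (using $b_{n,s}^{(\varepsilon)}\ge 1$). Hence $\mu(B(x,r))\lesssim r^s$, and the mass distribution principle yields $\mathcal H^s(E_\varepsilon)>0$, so $\dim_H E_\varepsilon\ge s$; letting $s\uparrow s_*(\varepsilon)$ finishes the proof. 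I expect the main obstacle to be the construction and verification of the weak Gibbs-like measure on the truncated space $\Omega_\ast(\varepsilon)$ with constants independent of $\varepsilon$ (or at worst harmlessly dependent on it), since the unbounded $a_k$ force one to track how the truncation interacts with the variation/covariation estimates; everything downstream (the covering argument, distortion, mass distribution) is then routine.
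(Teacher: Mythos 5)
Your overall strategy (mass distribution principle with a Gibbs-like measure on the truncated set) is the paper's strategy, but the key local estimate in your last step has a genuine gap. You claim that, with $n$ chosen so that the order-$n$ cylinder $B_w\ni x$ satisfies $B_w\subset B(x,r)$ while $|B_{w^-}|\gtrsim r$, the ball $B(x,r)$ meets only a \emph{bounded} number of order-$n$ cylinders, each of mass $\lesssim r^s$. This is false when $\{a_k\}$ is unbounded: the truncation only guarantees that the children of a type $II$/$III$ band $B_{w^-}$ all have length comparable to $|B_{w^-}|/a_n$ (up to $\varepsilon$-dependent constants), not comparable to $|B_{w^-}|$ itself, so an interval of length $\sim|B_{w^-}|$ can meet on the order of $a_n$ order-$n$ cylinders, and $a_n$ is unbounded; bounded distortion gives no uniform lower bound on the child-to-parent ratio. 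Summing your per-cylinder bound then only yields $\mu(B(x,r))\lesssim a_n\,r^s$, which is useless. The paper closes exactly this hole by (i) counting $\#\Upsilon\lesssim |U|a_{k+1}/|B_{w^-}|$ children meeting $U$ (this is where the truncation enters), (ii) invoking the quasi-multiplicativity $b_{k+1,\beta}(\varepsilon)/b_{k,\beta}(\varepsilon)\sim a_{k+1}^{1-\beta}$ from \eqref{b-k-b-k-1-1} together with $b_{k,\beta}(\varepsilon)\geq 1$, and (iii) combining the resulting bound $d_1$ with the trivial bound $d_2=\mu(B_{w^-})$ via $\min\{d_1,d_2\}\le d_1^\beta d_2^{1-\beta}\lesssim|U|^\beta$. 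You mention the analogue of Lemma \ref{A-n-beta} only when constructing the measure, but it is indispensable precisely in this local estimate; without it your argument does not close.

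A secondary inaccuracy: the measure you posit, with $\mu(B_w)\asymp |B_w|^s/b_{n,s}(\varepsilon)$ uniformly over all $w\in\Omega_n(\varepsilon)$, does not exist in the unbounded case. A type $(k,I)$ band has a single child whose relative length $\zeta_{a_{k+1}}$ is exponentially small in $a_{k+1}$, while a type $II$/$III$ band has $\sim a_{k+1}$ children of relative length $\sim a_{k+1}^{-1}$, so consistency forces type-dependent normalizations; this is exactly the ``cannot distribute mass evenly'' obstruction the paper emphasizes, and Theorem \ref{gibbs} only delivers the weak form \eqref{gibbs1}--\eqref{gibbs3} with extra factors such as $\zeta_{a_{k+1}}^\beta/a_{k+1}^{1-\beta}$ for type $I$ bands. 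Since those factors are at most $1$, the one-sided bound $\mu(B_w)\lesssim|B_w|^\beta/b_{k,\beta}(\varepsilon)$ you actually use survives, so this point is repairable; but the case analysis over types (in particular the type $I$ case, where one must pass to the unique child and use $b_{k+1,\beta}(\varepsilon)$) still has to be carried out as in the paper's proof of its claim.
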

\begin{proof}
It is known  that $\dim_H E_\varepsilon\leq s_*(\varepsilon)$, so it
only need to show $\dim_H E_\varepsilon\ge s_*(\varepsilon).$ It is trivial if $s_\ast(\varepsilon)=0$, we thus assume
$s_\ast(\varepsilon)>0.$
Fix any $0<\beta<s_*(\varepsilon)$ and define
\begin{equation}\label{b-k-beta-epsilon}
b_{k,\beta}(\varepsilon):=\sum_{\omega\in\Omega_{k}(\varepsilon)}|B_\omega|^\beta.
\end{equation}
Then there exists $K\in\N$ such that $b_{k,\beta}(\varepsilon)\ge 1$ for any $k\ge K.$

By \eqref{b-k-b-k-1} and Lemma \ref{A-n-beta}, we have
\begin{equation}\label{b-k-b-k-1-1}
\frac{b_{k-1,\beta}(\varepsilon)}{b_{k,\beta}(\varepsilon)}\sim
a_k^{\beta-1}.
\end{equation}
By Theorem \ref{gibbs}, we can
 construct a Gibbs like measure $\mu_{\beta,\varepsilon}$
supported on $E_\varepsilon$.
Define $\delta_0:= \min\{|B_w|: w\in \Omega_{K+1}(\varepsilon)\}$.

\noindent {\bf Claim: }
 for any  open
interval $U\subset \mathbb{R}$ with $|U|\le \delta_0/2$ we have
$$\mu_{\beta,\varepsilon}(U)\lesssim |U|^\beta.$$

\noindent $\lhd$
Take any  open interval $U\subset\mathbb{R}$ with $|U|\le \delta_0/2$, define
$$\Xi:=\{w\in \Omega_\ast(\varepsilon):B_w\cap U\ne \emptyset,\
 |B_w|\le |U|< |B_{w^-}|\},$$
where $w^-$ is gotten  by deleting  the last symbol of
$w$. At first we claim that $|w|\ge K+1$ for any $w\in\Xi.$
In fact if otherwise, there exists $\tilde w=wu\in \Omega_{K+1}(\varepsilon)$. Then
$$
\delta_0\le |B_{\tilde w}|\le |B_{w}|\le |U|\le \delta_0/2,
$$
which is a contradiction.

Notice that by the natural covering property,
any two generating bands are either disjoint or one of them is included in another,
thus we conclude that  $\#\{w^-:w\in\Xi\}\leq 2.$ Thus to show the claim we
only need to show that
$$\mu_{\beta,\varepsilon}(U\cap B_{w^{-}})\lesssim |U|^{\beta},
 \ \ (\forall w\in \Xi).$$

If $B_{w^{-}}$ is of type $(k,I)$, then by \eqref{gibbs1},
$$
\mu_{\beta,\varepsilon}(U\cap B_{w^{-}})\le
\mu_{\beta,\varepsilon}(B_{w^{-}}) \sim
\frac{|B_w|^{\beta}}{b_{k+1,\beta}(\varepsilon)}\le |U|^{\beta}.
$$

Now we assume $B_{w^{-}}$ is a band of type $(k,II)$ or
$(k,III)$. In this case, $w$ has the form $w^-
(e,p,l)$ with $e\ne e_{12}$ and $p=\tau_e(a_{k+1}).$
By bounded variation and
 Lemma \ref{quo-deri}, if $a_{k+1}\le 10$, then
 $$
 \frac{|B_w|}{|B_{w^-}|}\sim \frac{1}{a_{k+1}}.
 $$
If $a_{k+1}>10$, then we have $\varepsilon\leq
l/a_{k+1}\leq 1-\varepsilon.$ Consequently, also by bounded variation and Lemma \ref{quo-deri},
there exists constants $C>1$ not depending on $\varepsilon$ such that
$$
\frac{\varepsilon^2}{C a_{k+1}}\le \frac{\sin^2\varepsilon\pi}{C a_{k+1}}\le
\frac{|B_\omega|}{|B_{\omega^-}|}\le \frac{C}{a_{k+1}}.
$$
So in both cases we have
\begin{equation}\label{two-interval}
\frac{|B_w|}{|B_{w^-}|}\sim \frac{1}{a_{k+1}}.
\end{equation}

  Let
$$
\Upsilon:=\{u: |u|=k+1; B_u\subset B_{w^-};
B_u\cap U\ne \emptyset\}.
$$
Then by \eqref{two-interval}, \eqref{gibbs2} and \eqref{gibbs3}
\begin{eqnarray*}
\mu_{\beta,\varepsilon}(U\cap
B_{w^-})&=&\sum_{u\in\Upsilon}
\mu_{\beta,\varepsilon}(B_u)\leq
\#\Upsilon\cdot\max_{u\in\Upsilon}
\mu_{\beta,\varepsilon}(B_u)\\
&\lesssim&\frac{|U|a_{k+1}}{
|B_{w^-}|}\cdot
\frac{(|B_{w^-}|/a_{k+1})^\beta}{b_{k+1,\beta}(\varepsilon)}\\
&=&\frac{|U| a_{k+1}^{1-\beta}}{|B_{w^-}|^{1-\beta}
\  b_{k+1,\beta}(\varepsilon)} =:d_1.
\end{eqnarray*}

On the other hand, we trivially have
$$
\mu_{\beta,\varepsilon}(U\cap B_{\omega^{-}})\le
\mu_{\beta,\varepsilon}(B_{\omega^{-}})\lesssim
\frac{|B_{\omega^-}|^\beta}{b_{k,\beta}(\varepsilon)}=:d_2.
$$
So, we have
$$\begin{array}{rcl}
\mu_{\beta,\varepsilon}(U\cap B_{\omega^{-}})
&\lesssim& \min\{d_1,d_2\}\leq d_1^\beta d_2^{1-\beta}\\
 &=&|U|^\beta
\left(\frac{a_{k+1}^{(1-\beta)}b_{k,\beta}(\varepsilon)}
{b_{k+1,\beta}(\varepsilon)}\right)^\beta \frac{1}{b_{k,\beta}(\varepsilon)}\\
&\lesssim&|U|^\beta
\end{array}
$$
where the last inequality is due to \eqref{b-k-b-k-1-1} and
$b_{k,\beta}(\varepsilon)\geq 1$. \hfill $\rhd$

Then by the mass distribution principle we conclude that $\dim_H
E_\varepsilon\geq \beta.$ Since $\beta<s_\ast(\varepsilon)$ is arbitrary,
we get $\dim_H E_\varepsilon\geq s_\ast(\varepsilon).$
\end{proof}

\noindent {\bf Proof of $\dim_H\, \Sigma =s_\ast(V)$.}\
By \eqref{dimineq}, we only need to prove
$$\dim_H\, \Sigma\ge s_\ast(V) .$$
  Since $E_\varepsilon\subset
\Sigma,$ by Proposition \ref{dim-epsilon} we have
$$
\dim_H \Sigma\geq \dim_H E_\varepsilon= s_\ast(\varepsilon).
$$
By Proposition \ref{dim-converge} we get $\dim_H
\Sigma\geq s_\ast(V).$
\hfill $\Box$

\section{Proof of \eqref{dim-box} and  Theorem   \ref{limfor}}\label{thm2}

The main result of this section is Proposition \ref{upper-lower}. Formula \eqref{dim-box} and Theorem \ref{limfor} are  direct consequences  of this proposition.

 Let us do some preparation. Recall that we have defined $K_\ast(\alpha)$ and $K^\ast(\alpha)$ in \eqref{K-u-l}. To simplify the notation, we write $K_\ast=K_\ast(\alpha)$ and  $K^\ast=K^\ast(\alpha)$.
It is obvious that $1\le K_\ast\le K^\ast.$

Throughout this section, for a matrix $A\in {\rm M}(3,\R)$ we define
$$
\|A\|:=\max\{|a_{ij}|:1\leq i,j\leq 3\}.
$$
For any $n\ge1$, $0\le x\le1$ define
\begin{equation}\label{rnb}
\begin{array}{l}
{\mathbf R}_n(x) := \begin{pmatrix}
0&x^{(a_n-1)}&0\\
(a_n+1)x&0&a_n x\\
a_n x&0&(a_n-1)x
\end{pmatrix},\\
\mathbf{S}_n(x) :={\mathbf R}_1(x)\cdots\mathbf{R}_n(x).
\end{array}
\end{equation}
For $x\in[0,1]$ we define
$$
\psi(x):=\liminf_{n\to\infty}\|
\mathbf{S}_n(x)\|^{1/n},\ \ \
\phi(x):=\limsup_{n\to\infty}\|
\mathbf{S}_n(x)\|^{1/n}.
$$
Then  $\psi(x)\leq \phi(x)$ for
$x\in[0,1].$
It is direct to get the following inequality,
assume $0<x<y\leq 1$, then for any $k>0$,
\begin{equation}\label{2-2}
\mathbf{R}_{k}(x)\mathbf{R}_{k+1}(x)\leq \frac{x}{y}\cdot
\mathbf{R}_{k}(y)\mathbf{R}_{k+1}(y).
\end{equation}

\begin{lem}\label{infsup}
(1) \ If $ K_\ast<\infty,$ then  $\psi:[0,1]\to \mathbb{R}^+$ is
strictly increasing  and
$$
\left(\frac{K_\ast}{2}\vee\sqrt[3]{2}\right)x\leq\psi(x)\leq K_\ast
\sqrt{2x}, \quad \forall x\in[0,1].
$$

If $ K_\ast=\infty$, then $\psi(0)=0$ and $\psi(x)=\infty$ for any
$x\in(0,1].$

(2) \ If $ K^\ast<\infty,$ then  $\phi:[0,1]\to \mathbb{R}^+$ is
strictly increasing and
$$
\left(\frac{K^\ast}{2}\vee\sqrt[3]{2}\right) x\leq\phi(x)\leq
K^\ast\sqrt{2x},\quad \forall x\in[0,1].
$$

If $ K^\ast=\infty$, then $\phi(0)=0$ and $\phi(x)=\infty$ for any
$x\in(0,1].$
\end{lem}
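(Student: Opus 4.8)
The plan is to analyze the growth of the products $\mathbf{S}_n(x) = \mathbf{R}_1(x)\cdots\mathbf{R}_n(x)$ by comparing them, entrywise, with products of a constant matrix scaled by $x$. First I would observe that $\mathbf{R}_n(x)$ has the shape $x \cdot \mathbf{A}_n$ where $\mathbf{A}_n$ depends on $a_n$, except the $(1,2)$-entry is $x^{a_n-1}$ rather than $x \cdot(\text{const})$. Since $0 \le x \le 1$, for $a_n \ge 1$ we have $x \le x^{a_n-1}$ when $a_n\ge 2$ gives the wrong direction — so I need to be careful: $x^{a_n-1}\ge x$ iff $a_n\le 2$; in general $x^{a_n-1}$ can be much larger than $x$. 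The key structural point is that the $(1,2)$ entry feeds into position $2$, and row $2$ and row $3$ carry the dominant $a_n$-growth, so a two-step product $\mathbf{R}_k(x)\mathbf{R}_{k+1}(x)$ already has all entries bounded below by a constant multiple of $x^2 \prod$-type quantities involving $a_k, a_{k+1}$. Using \eqref{2-2}, which gives monotonicity in $x$ of the two-step products, I can reduce both the upper and lower estimates to the case $x=1$ up to the scaling factors $(x/y)$ accumulated over $\lfloor n/2\rfloor$ steps, i.e. up to $x^{n/2}$ roughly; this is what produces the $\sqrt{2x}$ versus linear-in-$x$ gap in the bounds.

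Next I would pin down the $x=1$ behavior. At $x=1$, $\mathbf{R}_n(1)$ is a nonnegative integer matrix with entries controlled by $a_n$; its operator norm (max-entry norm) satisfies $\|\mathbf{R}_n(1)\| = a_n+1 \asymp a_n$. For products of nonnegative matrices the max-norm is submultiplicative up to a dimensional constant and supermultiplicative along the dominant path, so $\|\mathbf{S}_n(1)\| \asymp \prod_{i=1}^n a_i$ up to constants that are absorbed by the $1/n$-th root in the $\liminf/\limsup$. This yields $\psi(1) = K_\ast$ and $\phi(1) = K^\ast$ (interpreting $=\infty$ appropriately). For general $x\in(0,1]$, combining the entrywise lower bound $\mathbf{R}_k(x)\mathbf{R}_{k+1}(x) \ge c\, x^2 (\text{nonneg matrix with entries} \asymp a_k a_{k+1})$ — where the $x^2$ comes from the two genuinely $x$-linear entries used in the path, never from the $x^{a_n-1}$ entry — with the product structure gives $\|\mathbf{S}_n(x)\|^{1/n} \gtrsim x \cdot (\prod a_i)^{1/n}$, whence the lower bounds $\psi(x) \gtrsim (K_\ast/2 \vee \sqrt[3]2)\,x$; the $\sqrt[3]{2}$ term accounts for the fact that even when $a_i\equiv 1$ the matrices $\mathbf{R}_n(x)$ still have a Perron eigenvalue bounded below (a $3\times 3$ primitive structure forcing cube-root growth). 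The upper bound $\psi(x) \le K_\ast\sqrt{2x}$ comes from a cruder estimate: $\|\mathbf{R}_n(x)\| \le (a_n+1)\sqrt{x}$ when we only use $x^{a_n-1}\le \sqrt x$ for... — again one must split on $a_n=1$ versus $a_n\ge 2$, and the clean bound $x^{a_n-1}\le x^{\,?}$ together with $x\le\sqrt x$ gives each factor $\le (a_n+1)\sqrt{2x}$, hence $\|\mathbf{S}_n(x)\|^{1/n}\le \sqrt{2x}\,(\prod(a_i+1))^{1/n} \to K_\ast\sqrt{2x}$.

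For strict monotonicity of $\psi$ (resp. $\phi$) on $[0,1]$ when $K_\ast<\infty$ (resp. $K^\ast<\infty$): given $0<x<y\le 1$, iterating \eqref{2-2} over consecutive disjoint pairs of indices gives $\mathbf{S}_{2m}(x) \le (x/y)^m \mathbf{S}_{2m}(y)$ entrywise, so $\psi(x)\le (x/y)^{1/2}\psi(y) < \psi(y)$ provided $\psi(y)$ is finite and positive, which the two-sided bounds guarantee; the case of odd indices is handled by inserting one extra factor and noting it changes the $(2m)$-th root negligibly. The $x=0$ statement is immediate since $\mathbf{R}_n(0)$ has only the $(2,3)$ and... actually only entries that vanish except possibly the $(1,2)$ entry $0^{a_n-1}$, which is $1$ iff $a_n=1$ — so one checks directly that $\mathbf{S}_n(0) = 0$ for all large $n$ unless $a_n\equiv 1$ eventually, and in the latter (bounded) case the $a_n\equiv1$ sub-case must be inspected separately; in all cases $\|\mathbf{S}_n(0)\|^{1/n}\to 0$. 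Finally, when $K_\ast = \infty$ (resp. $K^\ast=\infty$), the lower bound $\|\mathbf{S}_n(x)\|^{1/n}\gtrsim x(\prod a_i)^{1/n}$ forces $\psi(x)=\infty$ (resp. $\phi(x)=\infty$) for every $x>0$.

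I expect the main obstacle to be the careful bookkeeping around the anomalous $(1,2)$-entry $x^{a_n-1}$: it prevents a clean $\mathbf{R}_n(x) = x\mathbf{A}_n$ factorization, and one must verify that the dominant growth path through the product genuinely avoids (or only briefly touches) this entry so that the correct power of $x$ — linear for the lower bound, square-root for the upper bound — is extracted. The secondary technical point is establishing the two-sided submultiplicativity of the max-entry norm for these specific nonnegative matrices so that the $1/n$-th roots converge to the advertised continued-fraction quantities.
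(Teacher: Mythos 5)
Your outline identifies the right objects and even names the crux correctly, but it does not resolve it, and the resolution is the whole content of the lower bound. You write that the dominant path through the product $\mathbf{S}_n(x)$ must ``genuinely avoid (or only briefly touch)'' the anomalous $(1,2)$-entry $x^{a_n-1}$, and you extract the factor $x^n$ from ``the two genuinely $x$-linear entries used in the path, never from the $x^{a_n-1}$ entry.'' This cannot work as stated: the only way to avoid state $1$ forever is the path $2\to3\to3\to3\to\cdots$, whose weights include the $(3,3)$-entries $(a_j-1)x$, and these vanish whenever $a_j=1$ (e.g.\ for the golden mean every factor dies). So any admissible path of positive weight must repeatedly pass through state $1$ and hence repeatedly use the entry $x^{a_j-1}$, which for large $a_j$ is far smaller than $x$ and would destroy the linear-in-$x$ lower bound. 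The paper's proof supplies the missing idea: the path is chosen \emph{adaptively}, entering state $1$ at step $j$ only when $a_{j+1}\le 2$, so that the $(1,2)$-entry actually used is $x^{a_{j+1}-1}\in\{1,x\}\ge x$; when $a_{j+1}>2$ the path goes to state $3$ instead and collects the weight $a_{j+1}x$ or $(a_{j+1}-1)x\ge (a_{j+1}/2)x$. This look-ahead construction is what simultaneously yields $\|\mathbf{S}_n(x)\|\ge 2^{[n/3]}x^n$ and $\|\mathbf{S}_n(x)\|\ge a_1\cdots a_n 2^{-n}x^n$, i.e.\ both halves of $\psi(x)\ge(\frac{K_\ast}{2}\vee\sqrt[3]{2})x$. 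Without it your lower bound is unproved.

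Two of your intermediate claims are also false and would need repair. First, the two-step product $\mathbf{R}_k(x)\mathbf{R}_{k+1}(x)$ is \emph{not} entrywise bounded below by $c\,x^2a_ka_{k+1}$: its $(1,2)$-entry is identically $0$ (row $1$ of $\mathbf{R}_k$ is supported on column $2$, while column $2$ of $\mathbf{R}_{k+1}$ is supported on row $1$), so the claimed ``supermultiplicativity along the dominant path'' cannot be set up at the level of two-step blocks. Second, the single-factor bound $\|\mathbf{R}_n(x)\|\le(a_n+1)\sqrt{2x}$ fails when $a_n=1$ and $x<1/8$, since then the $(1,2)$-entry equals $1$; the correct upper bound pairs consecutive factors, using $\mathbf{R}_n(x)\mathbf{R}_{n+1}(x)\le 2xa_na_{n+1}J$, so that $x^{a_n-1}$ is always multiplied by an $x$-linear entry before being estimated. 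Relatedly, per-step constants in submultiplicativity of the max-entry norm accumulate as $C^n$ and are \emph{not} absorbed by the $n$-th root, so your assertion $\psi(1)=K_\ast$ exactly is unjustified (only two-sided bounds with multiplicative constants survive, which is what the lemma claims). The strict-monotonicity and $K_\ast=\infty$ parts of your argument are fine and match the paper, but both rest on the unproved linear lower bound.
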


\begin{proof} We only prove (1), since the proof of (2) is analogous.
It is easy to check that $\psi(0)=0.$ Define
\begin{equation}\label{allone}
 J=\left(
\begin{array}{ccc}
1&1&1\\
1&1&1\\
1&1&1
\end{array}
\right).
\end{equation}

At first assume $K_\ast<\infty.$ It is ready to check the following
inequality
$$
\mathbf{R}_n(x)\mathbf{R}_{n+1}(x)\leq 2xa_na_{n+1}J.
$$
Consequently we have
$$\|\mathbf{R}_n(x)\mathbf{R}_{n+1}(x)\|\le 2xa_na_{n+1},$$
which implies $\psi(x)\le K_*\sqrt{2x}$.

\noindent{\bf Claim:} There exists a path $i_0i_1\cdots
i_n\in\{1,2,3\}^{n+1}$ such that for $k=1,\cdots, n$,
$$
\left(\mathbf{R}_k(x)\right)_{i_{k-1}i_k}\geq (1\vee \frac{a_k}{2})
x,
$$  and for $k=1,\cdots, n-2$,
$$
\left(\mathbf{R}_k(x)\right)_{i_{k-1}i_k}
\left(\mathbf{R}_{k+1}(x)\right)_{i_{k}i_{k+1}}
\left(\mathbf{R}_{k+2}(x)\right)_{i_{k+1}i_{k+2}}\geq 2x^3.
$$

\noindent $\lhd$ Take a path $i_0i_1\cdots i_n\in\{1,2,3\}^{n+1}$ as
follows:
\begin{itemize}
\item take $i_0=2$;
\item for $0<j\le n$, if $i_{j-1}=1$, then take $i_j=2$;
\item for $0<j\le n$, if $i_{j-1}=2$ or $3$ and $a_{j+1}>2$, then take $i_j=3$;
\item for $0<j\le n$, if $i_{j-1}=2$ or $3$ and $a_{j+1}\le2$, then take $i_j=1$.
\end{itemize}

Fix $1\leq k\leq n$ let us discuss the following cases:

\noindent {\bf Case 1}: $i_{k-1}i_k=12.$ By the way we choose the
path, we have $a_{k}\leq 2$. Thus
$$
(\mathbf{R}_k(x))_{i_{k-1}i_k}=(\mathbf{R}_k(x))_{12}=x^{a_k-1}\geq
(1\vee \frac{a_k}{2})x.
$$

\noindent {\bf Case 2}: $i_{k-1}i_k=21.$  Thus
$$
(\mathbf{R}_k(x))_{i_{k-1}i_k}=(\mathbf{R}_k(x))_{21}=(a_k+1)x\geq
(2\vee a_k)x.
$$

\noindent {\bf Case 3}: $i_{k-1}i_k=23.$  Thus
$$
(\mathbf{R}_k(x))_{i_{k-1}i_k}=(\mathbf{R}_k(x))_{23}=a_kx\geq x.
$$

\noindent {\bf Case 4}: $i_{k-1}i_k=31$ or $33.$   By the way we
choose the path, we have $a_k>2$. Thus
$$
(\mathbf{R}_{k}(x))_{i_{k-1}i_{k}}=
\begin{cases}
a_{k}x& i_{k}=1\\
(a_{k}-1)x& i_{k}=3
\end{cases}
 \geq (2\vee \frac{a_k}{2})x.
$$

Since all the possible sequences of $i_{k-1}i_ki_{k+1}i_{k+2}$ are
\begin{eqnarray*}
&&\{1212; 1231;1233; 2121;2123; 2312;2331;2333;\\
&&\ \ 3121;3123; 3312;3331;3333\},
\end{eqnarray*}
by the conclusions of four cases above we get the result. \hfill $\rhd$

Thus by the claim above, on the one hand we have
$$
\|\mathbf{R}_1(x)\cdots \mathbf{R}_n(x)\| \ge (\mathbf{R}_1(x))_{i_0
i_1}\cdots (\mathbf{R}_n(x))_{i_{n-1} i_n} \ge 2^{[n/3]}x^n,
$$
which implies $\psi(x)\ge \sqrt[3]{2}\ x$. On the other hand we get
\begin{equation}\label{rn}
\|\mathbf{R}_1(x)\cdots \mathbf{R}_n(x)\| \ge (\mathbf{R}_1(x))_{i_0
i_1}\cdots (\mathbf{R}_n(x))_{i_{n-1} i_n} \ge a_1\cdots
a_n2^{-n}x^n,
\end{equation}
which implies $\psi(x)\ge K_\ast x/2$.

Now we are going to show that $\psi$ is strictly increasing.
By the definition of $\psi$ and \eqref{2-2} we get
$$\psi(x)\leq (\frac{x}{y})^{1/2}\psi(y).$$
Since $\psi(x)\geq \sqrt[3]{2}\ x$ we have  $\psi(x)> 0$ when $x>0$,
thus we conclude that $\psi$ is strictly increasing.

If $K_\ast=\infty,$  by \eqref{rn} we have $\psi(x)=\infty$ for any
$x\in(0,1]$.
\end{proof}

Due to the strictly increasing property of $\psi$ and $\phi$ we
can define
\begin{equation}\label{f-u-l-alpha}
\begin{cases}
f_\ast(\alpha)&:=\inf\{x>0: \psi(x)\geq 1\}=\sup\{x\geq 0:\psi(x)\leq 1\};\\
f^\ast(\alpha)&:=\inf\{x>0: \phi(x)\geq 1\}=\sup\{x\geq 0:\phi(x)
\leq 1\}.
\end{cases}
\end{equation}

\begin{cor}\label{K-ast}
If $K_\ast<\infty$, then
$$
f_\ast(\alpha)\in\left[\frac{1}{2K_\ast^2},\frac{2}{K_\ast}\wedge
\frac{1}{\sqrt[3]{2}}\right]\subset(0,1);
$$ if
$K_\ast=\infty$, then $f_\ast(\alpha)=0$.

If $K^\ast<\infty$, then
$$
f^\ast(\alpha)\in\left[\frac{1}{2K^{\ast2}},\frac{2}{K^\ast}\wedge
\frac{1}{\sqrt[3]{2}}\right]\subset(0,1);
$$
 if $K^\ast=\infty$, then
$f^\ast(\alpha)=0$.
\end{cor}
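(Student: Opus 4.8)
The statement to prove is Corollary~\ref{K-ast}, which gives explicit bounds on $f_\ast(\alpha)$ and $f^\ast(\alpha)$ in terms of $K_\ast$ and $K^\ast$. My plan is to derive it as a direct consequence of Lemma~\ref{infsup} together with the definitions \eqref{f-u-l-alpha}. Since the two halves (the $\ast$ and the $^\ast$ versions) are completely parallel, I would only write out the argument for $f_\ast(\alpha)$.

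First, suppose $K_\ast<\infty$. Then Lemma~\ref{infsup}(1) tells us that $\psi$ is strictly increasing on $[0,1]$ and satisfies the two-sided bound $(\frac{K_\ast}{2}\vee\sqrt[3]{2})x\le\psi(x)\le K_\ast\sqrt{2x}$. To locate $f_\ast(\alpha)=\sup\{x\ge 0:\psi(x)\le 1\}$, I would use these inequalities to sandwich the level set $\{\psi\le 1\}$. From the lower bound, $\psi(x)\le 1$ forces $(\frac{K_\ast}{2}\vee\sqrt[3]{2})x\le 1$, i.e. $x\le \frac{2}{K_\ast}\wedge\frac{1}{\sqrt[3]{2}}$; hence $f_\ast(\alpha)\le \frac{2}{K_\ast}\wedge\frac{1}{\sqrt[3]{2}}$. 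From the upper bound, whenever $K_\ast\sqrt{2x}\le 1$, i.e. $x\le\frac{1}{2K_\ast^2}$, we have $\psi(x)\le 1$, so $f_\ast(\alpha)\ge\frac{1}{2K_\ast^2}$. One should also note $\frac{1}{2K_\ast^2}\le\frac{2}{K_\ast}\wedge\frac{1}{\sqrt[3]{2}}$ (which holds since $K_\ast\ge 1$), so the claimed interval is nonempty, and that $\frac{2}{K_\ast}\wedge\frac{1}{\sqrt[3]{2}}<1$ and $\frac{1}{2K_\ast^2}>0$, giving the inclusion in $(0,1)$.

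Second, if $K_\ast=\infty$, Lemma~\ref{infsup}(1) gives $\psi(0)=0$ and $\psi(x)=\infty$ for every $x\in(0,1]$. Thus $\{x\ge 0:\psi(x)\le 1\}=\{0\}$, and $f_\ast(\alpha)=\sup\{0\}=0$. The argument for $f^\ast(\alpha)$ is identical, using Lemma~\ref{infsup}(2) and $\phi$ in place of $\psi$, with $K^\ast$ replacing $K_\ast$.

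There is no real obstacle here: the corollary is a bookkeeping consequence of the quantitative bounds already established in Lemma~\ref{infsup}. The only points that require a word of care are checking the elementary numerical inequality $\frac{1}{2K_\ast^2}\le \frac{2}{K_\ast}\wedge\frac{1}{\sqrt[3]{2}}$ when $K_\ast\ge 1$ (so that the asserted interval makes sense and one does not need the strict monotonicity of $\psi$ beyond what is stated), and making sure the endpoints land strictly inside $(0,1)$. All of this is routine, so the proof will be short.
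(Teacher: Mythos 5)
Your proposal is correct and follows exactly the route the paper intends: the corollary is stated as an immediate consequence of the two-sided bounds in Lemma \ref{infsup} and the definition \eqref{f-u-l-alpha}, and your sandwiching of the level set $\{\psi\le 1\}$ (plus the routine checks that $\tfrac{1}{2K_\ast^2}\le \tfrac{2}{K_\ast}\wedge\tfrac{1}{\sqrt[3]{2}}<1$ for $K_\ast\ge 1$) is precisely that argument.
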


Now we state the main result of this section:

\begin{prop}\label{upper-lower}
Let $V\ge 24$ and $\alpha=[0;a_1,a_2,\cdots]$ be irrational.

(i)\ If  $K_\ast<\infty$, then $0<f_\ast(\alpha)<1$ and
\begin{equation}\label{K-low-ast}
\frac{-\ln f_\ast(\alpha)}{6\ln 4K_\ast^2+\ln 2(V+5)}\le S_\ast(V)\le
\frac{-\ln f_\ast(\alpha)}{\ln (V-8)/3}.
\end{equation}

If $K_\ast=\infty$, then $f_\ast(\alpha)=0$ and  $s_\ast(V)=1.$

(ii)\ If  $K^\ast<\infty$, then $0<f^\ast(\alpha)<1$ and
\begin{equation}\label{K-up-ast}
\frac{-\ln f^\ast(\alpha)}{6\ln 2K^\ast+\ln 2(V+5)}\le S^\ast(V)\le
\frac{-\ln f^\ast(\alpha)}{\ln (V-8)/3}\wedge \frac{\ln K^*+\ln\sqrt{2}}{\ln K^*+\ln (V-8)/3} .
\end{equation}

If $K^\ast=\infty$, then $f^\ast(\alpha)=0$ and $s^\ast(V)=1.$
\end{prop}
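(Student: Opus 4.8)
The plan is to relate the pre-dimension exponents $s_\ast(V),s^\ast(V)$ to the growth rates $\psi,\phi$ of the matrix products $\mathbf{S}_n(x)$, using the band-length estimates of Lemma \ref{lem-bc} together with bounded variation (Theorem \ref{bvar}) and bounded covariation (Theorem \ref{bco}). The starting observation is that the quantity $b_{k,\beta}=\sum_{B\in\mathscr{G}_k}|B|^\beta$ can be organized by the coding $\Omega_k$ into a transfer-operator form: summing $|B_w|^\beta$ over the admissible extensions of a band of type $T$ at step $k$ produces, up to the multiplicative constant $\eta=\eta(V)$ from bounded covariation and the ratio estimates of Lemma \ref{quo-deri}, a factor governed by $\sum_{l=1}^{\tau_e(a_{k+1})}\bigl(\text{contraction ratio}\bigr)^\beta$. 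For $e\ne e_{12}$ this sum is comparable to $a_{k+1}\cdot a_{k+1}^{-\beta}\sum_l\sin^{2\beta}(l\pi/(a_{k+1}+1))/a_{k+1}\sim a_{k+1}^{1-\beta}$ up to bounded factors, while for $e=e_{12}$ it contributes a factor comparable to $\bigl((V-8)/3\bigr)^{-(a_{k+1}-1)\beta}$ through $(\sqrt{2x})^{\ldots}$ with $x\asymp V^{-1}$; these are precisely the entries appearing in $\mathbf{R}_{k+1}(x)$. Thus with $x\asymp 1/V$ one gets a two-sided comparison
\[
\eta^{-k}\,\|\mathbf{S}_k(x_-)\|\ \lesssim\ b_{k,\beta}\ \lesssim\ \eta^{k}\,\|\mathbf{S}_k(x_+)\|,
\]
where $x_\pm$ are explicit exponentials in $\beta$ and $V$ coming from the bounds $t_1=(V-8)/3\le |h'_{wu}/h'_w|\le t_2=2(V+5)$ of Lemma \ref{quo-deri}.

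Next I would extract the exponents. Since $s_k$ is defined by $b_{k,s_k}=1$, and $b_{k,\beta}$ is decreasing in $\beta$, one has $\beta<s_\ast(V)$ (resp. $\beta>s^\ast(V)$) controlled by whether $\liminf_k b_{k,\beta}^{1/k}$ (resp. $\limsup$) exceeds or falls below $1$. By the sandwich above this is equivalent to comparing $\psi(x_\pm)$ or $\phi(x_\pm)$ with $1$, i.e. to comparing $x_\pm$ with $f_\ast(\alpha)$ or $f^\ast(\alpha)$ via the definition \eqref{f-u-l-alpha} and the monotonicity of $\psi,\phi$ from Lemma \ref{infsup}. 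Choosing $x_- \asymp \bigl(2(V+5)\bigr)^{-\beta}$-type bounds on one side and $x_+\asymp\bigl((V-8)/3\bigr)^{-\beta}$ on the other, and using the explicit two-sided estimates $\bigl(\tfrac{K}{2}\vee\sqrt[3]2\bigr)x\le\psi(x)\le K\sqrt{2x}$ (resp. for $\phi$, $K^\ast$) from Lemma \ref{infsup}, one solves the resulting inequalities for $\beta$ and reads off \eqref{K-low-ast} and \eqref{K-up-ast}. The extra term $\tfrac{\ln K^\ast+\ln\sqrt2}{\ln K^\ast+\ln(V-8)/3}$ in the box-dimension upper bound comes from using the sharper upper bound $\phi(x)\le K^\ast\sqrt{2x}$ rather than the cruder one, giving a second, sometimes better, ceiling on $s^\ast(V)$. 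When $K_\ast=\infty$ (resp. $K^\ast=\infty$), Lemma \ref{infsup} gives $\psi(x)=\infty$ (resp. $\phi(x)=\infty$) for every $x>0$, so $b_{k,\beta}^{1/k}\to\infty$ (resp. $\limsup=\infty$) for every $\beta<1$, forcing $s_\ast(V)=1$ (resp. $s^\ast(V)=1$); and $f_\ast(\alpha)=0$ (resp. $f^\ast(\alpha)=0$) by Corollary \ref{K-ast}.

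The main obstacle I anticipate is making the passage from $b_{k,\beta}$ to $\|\mathbf{S}_k(x)\|$ uniform in $k$ with constants that are genuinely absolute (only the base $\eta=C_1\exp(C_2V)$ is allowed to depend on $V$, and it must enter as $\eta^k$ so as not to affect the $k$-th root in the limit). This requires carefully folding together: the covariation constant from Theorem \ref{bco} (so that the sum over extensions factorizes up to bounded error regardless of which band it sits in), the bounded-variation/bounded-distortion passage from $|B_w|$ to derivatives of the generating polynomial, and the trigonometric sum $\sum_l\sin^{2\beta}(l\pi/(n+1))\sim n$, which must be shown comparable to the corresponding entry growth of $\mathbf{R}_{k+1}(x)$ uniformly in $n=a_{k+1}$ — including the delicate balance when some $a_{k+1}$ are huge, which is exactly the regime the whole paper is built to handle. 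Once this comparison is in place, the rest is solving elementary inequalities in $\beta$, and the statements $0<f_\ast(\alpha),f^\ast(\alpha)<1$ for finite $K_\ast,K^\ast$ are already furnished by Corollary \ref{K-ast}.
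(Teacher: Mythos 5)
Your high-level strategy coincides with the paper's: encode $b_{k,\beta}$ through the transfer matrices $\mathbf{R}_n(x)$ and read off the pre-dimensions by comparing $\psi,\phi$ with $1$ via \eqref{f-u-l-alpha}, with Corollary \ref{K-ast} supplying $0<f_\ast,f^\ast<1$. But the two steps you defer are exactly where the proof lives, and as stated they fail.

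First, the sandwich $\eta^{-k}\|\mathbf{S}_k(x_-)\|\lesssim b_{k,\beta}\lesssim\eta^{k}\|\mathbf{S}_k(x_+)\|$ with $\eta=C_1\exp(C_2V)$ is useless for this purpose: $(\eta^{\pm k})^{1/k}=\eta^{\pm1}$, so the $k$-th root inherits a factor $e^{\pm C_2V}$ and the comparison of $\liminf_k b_{k,\beta}^{1/k}$ with $1$ is swamped (you assert the opposite, that entering ``as $\eta^k$'' does not affect the limit). The paper does not route this step through bounded covariation at all. It uses the exact product bounds of Lemma \ref{lem-bc}, which carry no per-step constant: the resulting matrices $Q_k(\gamma)$ are dominated entrywise by $\mathbf{R}_k(t_1^{-\gamma})$, and $\tilde Q_k(\gamma)$ dominates $a_k^{-3\gamma}\mathbf{R}_k(t_2^{-\gamma})$ (with $t_1=(V-8)/3$, $t_2=2(V+5)$), so the only corrections to $\|\mathbf{S}_k(\cdot)\|$ are one global constant and the honestly unavoidable factor $\prod_i a_i^{-3\gamma}$.

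Second, that factor --- reflecting the $a_i^{-3}$ versus $a_i^{-1}$ spread of contraction ratios inside one parent band --- is the real difficulty, and you never address it. It is why the lower bounds in \eqref{K-low-ast} and \eqref{K-up-ast} have denominators $6\ln 4K_\ast^2+\ln t_2$ and $6\ln 2K^\ast+\ln t_2$ rather than $\ln t_2$: the paper absorbs $\delta_k^{-3k\gamma}$ into the argument of $\mathbf{S}_k$ via \eqref{2-2}, at the cost of replacing $t_2^{-\gamma}$ by $(\delta_k^6t_2)^{-\gamma}$, and must split into the cases $\delta_k\le 4K_\ast^2$ and $\delta_k\ge 4K_\ast^2$. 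More seriously, it breaks your one-line treatment of $K_\ast=\infty$: since the available lower bound involves $\mathbf{S}_k$ evaluated at $(\delta_k^6t_2)^{-\gamma}\to 0$ as $\delta_k\to\infty$, the fact that $\psi(x)=\infty$ for each \emph{fixed} $x>0$ gives you nothing, and the untruncated estimate \eqref{lbk1} only yields $s_k\gtrsim\frac{\ln\delta_k-\ln 2}{3\ln\delta_k+\ln t_2}\to\frac13$. The paper needs the truncation $\varepsilon_0=1/4$ (the matrices $\hat Q_k$) to upgrade the per-band lower bound from $1/(t_2a_i^3)$ to $4/(t_2a_i)$ and conclude $s_k\to1$. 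A smaller inaccuracy: the second ceiling $\frac{\ln K^\ast+\ln\sqrt2}{\ln K^\ast+\ln t_1}$ on $s^\ast$ comes from the direct estimate $\|Q_kQ_{k+1}(\gamma)\|\le 2(a_ka_{k+1})^{1-\gamma}t_1^{-\gamma}$, not from $\phi(x)\le K^\ast\sqrt{2x}$, which would give the different bound $\ln(2K^{\ast2})/\ln t_1$.
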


\begin{proof}
Let $t_1=(V-8)/3$, $t_2=2(V+5)$.
Write
$$\bar{a}_k=a_k+1,\quad \underline{a}_k=a_k-1,\quad
\delta_k=(a_1\cdots a_k)^{1/k}.$$
For any $0\le\gamma\le1$ define
$$\begin{array}{l}
Q_k(\gamma):= \left(
\begin{array}{ccc}
0&t_1^{-\gamma \underline{a}_k}&0\\
\bar{a}_k(t_1a_k)^{-\gamma}&0&a_k(t_1a_k)^{-\gamma}\\
a_k(t_1a_k)^{-\gamma}&0&\underline{a}_k(t_1a_k)^{-\gamma}
\end{array}
\right),\\
\tilde Q_k(\gamma):= \left(
\begin{array}{ccc}
0&t_2^{-\gamma \underline{a}_k}&0\\
\bar{a}_k(t_2a_k^3)^{-\gamma}&0&a_k(t_2a_k^3)^{-\gamma}\\
a_k(t_2a_k^3)^{-\gamma}&0&\underline{a}_k(t_2a_k^3)^{-\gamma}
\end{array}
\right).
\end{array}
$$

By \eqref{bd-basic} we have
\begin{equation}\label{bulb}
\begin{array}{l}
b_{k,\gamma}\le 4^\gamma
(1,0,1)Q_1(\gamma)\cdots Q_k(\gamma)(1,1,1)^t,\\
b_{k,\gamma}\ge(1,0,1)\tilde Q_1(\gamma)
\cdots \tilde Q_k(\gamma)(1,1,1)^t,
\end{array}\end{equation}
where $(1,1,1)^t$
is a column vector.
We have, by definition of norm and \eqref{bulb},
\begin{equation}\label{ub-box}
b_{k,\gamma}\le 6\cdot4^\gamma
||Q_1(\gamma)\cdots Q_k(\gamma)||.\end{equation}
By \eqref{rnb}, for any $k\ge1$, $Q_k(\gamma)\leq \mathbf{R}_k(t_1^{-\gamma})$,
then by \eqref{bulb},
\begin{equation}\label{ub2}
b_{k,\gamma}\le4^\gamma
(1,0,1)  \mathbf{S}_k(t_1^{-\gamma})(1,1,1)^t\le 6\cdot4^\gamma\|\mathbf{S}_k(t_1^{-\gamma})\|.
\end{equation}
Since for any $k\ge1$, $\tilde Q_k(\gamma)\ge
a_k^{-3\gamma}\mathbf{R}_k(t_2^{-\gamma})$, by \eqref{bulb},
$$b_{k,\gamma}\ge \delta_k^{-3k\gamma}  (1,0,1) \mathbf{S}_k(t_2^{-\gamma})(1,1,1)^t.$$
Since there exists $c>1$ such that $c^{-1}J<\mathbf{S}_5(t_2^{-\gamma})<cJ$
(see \eqref{allone} for definition of $J$)
and by definition of norm,
$$\begin{array}{rcl}
b_{k,\gamma}&\ge& c^{-1}\delta_k^{-3k\gamma}
\|\mathbf{R}_6(t_2^{-\gamma})\mathbf{R}_7(t_2^{-\gamma})\cdots \mathbf{R}_k(t_2^{-\gamma}) \|\\
&\ge&c^{-1}\delta_k^{-3k\gamma}\|\mathbf{S}_k(t_2^{-\gamma})\|/
\|\mathbf{S}_5(t_2^{-\gamma})\|\\
&\ge&c^{-2}\delta_k^{-3k\gamma}\|\mathbf{S}_k(t_2^{-\gamma})\|.
\end{array}$$
This implies, on one hand, by Claim in Lemma \ref{infsup},
\begin{equation}\label{lbk1}
b_{k,\gamma}
 \geq c^{-2}\delta_k^{-3k\gamma}\frac{a_1\cdots a_k}{2^k} t_2^{-k\gamma}
= c^{-2}\left((\delta_k^3
t_2)^{-\gamma}\delta_k/2\right)^k;
\end{equation}
on the other hand, by \eqref{2-2},
\begin{equation}\label{lbk2}
b_{k,\gamma}\geq c^{-2}\|
\mathbf{S}_{k}((\delta_{k}^6t_2)^{-\gamma})\|.
\end{equation}

We discuss first upper bound of pre-dimensions.

Assume  $K_*<\infty$. Then $f_\ast(\alpha)>0$ by Corollary \ref{K-ast}.
Take $\gamma>0 $ such that
$t_1^{-\gamma}<f_*(\alpha)$, i.e. $\gamma>-\ln
f_*(\alpha)/\ln t_1$. Then by the definition of $f_\ast(\alpha)$
and the fact that $\psi$ is strictly increasing on $[0,1]$
we conclude that  $\psi(t_1^{-\gamma})<1$. Thus for any
$\lambda\in (\psi(t_1^{-\gamma}),1)$ and any $n$ big enough, there exists
$k_n\ge n$ such that
$$
\|  \mathbf{S}_{k_n}(t_1^{-\gamma})
\|<\lambda^{k_n}.
$$
Thus  $b_{{k_n},\gamma}<1$ when $n$ is big enough  by \eqref{ub2}.
Consequently $\gamma>s_{k_n}$ for $n$ big.
Thus we conclude that  $s_\ast(V)\leq \gamma.$ Since
$\gamma>{-\ln f_*(\alpha)}/{\ln t_1}$ is arbitrary, we get
$$
s_*(V)\le\frac{-\ln f_*(\alpha)}{\ln t_1}=\frac{-\ln
f_*(\alpha)}{\ln (V-8)/3},
$$
which is the second inequality in \eqref{K-low-ast}.

Assume $K^*<\infty$. By essentially repeating  the above proof
(indeed it is simpler), we get
$$
s^*(V)\le\frac{-\ln f^*(\alpha)}{\ln (V-8)/3},
$$
which is one of the second inequality in \eqref{K-up-ast}.

On the other hand, by definition of the norm, it is direct to check  that
$$||Q_k(\gamma)Q_{k+1}(\gamma)||\le 2(a_ka_{k+1})^{1-\gamma}t_1^{-\gamma}.$$
(Note that if $a_k=1$, then $t_1^{-\gamma \underline{a}_k}=1$,
we cannot get $||Q_k(\gamma)||\le 2 a_k^{1-\gamma}t_1^{-\gamma}$ in this case.)
So, by \eqref{ub-box}, for some $c>0$,
$$b_{k,\gamma}\le 6c\cdot4^\gamma (2t_1^{-\gamma})^{k/2}(a_1\cdots a_k)^{1-\gamma}
\le 24c\left(\frac{\sqrt{2}\delta_k}
{(\delta_k t_1)^{\gamma}}\right)^k
.$$
This implies
$$s^*(V)\le\frac{\ln K^*+\ln\sqrt{2}}{\ln K^*+\ln t_1},$$
and we get the second inequality in \eqref{K-up-ast}.

Next we discuss the low bound of pre-dimensions.

Assume $K_\ast<\infty$. We will show the first inequality in \eqref{K-low-ast}.
By Corollary \ref{K-ast} we have
$0<f_\ast(\alpha)<1$. Fix $g\in (f_\ast(\alpha),1)$.

\medskip

\noindent {\bf  Claim 1.}
 If $\gamma<\frac{-\ln g}{{6\ln 4K_\ast^2+\ln t_2}},$
 then $\lim_{k\rightarrow\infty}b_{k,\gamma}=\infty.$

\medskip

\noindent $\vartriangleleft$
Notice that by Corollary \ref{K-ast} we have
$$
g>f_\ast(\alpha)\geq \frac{1}{2K_\ast^2}.
$$

If $\delta_k\ge 4K_\ast^2,$ then
$$\gamma<\frac{\ln 2K_\ast^2}{{6\ln 4K_\ast^2+\ln t_2}}<
\frac{\ln 4K_\ast^2-\ln2}{3\ln 4K_\ast^2+\ln t_2}\le
\frac{\ln \delta_k-\ln2}{3\ln \delta_k+\ln t_2}.$$
  Write
$$
c:=\frac{\ln 4K_\ast^2-\ln2}{3\ln 4K_\ast^2+\ln t_2}-\gamma>0,
$$
then by direct computation we get
$$
(\delta_k^3 t_2)^{-\gamma}\delta_k/2\geq (\delta_k^3t_2)^c\ge ((4K_\ast^2)^3t_2)^c =:\mu>1.
$$
Consequently by \eqref{lbk1} we have
$$
b_{k,\gamma}\gtrsim \mu^k.
$$

If $\delta_k\le 4K_\ast^2$, then by direct computation we get
$$
g<(\delta_{k}^6t_2)^{-\gamma}.
$$
Consequently  by \eqref{lbk2} we get
$$
b_{k,\gamma}\gtrsim \|S_k(g)\|
$$
Since $g>f_\ast(\alpha)$, there exists $\tilde \mu>1$ such that for big $k$ we have $\|\mathbf{S}_k(g)\|\ge \tilde \mu^k$.

Thus we conclude that in either case we have
$$
b_{k,\gamma}\gtrsim \mu^k\wedge \tilde \mu^k.
$$
  \hfill $\vartriangleright$

By the claim we have $\gamma\le s_k$ for $k$ big.
Thus   $\gamma\leq s_{\ast}(V).$
Now by the arbitrariness of the choices of $\gamma$ and  $g$ we conclude that
$$
\frac{-\ln f_\ast(\alpha)}{6\ln 4K_\ast^2
+\ln 2(V+5)}=\frac{-\ln f_\ast(\alpha)}{6\ln 4K_\ast^2
+\ln t_2}\le s_*(V)
$$
which is the first inequality in \eqref{K-low-ast}.

Assume $K^\ast<\infty$.
Choose $N\in\N$ such that  $\delta_k<2K^*$ for all $k\geq N.$
By Corollary \ref{K-ast} we have
$0<f^\ast(\alpha)<1$. Fix $g\in (f^\ast(\alpha),1)$.

\medskip

\noindent {\bf  Claim 2.}
 If $\gamma<\frac{-\ln g}{{6\ln 2K_\ast+\ln t_2}},$
 then $\limsup_{k\rightarrow\infty}b_{k,\gamma}=\infty.$

\medskip

\noindent $\vartriangleleft$
For any $k\ge N,$ we have
$$
\gamma<\frac{-\ln g}{6\ln
2K^\ast+\ln t_2}\leq \frac{-\ln g}{6\ln
\delta_k+\ln t_2}.
$$
Consequently $\left(\delta_k^6t_2\right)^{-\gamma}> g.$
By \eqref{2-2} and \eqref{lbk2},
$$b_{k,\gamma}> c^{-2}\left(\frac{(\delta_{k}^6t_2)^{-\gamma}}{g}\right)^{k/2}
\|\mathbf{S}_k(g) \|.$$
This prove the claim by definition of $f^*(\alpha)$ and $g> f^\ast(\alpha)$. \hfill $\vartriangleright$

By the choices of $g$ and $\gamma$ we conclude that
$$
s^*(V)\ge\frac{-\ln f^*(\alpha)}{6\ln2 K^*+\ln t_2}.
$$
which is the first inequality in \eqref{K-up-ast}.

Finally, we consider the cases of $K_*=\infty$ and $K^*=\infty$.
Let us define
$$\hat Q_k(\gamma):= \left(
\begin{array}{ccc}
0&t_2^{-\gamma \underline{a}_k}&0\\
{a}_k/4(t_2a_k/4)^{-\gamma}&0&a_k/4(t_2a_k/4)^{-\gamma}\\
a_k/4(t_2a_k/4)^{-\gamma}&0&{a}_k/4(t_2a_k/4)^{-\gamma}
\end{array}
\right).$$
Fix $\varepsilon_0=1/4$ and take any $w\in\Omega_k(\varepsilon_0)$.
Similar with the proof of Lemma \ref{lem-bc} we can show that
$$
|B_w|\ge \prod_{e_i=e_{12}} \frac{1}{t_2^{a_i-1}}\cdot\prod_{e_i\ne e_{12}}\frac{4}{t_2 a_i}.
$$

Then by a direct computation we get
$$b_{k,\gamma}({\varepsilon_0})\ge
 (1,0,1)\hat Q_1(\gamma) \cdots \hat Q_k(\gamma)(1,1,1)^t.$$
(See \eqref{b-k-beta-epsilon} for the definition of
$b_{k,\gamma}(\varepsilon)$.)

Analogous to the selection procedure in the proof of the claim in Lemma
\ref{infsup}, we can get
$$b_{k,\gamma}\ge b_{k,\gamma}({\varepsilon_0})\gtrsim
\prod_{i=1}^k\frac{a_i}{8}\left(\frac{t_2a_i}{4}\right)^{-\gamma}
=\left(\frac{\delta_k}{8}\left(\frac{t_2\delta_k}{4}\right)^{-\gamma}\right)^k,$$
 By taking $\gamma=s_k$ and using the fact that $b_{k,s_k}=1$ we get
$$s_k\ge\frac{\ln\delta_k-\ln8}{\ln \delta_k+\ln t_2/4}.$$

From this inequality it is seen that  if $K_*=\infty$, then
$s_*(V)\equiv1.$
if $K^*=\infty$, then
$s^*(V)\equiv1.$
 This finish the proof of Proposition \ref{upper-lower}.
\end{proof}


\section{Generating polynomial and Ladders}\label{ladder}

In  this section we give some preparations for the proofs of bounded variation,
bounded covariation and Gibbs like measure.

Consider the equation
$$\Lambda(x,y,z)=x^2+y^2+z^2-xyz-4=V^2.$$
We can solve $z$ as
$$z_{\pm}(x,y,V)=\frac{xy}{2}\pm\frac{1}{2}\sqrt{4V^2+(4-x^2)(4-y^2)}.
$$
For two branches $z=z_+$ or $z=z_-$, let
$$\begin{array}{l}
z_1(x,y,V):=\frac{\partial z(x,y,V)}{\partial x},
\ z_2(x,y,V):=\frac{\partial z(x,y,V)}{\partial y},
\ z_{3}(x,y,V):=\frac{\partial z(x,y,V)}{\partial V}.
\end{array}$$
We also define $z_{ij}(x,y,V)$ by the obvious way.
For any $|x|\leq2$, $|y|\leq2$ and $V>4$, by a simple computation we
get
\begin{equation}\label{dec}
\begin{array}{l}
V-2\le|z_\pm(x,y,V)|\le V+2,\\
|z_j(x,y,V)|\leq1, \ \ \ j=1,2,3\\
|z_{ij}(x,y,V)|\leq1,\ \ \ ij=11,12,13,21,22,23.
\end{array}
\end{equation}

We will estimate the derivatives of generating polynomials by using
Chebishev polynomial $S_p(x)$, which is defined by
$$\begin{array}{l}
S_0(x)\equiv0,\quad S_1(x)\equiv1,\\
S_{p+1}(x)=x S_p(x)-S_{p-1}(x),\quad p\geq1.
\end{array}$$

\subsection{Ladders and modified ladders}\

    In \cite{FLW}, the authors introduce
the notion of ladder and modified ladder
which is very useful for estimating the derivatives of the generating polynomials.
Now we recall the definitions and state some related  results which will be used later.

Given $w\in\Omega_n$, write $w=w_1\cdots w_n$ and $w|_m=w_1\cdots w_m$
for $m=1,\cdots,n.$ Write $B_m=B_{w|_m}.$ Then
for any $k\le n$
$$
B_n\subseteq B_{n-1}\subseteq\cdots\subseteq B_{k}
$$
is  a sequence of spectral generating bands from order $n$ to $k$. We
call the sequence $(B_i)_{i=k}^n$ an {\em initial ladder}, and the
bands $B_i(k\le i\le n)$ are called  {\it initial rungs}. Now we are going
to modify the initial ladder by the following way: for any $i(k\le
i\le n-1)$,
\begin{itemize}
\item if   $B_i$ is of $(i,I)$-type with $a_{i+1}=1$:  delete the rung
$B_{i+1}$ (in this case $B_{i+1}$ must be $(i+1,II)$-type, then
$t_{(i+2,0)}=t_{(i,1)}$ and $t_{(i+1,-1)}=t_{(i,0)}$ implies
$B_{i+1}=B_i$);

\item if $B_i$ is of $(i,I)$-type with $a_{i+1}=2$: change
nothing;

\item if $B_i$ is of $(i,I)$-type with $a_{i+1}>2$: add rungs
$(B_{(i,p)})_{p=2}^{a_{i+1}-1}$ between $B_i$ and $B_{i+1}$ :
$$B_{i+1}=B_{(i,a_{i+1})}\subset B_{(i,a_{i+1}-1)}\subset\cdots\subset
B_{(i,2)}\subset B_{(i,1)}=B_i;$$
where $B_{(i,p)}$ is the unique band in $\sigma_{(i,p)}$ which is included in $B_i$.
\item if $B_i$ is of $(i,II)$ or $(i,III)$-type: change nothing.
\end{itemize}
By this way we get  a unique modified ladder which we relabel as
$$B_n=\hat{B}_m\subset\cdots\subset\hat{B}_1\subset \hat{B}_{0}=B_{k}.$$
We call $(\hat{B}_i)_{i=0}^m$ the {\em modified ladder}, and we
denote the corresponding generating polynomials by
$(\hat{h}_i)_{i=0}^m$. Note that any two consecutive initial rungs
can not be of type $I$ simultaneously,  so we have
\begin{equation}\label{modibd}
(n-k)/2\le m\le  a_{k+1}+a_{k+2}+\cdots+a_{n}.
\end{equation}

Given an initial ladder $(B_i)_{i=k}^n$. Let $(\hat B_i)_{i=0}^m$
be the related modified ladder. For $i=0,\cdots,m-1$ define
\begin{equation}\label{pvalue}
(p_i,l_i)=
\begin{cases}
 (\tau_e(a_{j}),l), &
 \begin{array}{l}
 \text{ if }  (\hat B_{i},\hat B_{i+1})=(B_{j-1},B_{j})\\
 \text{ for some } j \text{ and } w_j=(e,\tau_e(a_j),l)
 \end{array} \\
 (1,1), & \text{  otherwise}.
\end{cases}
\end{equation}
We call $(p_i)_{i=0}^{m-1}$ and $(l_i)_{i=0}^{m-1}$
the {\it type sequence } and {\it index sequence } of the modified ladder.

The following key formula is proved in \cite{FLW}:
\begin{equation}\label{ladder-i}
\hat{h}_{i+1}(x)=z_{\pm}(\hat{h}_i(x),\hat{h}_{i-1}(x),V)
S_{p_i+1}(\hat{h}_{i}(x))-\hat{h}_{i-1}(x)S_{p_i}(\hat{h}_i(x)).
\end{equation}

For convenience  we denote $z_{\pm}(\hat{h}_i(x),\hat{h}_{i-1}(x),V)$ by $z_{\pm}(x).$
By taking derivative on both side of \eqref{ladder-i}, we get
\begin{equation}\label{exprderi}
\begin{array}{r}
\dfrac{\hat{h}'_{i+1}(x)}{\hat{h}'_i(x)}
=S'_{p_i+1}(\hat{h}_i(x))z_\pm(x)-S'_{p_i}(\hat{h}_i(x))\hat{h}_{i-1}(x)+\quad\\
S_{p_i+1}(\hat{h}_i(x))\dfrac{z'_\pm(x)}{\hat{h}'_i(x)}-
S_{p_i}(\hat{h}_i(x))\dfrac{\hat{h}'_{i-1}(x)}{\hat{h}'_i(x)},
\end{array}
\end{equation}
where
\begin{equation}\label{d-cha}
\frac{z'_\pm(x)}{\hat{h}'_i(x)}=
z_1(\hat{h}_i(x),\hat{h}_{i-1}(x),V)+z_2(\hat{h}_i(x),\hat{h}_{i-1}(x),V)
\frac{\hat{h}'_{i-1}(x)}{\hat{h}'_i(x)}.
\end{equation}
We will use this relation later.

Let $p\ge1$, $1\le l\le p$. Define
$$I_{p,l}:=\left\{2\cos\frac{l+c}{p+1}\pi\ :\ |c|\le\frac{1}{10}\mbox{ and }
\ |S_{p+1}(2\cos\frac{l+c}{p+1}\pi)|\le\frac{1}{4}
\right\}.$$

The following property  is  proved in
\cite{FLW} .
\begin{prop}\label{index}
Assume $V\ge20$. Let $(\hat{B}_i)_{i=0}^m$ be a modified ladder,
$(\hat{h}_i)_{i=0}^{m}$ the corresponding generating polynomials,
and $(p_i)_{i=0}^{m-1}, (l_i)_{i=0}^{m-1}$ be the type sequence and index sequence respectively.
Then for any
$0\le i<m$
$$\hat{h}_i(\hat{B}_{i+1})\subset I_{p_i,l_i}.$$
\end{prop}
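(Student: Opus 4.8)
The plan is to prove by induction on $i$ that $\hat h_i(\hat B_{i+1}) \subset I_{p_i, l_i}$, using the recursion \eqref{ladder-i} together with the bounds \eqref{dec} on $z_\pm$ and its derivatives. Recall that each $\hat h_j$ maps its band $\hat B_j$ monotonically onto $[-2,2]$, so the endpoints of $\hat B_{i+1}$ are among the solutions of $\hat h_{i+1}(x) = \pm 2$, i.e.\ (writing $u = \hat h_i(x)$) among the solutions of
$$z_\pm(u, \hat h_{i-1}(x), V)\, S_{p_i+1}(u) - \hat h_{i-1}(x)\, S_{p_i}(u) = \pm 2.$$
The key observation is that the zeros of $S_{p+1}$ are exactly $2\cos\frac{l\pi}{p+1}$, $l=1,\dots,p$, and near such a zero the Chebyshev values $S_{p+1}(2\cos\frac{l+c}{p+1}\pi)$ are controlled linearly in $c$ while $S_{p_i}$ at the same point equals $\pm 1$ up to a small error. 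So the idea is: on $\hat B_{i+1}$, the term $z_\pm S_{p_i+1}(u)$ is small (of order $V$ times a quantity of order $|c|$), and since $|z_\pm| \ge V-2$ is large, forcing $|\hat h_{i+1}| \le 2$ pins $u = \hat h_i(x)$ into a tiny neighborhood of one of the zeros $2\cos\frac{l_i\pi}{p_i+1}$; writing that neighborhood as $\{2\cos\frac{l_i+c}{p_i+1}\pi\}$, one checks $|c| \le 1/10$ and simultaneously $|S_{p_i+1}(2\cos\frac{l_i+c}{p_i+1}\pi)| \le 1/4$, which is precisely membership in $I_{p_i, l_i}$.

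First I would make the perturbative estimate precise: for a point $y = 2\cos\frac{l+c}{p+1}\pi$ with $|c|$ small, one has $S_{p+1}(y) = \frac{\sin((l+c)\pi)}{\sin\frac{(l+c)\pi}{p+1}} = \frac{(-1)^l \sin(c\pi)}{\sin\frac{(l+c)\pi}{p+1}}$ and $S_p(y) = \frac{\sin\frac{(p+l+c)\pi}{p+1}}{\sin\frac{(l+c)\pi}{p+1}}$, so $|S_{p+1}(y)| \lesssim |c| (p+1)/\min(l,p+1-l)$ and $|S_p(y)|$ is bounded below (close to $1$). Plugging these into \eqref{ladder-i} and using $V-2 \le |z_\pm| \le V+2$, $|\hat h_{i-1}| \le 2$ from \eqref{dec} and the covering structure, the condition $|\hat h_{i+1}(x)| \le 2$ translates into an inequality of the form $|z_\pm| \cdot |S_{p_i+1}(u)| \le 2 + |S_{p_i}(u)| \cdot |\hat h_{i-1}| \le C$ for an absolute constant $C$, hence $|S_{p_i+1}(u)| \le C/(V-2)$, which is $\le 1/4$ once $V \ge 20$. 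Separately, the linear control of $S_{p_i+1}$ near its zero, combined with the lower bound on the derivative of $S_{p_i+1}$ there, upgrades the smallness of $|S_{p_i+1}(u)|$ to the smallness $|c| \le 1/10$ of the angular displacement, which is where $V \ge 20$ is used quantitatively. This matches the two defining constraints of $I_{p_i, l_i}$.

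The remaining point is bookkeeping: one must verify that the correct index $l_i$ appears, i.e.\ that among the $p_i$ possible zeros of $S_{p_i+1}$, the band $\hat B_{i+1}$ sits over the one labeled $l_i$ in the index sequence \eqref{pvalue}. This is where the combinatorial description of the modified ladder and Definition \ref{def1} (the three band types and the enumeration of sub-bands from left to right) must be matched against the monotonicity of $\hat h_i$ and the sign pattern of $S_{p_i+1}$; concretely, the $l$-th sub-band of the appropriate type corresponds to the $l$-th interval between consecutive zeros of $S_{p_i+1}\circ \hat h_i$, and one tracks the labeling through the $z_+$ versus $z_-$ branch choice encoded in the edge $e$. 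I expect this index-matching, rather than the analytic estimate, to be the main obstacle, since it requires carefully reconciling the geometric ordering of bands with the algebraic structure of \eqref{ladder-i}; the estimates themselves are routine given \eqref{dec} and the explicit trigonometric formulas for $S_p$. The base case $i=0$ is handled directly: $\hat h_0$ is one of $t_{(0,0)} \equiv 2$, $t_{(0,1)} = x - V$ or $t_{(1,0)} = x$ on the order-$(-1)$ or order-$0$ bands, where the claim is checked by hand.
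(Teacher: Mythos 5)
The paper itself gives no proof of Proposition \ref{index}: it is imported verbatim from \cite{FLW} (``The following property is proved in \cite{FLW}''), so your proposal can only be judged on its own terms, and on those terms it has one fixable flaw and one genuine gap. The fixable flaw is a circularity in the analytic estimate: you pass from $|z_\pm|\,|S_{p_i+1}(u)|\le 2+|S_{p_i}(u)|\,|\hat h_{i-1}|$ to ``$\le C$ for an absolute constant $C$,'' but a priori $|S_{p_i}(u)|$ can be as large as $p_i$ on $[-2,2]$ (it equals $\pm p_i$ at $u=\pm2$), and the bound $|S_{p_i}|\le 5/4$ of Proposition \ref{keyLW} is available only \emph{after} one knows $u\in I_{p_i,l_i}$ --- exactly what is being proved. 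The repair is the unconditional identity $S_{p}(2\cos\theta)=S_{p+1}(2\cos\theta)\cos\theta-\cos((p+1)\theta)$, whence $|S_{p_i}(u)|\le|S_{p_i+1}(u)|+1$; substituting and absorbing gives $(V-4)|S_{p_i+1}(u)|\le 4$, i.e.\ $|S_{p_i+1}(u)|\le 1/4$ for $V\ge20$, and then $|\sin(c\pi)|\le\tfrac14$ forces $|c|\le\arcsin(1/4)/\pi<1/10$. (Your displayed formula for $S_p$ at $2\cos\frac{(l+c)\pi}{p+1}$ is also miscomputed, though the conclusion that $|S_p|$ is near $1$ there is correct.)

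The genuine gap is the step you yourself flag and defer: identifying the island. The estimate above only yields $\hat h_i(\hat B_{i+1})\subset I_{p_i,l}$ for \emph{some} $1\le l\le p_i$; the proposition asserts $l=l_i$, i.e.\ that the $l$-th band of the relevant type inside $\hat B_i$ in the enumeration of Section 2 sits over the $l$-th zero $2\cos\frac{l\pi}{p_i+1}$ of $S_{p_i+1}$. That requires the band-counting analysis of \cite{R,LW}: there are exactly $p_i$ candidate bands inside $\hat B_i$, exactly one per island, and the geometric left-to-right order must be reconciled with the fact that $2\cos\frac{l\pi}{p_i+1}$ is \emph{decreasing} in $l$ and that $\hat h_i$ may be increasing or decreasing on $\hat B_i$ (an orientation that must also be tracked through the $z_+$ versus $z_-$ branch choice). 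This is not incidental bookkeeping: the specific value of $l_i$ enters the factors $\csc^2\frac{l_i\pi}{p_i+1}$ in Proposition \ref{lm-2} and the truncation defining $E_\varepsilon$, so without this identification the proposition --- and its downstream uses --- is not established. Writing ``I expect this to be the main obstacle'' names the missing argument but does not supply it.
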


 We collect some useful
estimations of Chebischev polynomials on the interval $I_{p,l}$,
which is essentially the  Proposition 7 of \cite{LW}.

\begin{prop}[\cite{LW}]\label{keyLW}
Fix $p\ge1$, $1\le l\le p$. For any $t\in I_{p,l}$,
$$\begin{array}{l}
|S_{p+1}(t)|\le\frac{1}{4},\quad |S_p(t)|\le \frac{5}{4},\\[2mm]
\frac{p+1}{3}\csc^2\frac{l\pi}{p+1}
\le |S'_{p+1}(t)|\le (p+1)\csc^2\frac{l\pi}{p+1},\quad
|S'_p(t)|\le2|S'_{p+1}(t)|.\\[2mm]
|S''_{p+1}(t)|\le 4p^2\csc^3\frac{l\pi}{p+1},\quad |S''_{p}(t)|\le 4p^2\csc^3\frac{l\pi}{p+1}.
\end{array}$$
\end{prop}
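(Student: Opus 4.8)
\emph{Plan of proof.} The engine is the trigonometric form of the normalized Chebyshev polynomials of the second kind: writing $t=2\cos\theta$ with $\theta\in(0,\pi)$, one has $S_q(2\cos\theta)=\sin q\theta/\sin\theta$ for every $q\ge1$. By the very definition of $I_{p,l}$, each $t\in I_{p,l}$ has the form $t=2\cos\theta$ with $\theta=\frac{(l+c)\pi}{p+1}$, $|c|\le\frac1{10}$, and in addition $|S_{p+1}(t)|\le\frac14$; since $S_{p+1}(2\cos\theta)=\pm\sin c\pi/\sin\theta$, this last condition says $|\sin c\pi|\le\frac14|\sin\theta|$, and it is precisely the input that will absorb all lower order corrections below. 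This is essentially the proof of Proposition~7 in \cite{LW}.

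First I would settle the zeroth order bounds. The estimate $|S_{p+1}(t)|\le\frac14$ is built into $I_{p,l}$. For $S_p$, use $p\theta=(l+c)\pi-\theta$, so that $S_p(t)=\sin\big((l+c)\pi-\theta\big)/\sin\theta$; expanding the sine and using $|\cos\theta|\cdot\frac{|\sin c\pi|}{|\sin\theta|}\le|S_{p+1}(t)|\le\frac14$ together with $|\cos c\pi|\le1$ gives $|S_p(t)|\le1+\frac14=\frac54$.

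For the derivatives I would differentiate the identity $S_q(2\cos\theta)\sin\theta=\sin q\theta$ with respect to $t$, using $dt/d\theta=-2\sin\theta$; each application of $d/dt$ therefore divides by $-2\sin\theta$ and raises the power of $\csc\theta$ by one. Specializing to $\theta=\frac{(l+c)\pi}{p+1}$, where $\sin(p+1)\theta=\pm\sin c\pi$ and $\cos(p+1)\theta=\pm\cos c\pi$, yields closed forms such as
$$
S'_{p+1}(t)=\frac{\pm\big((p+1)\cos c\pi-\sin c\pi\,\cot\theta\big)}{2\sin^2\theta},
$$
and analogous expressions for $S'_p$, $S''_{p+1}$, $S''_p$, each with a leading term proportional to a power of $p+1$ together with correction terms carrying a factor $\sin c\pi$. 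The corrections are controlled by $|\sin c\pi\,\cot\theta|=|\cos\theta|\cdot\frac{|\sin c\pi|}{|\sin\theta|}\le\frac14$, while $|c|\le\frac1{10}$ forces $\cos c\pi\ge\cos\frac\pi{10}$, which is close to $1$. Finally one replaces $\csc^k\theta$ by $\csc^k\frac{l\pi}{p+1}$: since $1\le l\le p$ one has $\frac{l\pi}{p+1}\in[\frac\pi{p+1},\frac{p\pi}{p+1}]$, while $\big|\theta-\frac{l\pi}{p+1}\big|=\frac{|c|\pi}{p+1}\le\frac1{10l}\cdot\frac{l\pi}{p+1}$, whence $\big|\theta-\frac{l\pi}{p+1}\big|/\sin\frac{l\pi}{p+1}\le\frac{\pi}{20}$, so that $\sin\theta\asymp\sin\frac{l\pi}{p+1}$ with absolute constants (the ratio tending to $1$ as $l$ grows). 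Plugging all of this into the closed forms produces the explicit constants $\frac13,1,2,4$ in the statement.

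\textbf{Main obstacle.} The only real work is the bookkeeping of the explicit constants: the lower bound $|S'_{p+1}(t)|\ge\frac{p+1}{3}\csc^2\frac{l\pi}{p+1}$ and the second derivative bounds are tight enough that one must use \emph{both} inputs from the definition of $I_{p,l}$ simultaneously --- the restriction $|c|\le\frac1{10}$, to keep $\cos c\pi$ near $1$, and $|S_{p+1}(t)|\le\frac14$, to kill the $\cot\theta$ and $\csc\theta$ type corrections --- and one must be careful at the extreme indices $l=1$ and $l=p$, where $\sin\frac{l\pi}{p+1}$ is smallest and the comparison $\sin\theta\asymp\sin\frac{l\pi}{p+1}$ is least comfortable (there the point is that $|c|/l$ is itself small). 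No idea beyond the substitution $t=2\cos\theta$ is needed.
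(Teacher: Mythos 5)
The paper offers no proof of this proposition: it is imported from \cite{LW} (Proposition 7 there) with only a citation, so there is no in-paper argument to compare yours with. Your plan is the standard one and, as far as I can tell, the same route as \cite{LW}: substitute $t=2\cos\theta$, use $S_q(2\cos\theta)=\sin q\theta/\sin\theta$, evaluate at $\theta=\frac{(l+c)\pi}{p+1}$ so that $\sin(p+1)\theta=\pm\sin c\pi$ and $\cos(p+1)\theta=\pm\cos c\pi$, and exploit both defining conditions of $I_{p,l}$ (the condition $|c|\le\frac1{10}$ keeps $\cos c\pi\ge\cos\frac{\pi}{10}$, while $|S_{p+1}(t)|\le\frac14$ is exactly $|\sin c\pi|\le\frac14\sin\theta$ and absorbs the $\cot\theta$- and $\csc\theta$-type corrections). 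The zeroth-order bounds and the closed form for $S'_{p+1}$ are correct as written.

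One quantitative point needs fixing before this counts as a proof. The comparison constant you actually write down, $\bigl|\theta-\frac{l\pi}{p+1}\bigr|/\sin\frac{l\pi}{p+1}\le\frac{\pi}{20}$, only yields $\sin\theta\le(1+\frac{\pi}{20})\sin\frac{l\pi}{p+1}$, and $(1+\frac{\pi}{20})^2\approx 1.339$ is too large to produce the constant $\frac13$ in the lower bound for $|S'_{p+1}|$ when $p$ is small: from $|S'_{p+1}(t)|\ge\frac{(p+1)\cos(\pi/10)-1/4}{2\sin^2\theta}$ the requirement is $\sin^2\theta/\sin^2\frac{l\pi}{p+1}\le\frac32\cos\frac{\pi}{10}-\frac{3}{8(p+1)}$, which is about $1.24$ at $p=1$ and only exceeds $1.339$ once $p+1\ge5$. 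The repair is the concavity inequality $\sin(\mu x)\le\mu\sin x$ for $\mu\ge1$, $\mu x\le\pi$, applied with $\mu=1+\frac{|c|}{l}$ (or with $\mu=1+\frac{|c|}{p+1-l}$ after reflecting $\theta\mapsto\pi-\theta$ when $c<0$); this gives the uniform bound $\sin\theta\le\frac{11}{10}\sin\frac{l\pi}{p+1}$, and $(\frac{11}{10})^2=1.21$ clears the requirement for every $p\ge1$. You gesture at exactly this point (``$|c|/l$ is itself small'' at the extreme indices) but never actually use it, and the $\frac{\pi}{20}$ bound you do use is not enough. With that sharper comparison in place, the remaining constants $\frac54$, $1$, $2$ and $4$ all close by the bookkeeping you describe.
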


The following result shown in \cite{FLW} will also be useful later.

\begin{prop}\label{lm-2}
Assume $V\ge20$. Let $(\hat{B}_i)_{i=0}^m$ be a modified ladder,
$(\hat{h}_i)_{i=0}^m$, $(p_i)_{i=0}^{m-1}$ and $(l_i)_{i=0}^{m-1}$
be the corresponding generating polynomials, type sequence and index sequence.
For any $0\le i< m$, $x\in \hat{B}_{i+1}$, we have,
$$\frac{V-8}{3}(p_i+1)\csc^2\frac{l_i\pi}{p_i+1}
\le\left|\frac{\hat{h}'_{i+1}(x)}{\hat{h}'_{i}(x)}\right|
\le (V+5)(p_i+1)\csc^2\frac{l_i\pi}{p_i+1}.$$
\end{prop}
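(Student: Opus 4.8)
The proof of Proposition \ref{lm-2} is the quantitative heart of the derivative estimates, so let me describe how I would attack it. The plan is to work from the explicit recursion \eqref{exprderi} together with \eqref{d-cha}, and to control each of the four terms on the right-hand side using the bounds on Chebyshev polynomials from Proposition \ref{keyLW} and the bounds \eqref{dec} on $z$ and its partials. The key structural input is Proposition \ref{index}: since $x \in \hat B_{i+1}$ forces $\hat h_i(x) \in I_{p_i,l_i}$, every Chebyshev quantity $S_{p_i+1}(\hat h_i(x))$, $S_{p_i}(\hat h_i(x))$, $S'_{p_i+1}(\hat h_i(x))$, $S'_{p_i}(\hat h_i(x))$ is pinned down by Proposition \ref{keyLW}. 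In particular the dominant term will be $S'_{p_i+1}(\hat h_i(x))\, z_\pm(x)$, whose size is comparable to $V \cdot (p_i+1)\csc^2\frac{l_i\pi}{p_i+1}$, and the whole proposition amounts to showing the three remaining terms are lower-order corrections.

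Concretely, I would proceed by induction on $i$. The inductive hypothesis would be precisely the claimed two-sided bound on $|\hat h'_i(x)/\hat h'_{i-1}(x)|$ (with the base case $i=0$ handled directly, since $\hat h_0$ is an affine or low-degree generating polynomial and $\hat h_{-1}$, $\hat h_0$ are known explicitly — here one must be a little careful about how the first rung is initialized). For the inductive step, I would first bound $|z'_\pm(x)/\hat h'_i(x)|$ via \eqref{d-cha}: by \eqref{dec} the partials $z_1,z_2$ are at most $1$ in absolute value, and $|\hat h'_{i-1}(x)/\hat h'_i(x)| \le 1$ because $|\hat h'_{i-1}/\hat h'_i| = (|\hat h'_i/\hat h'_{i-1}|)^{-1}$ and the inductive hypothesis gives $|\hat h'_i/\hat h'_{i-1}| \ge \frac{V-8}{3}(p_{i-1}+1)\csc^2(\cdot) \ge \frac{V-8}{3} \cdot 2 > 1$ for $V \ge 20$; hence $|z'_\pm(x)/\hat h'_i(x)| \le 2$. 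Then in \eqref{exprderi} the terms $S_{p_i+1}(\hat h_i)\frac{z'_\pm}{\hat h'_i}$ and $S_{p_i}(\hat h_i)\frac{\hat h'_{i-1}}{\hat h'_i}$ are each bounded by a small constant (using $|S_{p_i+1}| \le 1/4$, $|S_{p_i}| \le 5/4$), and the term $S'_{p_i}(\hat h_i)\hat h_{i-1}$ is bounded by $2|S'_{p_i+1}(\hat h_i)|\cdot 2 = 4|S'_{p_i+1}(\hat h_i)|$ since $|\hat h_{i-1}(x)| \le 2$. Meanwhile $|S'_{p_i+1}(\hat h_i)\, z_\pm(x)|$ is between $\frac{V-2}{3}(p_i+1)\csc^2\frac{l_i\pi}{p_i+1}$ and $(V+2)(p_i+1)\csc^2\frac{l_i\pi}{p_i+1}$. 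Assembling via the triangle inequality, the upper bound becomes $(V+2+\text{small})(p_i+1)\csc^2(\cdot) \le (V+5)(p_i+1)\csc^2(\cdot)$, and the lower bound becomes $(\frac{V-2}{3} - \text{small})(p_i+1)\csc^2(\cdot)$, which one checks is $\ge \frac{V-8}{3}(p_i+1)\csc^2(\cdot)$ once $V \ge 20$; here one uses that $(p_i+1)\csc^2\frac{l_i\pi}{p_i+1} \ge 2$ so the constant additive errors get absorbed into the $\csc^2$ factor.

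The main obstacle, as usual in this kind of argument, is the lower bound: one has to make sure that the small additive and multiplicative errors never eat up the main term, and this is exactly where the threshold $V \ge 20$ enters. The subtle points are (a) verifying $|\hat h'_{i-1}/\hat h'_i| \le 1$ rigorously, which requires that the inductive hypothesis was already established at level $i$ before estimating level $i+1$ — so the induction must be set up so that the ratio $\hat h'_i/\hat h'_{i-1}$ is controlled before one touches $\hat h'_{i+1}/\hat h'_i$; and (b) tracking the exact numerical slack: $\frac{V-2}{3} - \varepsilon_0 \ge \frac{V-8}{3}$ needs $\varepsilon_0 \le 2$, and the contributions of $S_{p_i+1}\frac{z'_\pm}{\hat h'_i}$, $S_{p_i}\frac{\hat h'_{i-1}}{\hat h'_i}$ must be shown to total at most $2(p_i+1)\csc^2(\cdot)$ which is automatic since each is $\le 1/4 \cdot 2 = 1/2 < (p_i+1)\csc^2(\cdot)$. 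Since all the hard analytic content (the behaviour of $S_{p+1}, S_p$ on $I_{p,l}$, and the containment $\hat h_i(\hat B_{i+1}) \subset I_{p_i,l_i}$) is already quoted from \cite{LW} and \cite{FLW} as Propositions \ref{keyLW} and \ref{index}, the proof here is essentially bookkeeping with \eqref{exprderi}, \eqref{d-cha}, and \eqref{dec} — indeed this is why the paper simply cites \cite{FLW} Proposition 3.3 rather than reproving it.
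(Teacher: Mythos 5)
First, note that the paper itself does not prove this proposition: it is quoted verbatim from \cite{FLW} (Proposition 3.3 there), so what you are reconstructing is the proof of the cited result. Your overall strategy is the right one and is the same machinery the paper uses later for Proposition \ref{lip}: induct on $i$, use \eqref{exprderi} and \eqref{d-cha}, isolate $S'_{p_i+1}(\hat h_i)z_\pm$ as the main term via Propositions \ref{index} and \ref{keyLW} and \eqref{dec}, and check that the inductive bound at level $i$ gives $|\hat h'_{i-1}/\hat h'_i|\le \tfrac{3}{2(V-8)}<1$ so that $|z'_\pm/\hat h'_i|\le 2$. That part is sound, and your flag about initializing $i=0$ (where $\hat h_{-1}$ must be interpreted, e.g.\ via $t_{(0,0)}\equiv 2$ so that $\hat h'_{-1}\equiv 0$) is the right thing to worry about.

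The genuine gap is in the numerical assembly, and it affects the upper bound. The term $S'_{p_i}(\hat h_i)\hat h_{i-1}$ is \emph{not} a lower-order correction: with only the tools you quote, namely $|S'_{p_i}|\le 2|S'_{p_i+1}|$ from Proposition \ref{keyLW} and $|\hat h_{i-1}(x)|\le 2$, it contributes up to $4|S'_{p_i+1}(\hat h_i)|\le 4(p_i+1)\csc^2\theta_i$, i.e.\ it adds $4$ to the constant, not something absorbed into ``small''. Your triangle inequality therefore yields an upper constant of at best $V+2+4+O(1)/((p_i+1)\csc^2\theta_i)\approx V+6.4$, which does not reach the stated $V+5$. (The lower bound does survive, but only if you factor out $|S'_{p_i+1}|$ before applying its lower bound: $(V-2-4)|S'_{p_i+1}|\ge \tfrac{V-6}{3}(p_i+1)\csc^2\theta_i$, then subtract the two genuinely small terms, each at most $\tfrac12$, using $(p_i+1)\csc^2\theta_i\ge 2$, to land at $\tfrac{V-7}{3}\ge\tfrac{V-8}{3}$. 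If instead you replace $|S'_{p_i+1}|$ by its upper bound in the error term you get $\tfrac{V-2}{3}-4=\tfrac{V-14}{3}$, which fails; your write-up with ``$\varepsilon_0\le 2$'' only accounts for the last two terms of \eqref{exprderi} and silently drops this one.) To obtain the exact constants of the proposition one needs a sharper comparison of $S'_{p_i}$ with $S'_{p_i+1}$ on $I_{p_i,l_i}$ than the factor-$2$ bound recorded in Proposition \ref{keyLW} (roughly, $|S'_p(t)|\approx \tfrac{|t|}{2}|S'_{p+1}(t)|$ up to small errors, which is what the computation in \cite{FLW} exploits); as written, your argument proves the proposition only with a weaker upper constant, and since $t_2=2(V+5)$ feeds into Lemma \ref{lem-bc} and Proposition \ref{upper-lower}, the specific constant is not cosmetic here. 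Also a small slip: for the term $S_{p_i}\hat h'_{i-1}/\hat h'_i$ you should use $|S_{p_i}|\le 5/4$, not $1/4$; the term is still $\le 5/32$ by the inductive bound on the derivative ratio, so this does not change the conclusion.
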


 \smallskip

\noindent {\bf Proof of Lemma \ref{lem-bc}.}\
Given $w\in \Omega_n.$  Consider the initial ladder $(B_i)_{i=0}^n$ with $B_0$
the unique band in $\mathscr{G}_0$  containing $B_w$ and $B_n=B_w.$
 Let $(\hat{B}_i)_{i=0}^m$ be the related  modified ladder and
$(\hat{h}_i)_{i=0}^m$, $(p_i)_{i=0}^{m-1}$ and $(l_i)_{i=0}^{m-1}$
be the corresponding generating polynomials, type sequence and index
sequence.
 Since $\hat h_m(\hat B_m)=[-2,2]$, there exists $x_0\in\hat B_m$
 such that $|\hat h_m^\prime(x_0)| |\hat B_m|=4.$
Notice also that
  $|\hat{h}_{0}^\prime|\equiv 1$(see the explanation after Definition \ref{def1}),
  then by Proposition \ref{lm-2}, the definition of  modified ladder and \eqref{pvalue}
\begin{eqnarray*}
 |B_w|&=&|\hat{B}_m|= 4\frac{|\hat h_0^\prime(x_0)|}{|\hat h_m^\prime(x_0)|}\le 4
\prod_{i=0}^{m-1}\frac{3\sin^2\frac{l_i\pi}{p_i+1}}{(V-8)(p_i+1)}\\
&\le&4\prod_{e_j=e_{12}}\frac{1}{(2t_1)^{a_j-1}}\cdot \prod_{e_j\ne e_{12}}
\frac{1}{(\tau_{e_j}(a_j)+1)t_1}\\
&\le&4\prod_{e_j=e_{12}}\frac{1}{t_1^{a_j-1}}\cdot \prod_{e_j\ne e_{12}} \frac{1}{a_jt_1}.
\end{eqnarray*}
Similarly by using the facts that $\sin x\ge 2x/\pi$ for $x\in[0,\pi/2]$
and $\tau_e(n)+1\le 3n$  we have
\begin{eqnarray*}
 |B_w|&\ge&  4
\prod_{i=0}^{m-1}\frac{\sin^2\frac{l_i\pi}{p_i+1}}{(V+5)(p_i+1)}\\
&\ge&4\prod_{e_j=e_{12}}\frac{1}{(2(V+5))^{a_j-1}}\cdot \prod_{e_j\ne e_{12}}
\frac{4}{(\tau_{e_j}(a_j)+1)^3(V+5)}\\
&\ge&\prod_{e_j=e_{12}}\frac{1}{t_2^{a_j-1}}\cdot \prod_{e_j\ne e_{12}} \frac{1}{a_j^3t_2}.
\end{eqnarray*}

Notice that $a_j\ge 1$ and   $t_1\ge 4$ since $V\ge 20.$
Also notice that any two consecutive  initial rungs can not be type $I$ simultaneously, we get
$$
|B_w|\le 4\prod_{e_j\ne e_{12}} \frac{1}{a_jt_1}\le 4\cdot 4^{-n/2},
$$
which implies the result.
\hfill $\Box$


\section{Bounded variation}\label{bdvar}

The following properties is fundamental for the proof of bounded variation, bounded covariation and continuity of pre-dimensions.

\begin{prop}\label{lip} Assume $V,\widetilde V\ge20$.
Let $(\hat{B}_i)_{i=0}^m$ and $(\hat{\tilde{B}}_i)_{i=0}^m$ be
modified ladders of $\Sigma_{\alpha,V}$ and
$\Sigma_{\alpha,\widetilde V}$ respectively with the same type
sequence $(p_i)_{i=0}^{m-1}$ and index sequence $(l_i)_{i=0}^{m-1}$.
Let $(\hat{h}_i)_{i=0}^m$ and $(\hat{\tilde{h}}_i)_{i=0}^m$ be their
corresponding generating polynomials. Let $x_1\in \hat{B}_{m}$,
$x_2\in \hat{\tilde{B}}_{m}$. Write $q_i=p_i+1$ and
$\theta_i=l_i\pi/q_i.$

(1) If there exist constants $c_1,c_2>1$ and $\lambda>1$ such that
\begin{equation}\label{hcond1}
|\hat{h}_i(x_1)-\hat{\tilde{h}}_i(x_2)|\le
{q_i}^{-1}{\sin^2\theta_i} \left(c_1\lambda^{-i}+c_2|V-\widetilde
V|\right), \quad \forall\ 0<i<m,
\end{equation}
Then there exists  constants $\xi, M\geq 1$  such that
\begin{equation}\label{hresult}
\frac{\left|{\hat{h}'_m(x_1)}/{\hat{h}'_{0}(x_1)}\right|}
{\left|{\hat{\tilde{h}}'_m(x_2)}/\hat{\tilde{h}}'_{0}(x_2)\right|}\le\xi
\exp({M m|V-\widetilde V|}).
\end{equation}
More explicitly we can take
\begin{equation}\label{xi-form}
\begin{array}{rcl}
\xi&=&\exp\left((V+\widetilde
V+130+(4V+53)c_1)(3+\frac{\lambda^2}{(\lambda-1)^2})\right)\\
  M&=&(8V+56)c_2+6.\end{array}
\end{equation}

(2) If there exist constants $c_1, \lambda>1$ such that
\begin{equation}\label{hcond2}
|\hat{h}_i(x_1)-\hat{\tilde{h}}_i(x_2)|\le c_1\
q_i^{-1}\sin^2\theta_i\ \lambda^{-m+i}, \quad \forall\ 0<i<m.
\end{equation}
Then there exists a constant $\xi\ge1$  such that
\begin{equation}\label{hresult1}
\frac{\left|{\hat{h}'_m(x_1)}/{\hat{h}'_{0}(x_1)}\right|}
{\left|{\hat{\tilde{h}}'_m(x_2)}/\hat{\tilde{h}}'_{0}(x_2)\right|}\le\xi
\exp({6 m|V-\widetilde V|}).
\end{equation}
Moreover $\xi$ take the same form as in \eqref{xi-form}.

\end{prop}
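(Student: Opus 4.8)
The plan is to exploit the multiplicative structure $\hat h'_m/\hat h'_0 = \prod_{i=0}^{m-1}(\hat h'_{i+1}/\hat h'_i)$ (and likewise for the tilde system), so that the ratio in \eqref{hresult} becomes $\prod_{i=0}^{m-1} R_i$ with
$$
R_i = \frac{\hat h'_{i+1}(x_1)/\hat h'_i(x_1)}{\hat{\tilde h}'_{i+1}(x_2)/\hat{\tilde h}'_i(x_2)}.
$$
Taking logarithms, it suffices to bound $\sum_{i=0}^{m-1}|\ln R_i|$ by $\ln\xi + Mm|V-\widetilde V|$. So the heart of the matter is a per-rung estimate: control $|\ln R_i|$, or equivalently the relative difference of the two quotients $\hat h'_{i+1}/\hat h'_i$ and $\hat{\tilde h}'_{i+1}/\hat{\tilde h}'_i$, in terms of the hypothesis \eqref{hcond1} (resp. \eqref{hcond2}) on $|\hat h_i(x_1)-\hat{\tilde h}_i(x_2)|$, together with $|V-\widetilde V|$.

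First I would write each quotient using the explicit formula \eqref{exprderi}, with the $z'_\pm/\hat h'_i$ term expanded by \eqref{d-cha}. Subtracting the two expressions and using the triangle inequality, the difference of the two quotients splits into contributions of three kinds: (a) differences of Chebyshev values $S_{p_i+1}, S_{p_i}, S'_{p_i+1}, S'_{p_i}, S''_{p_i+1}$ evaluated at $\hat h_i(x_1)$ versus $\hat{\tilde h}_i(x_2)$ — these are controlled by the mean value theorem together with the second-derivative bounds of Proposition \ref{keyLW} (which give factors like $p_i^2\csc^3\theta_i$), and then by hypothesis \eqref{hcond1} the resulting size is $\lesssim q_i^2\csc\theta_i\cdot q_i^{-1}\sin^2\theta_i(c_1\lambda^{-i}+c_2|V-\widetilde V|) \asymp (c_1\lambda^{-i}+c_2|V-\widetilde V|)$ up to the benign factor $q_i\sin\theta_i$; (b) differences coming from $z_\pm$ and its partials $z_1,z_2$ evaluated at different points and different $V$ — controlled by \eqref{dec} (the bounds $|z_j|\le1$, $|z_{ij}|\le1$) plus the mean value theorem in all three variables, producing terms of size $\lesssim |\hat h_i(x_1)-\hat{\tilde h}_i(x_2)| + |\hat h_{i-1}(x_1)-\hat{\tilde h}_{i-1}(x_2)| + |V-\widetilde V|$; and (c) the recursive term $S_{p_i}(\hat h_i)\,\hat h'_{i-1}/\hat h'_i$, which carries the difference at level $i-1$ forward, i.e. contributes a constant multiple of the relative difference of the quotients $\hat h'_i/\hat h'_{i-1}$ — this is the mechanism that would, if not damped, cause the estimate to blow up. The key quantitative point, exactly as in the induction scheme behind \eqref{holder}, is that in (c) the factor $|S_{p_i}(\hat h_i)|\le 5/4$ is divided by $|\hat h'_i/\hat h'_{i-1}|\ge (V-8)/3 \cdot q_i\csc^2\theta_i \ge 4$ (Proposition \ref{lm-2}), so the recursion contracts with ratio $\le 5/16 < 1/2$, which after unrolling contributes only a geometric constant (this is where the $\lambda^2/(\lambda-1)^2$ and the $(4V+53)c_1$ in \eqref{xi-form} come from, the $\lambda^{-i}$ being summed against the contraction).

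Next I would normalize the per-rung bound: dividing everything by $|\hat h'_{i+1}/\hat h'_i|$ (again using Proposition \ref{lm-2} for the lower bound $\tfrac{V-8}{3}q_i\csc^2\theta_i$), one converts the additive difference into a bound on $|\ln R_i|$ of the shape
$$
|\ln R_i| \;\le\; A_i + 6|V-\widetilde V| + \tfrac{1}{2}\,|\ln R_{i-1}|\text{-type term},
$$
where $A_i \lesssim (V+\widetilde V+\text{const}+(4V+53)c_1)\lambda^{-i}$ absorbs the $\lambda$-summable parts and the $6|V-\widetilde V|$ (resp. the $(8V+56)c_2|V-\widetilde V|$ piece) comes from the $V$-dependence in (a) and (b). Summing over $i=0,\dots,m-1$ and handling the recursive pieces by the geometric-series trick yields $\sum_i|\ln R_i| \le \ln\xi + Mm|V-\widetilde V|$ with precisely the $\xi$ and $M$ in \eqref{xi-form}; exponentiating gives \eqref{hresult}. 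For part (2), the only change is that hypothesis \eqref{hcond2} supplies $\lambda^{-m+i}$ in place of $c_1\lambda^{-i}+c_2|V-\widetilde V|$: there is no $c_2|V-\widetilde V|$ term inside the Chebyshev/mean-value estimates, so the $(8V+56)c_2$ contribution to $M$ disappears and one is left with the residual $6|V-\widetilde V|$ coming solely from the explicit $V$-dependence of $z_\pm$; meanwhile $\sum_i \lambda^{-m+i}\le \lambda/(\lambda-1)$ is even better behaved than before, so the same $\xi$ works. The main obstacle, and the step deserving the most care, is the per-rung estimate in the presence of the backward-recursive term (c): one must verify that the contraction factor is genuinely $<1$ uniformly (using $V\ge 20$), track the $\csc\theta_i$ and $p_i^2$ powers through Proposition \ref{keyLW} so they cancel against the gain in Proposition \ref{lm-2} down to harmless $q_i\sin\theta_i\le \pi$ factors, and bookkeep the constants tightly enough to land on \eqref{xi-form}.
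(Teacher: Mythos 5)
Your proposal follows essentially the same route as the paper's proof: the telescoping $\ln$-sum over rungs (controlled via Proposition \ref{lm-2} and the elementary bound $|\ln y-\ln x|\le (x\wedge y)^{-1}|y-x|$), the per-rung difference of $\hat h'_{i+1}/\hat h'_i$ obtained from \eqref{exprderi}--\eqref{d-cha} and split into Chebyshev, $z_\pm$, and recursive contributions estimated with Proposition \ref{keyLW} and \eqref{dec}, and the damped recursion (the paper's inequality \eqref{prop2}) resolved by induction and geometric summation, with the two hypotheses \eqref{hcond1}/\eqref{hcond2} treated exactly as you describe. Apart from cosmetic details (a small index slip in the lower bound for $|\hat h'_i/\hat h'_{i-1}|$, and the paper's extra case distinction $\lambda\le 6$ versus $\lambda>6$ in its induction), your plan matches the paper's argument.
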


\noindent {\bf Proof}. To simplify the notation we write
$$
\Gamma_i:=\left|\frac{\hat{h}'_{i+1}(x_1)}{\hat h'_{i}(x_1)}-
\frac{\hat{\tilde{h}}'_{i+1}(x_2)}{\hat{\tilde{h}}'_{i}(x_2)}\right|\
\ \ \text{ and }\ \ \
  \Delta_i:=|\hat{h}_i(x_1)-\hat{\tilde{h}}_i(x_2)|.
$$

We will prove that  for any $0<i<m$,
\begin{eqnarray}\label{prop2}
\nonumber\Gamma_i&\leq& (4V+23)q_i^2\csc^3\theta_i\cdot \Delta_i+
5q_i\csc^2\theta_i\cdot
\Delta_{i-1}\\
&&+(6q_{i-1}^2)^{-1}\sin^4\theta_{i-1}
\cdot\Gamma_{i-1}+3|V-\widetilde V|q_i\csc^2\theta_i.
\end{eqnarray}

We show first that \eqref{prop2} implies \eqref{hresult} and
\eqref{hresult1}. Notice that by Proposition \ref{lm-2}
\begin{equation}\label{gamma0}
\Gamma_0\leq q_0\csc^2\theta_0(V+\widetilde V+10).
\end{equation}

 \eqref{hresult} and \eqref{hresult1}
  can be shown through the following two claims.

Let $\widetilde M= V+\widetilde V+130+(4V+53)c_1$, $M_1=(4V+28)c_2+3$ and
$M=2M_1.$

 \noindent {\bf Claim 1:} If \eqref{hcond1} holds, then
  for
$0<i<m$
$$
\Gamma_i\leq 2q_i\csc^2\theta_i\left( \widetilde M
(i+1)(6^{-i}+\lambda^{-i})+M_1|V-\widetilde
V|\sum_{l=0}^{i-1}6^{-l}\right).
$$

\noindent $\triangleleft$ Assume first $1<\lambda\leq 6$,
we will show  by induction
$$
\Gamma_i\leq q_i\csc^2\theta_i\left( \widetilde M
(i+1)(6^{-i}+\lambda^{-i})+M_1|V-\widetilde
V|\sum_{l=0}^{i-1}6^{-l}\right).
$$

It is trivially $\Delta_0\leq 4.$ By
using \eqref{prop2} and \eqref{hcond1} for $i=1$,
together with \eqref{gamma0}, we get
\begin{eqnarray*}
\Gamma_1&\leq& (4V+23)q_1^2\csc^3\theta_1\cdot \Delta_1+
5q_1\csc^2\theta_1\cdot
\Delta_{0}\\
&&+(6q_{0}^2)^{-1}\sin^4\theta_{0} \cdot\Gamma_{0}+3|V-\widetilde
V|q_1\csc^2\theta_1\\
&\leq&(4V+23)q_1\csc\theta_1\left(c_1\lambda^{-1}+c_2|V-\widetilde
V|\right)+20q_1\csc^2\theta_1\\
&&+(V+\tilde V+10)/6+3|V-\widetilde V|q_1\csc^2\theta_1\\
&\leq & q_1\csc^2\theta_1\left( 2\widetilde
M(6^{-1}+\lambda^{-1})+M_1|V-\widetilde V|\right).
\end{eqnarray*}

Assume $i>1$ and the statement holds for $i-1$. By \eqref{prop2},
\eqref{hcond1} and $1<\lambda\le6$, we get
\begin{eqnarray*}
\Gamma_i&\leq&
(4V+23)q_i\csc\theta_i\left(c_1\lambda^{-i}+c_2|V-\widetilde
V|\right)\\
&&+5q_i\csc^2\theta_i\left(c_1\lambda^{-i+1}+c_2|V-\widetilde V|\right)\\
&&+ \frac{1}{6}\left( \widetilde M
i(6^{-i+1}+\lambda^{-i+1})+M_1|V-\widetilde
V|\sum_{l=0}^{i-2}6^{-l}\right)
+3|V-\widetilde V|q_i\csc^2\theta_i\\
 &\leq& q_i\csc^2\theta_i\left(
\widetilde M(i+1)(6^{-i}+\lambda^{-i})+M_1|V-\widetilde
V|\sum_{l=0}^{i-1}6^{-l}\right).
\end{eqnarray*}
Thus the result holds for $i$.

Now if $\lambda>6,$ then \eqref{hcond1} fulfills for $\lambda_0=6.$
Thus by what have been proven we get
\begin{eqnarray*}
\Gamma_i&\leq& q_i\csc^2\theta_i\left( \widetilde M
(i+1)(6^{-i}+\lambda_0^{-i})+M_1|V-\widetilde
V|\sum_{l=0}^{i-1}6^{-l}\right)\\
&\leq& 2q_i\csc^2\theta_i\left( \widetilde M
(i+1)(6^{-i}+\lambda^{-i})+M_1|V-\widetilde
V|\sum_{l=0}^{i-1}6^{-l}\right).
\end{eqnarray*}

\hfill $\vartriangleright$

\noindent {\bf Claim 2:} If \eqref{hcond2} holds, then for $0<i<m$
$$
\Gamma_i\leq   q_i\csc^2\theta_i\left(\widetilde
M(6^{-i}+\lambda^{i-m})+3|V-\widetilde
V|\sum_{l=0}^{i-1}6^{-l}\right).
$$

\noindent $\triangleleft$ We show it by induction. By
\eqref{prop2} and \eqref{hcond2} for $i=1$, together with \eqref{gamma0},
\begin{eqnarray*}
\Gamma_1&\leq& (4V+23)q_1^2\csc^3\theta_1\cdot \Delta_1+
5q_1\csc^2\theta_1\cdot
\Delta_{0}\\
&&+(6q_{0}^2)^{-1}\sin^4\theta_{0} \cdot\Gamma_{0}+3|V-\widetilde
V|q_1\csc^2\theta_1\\
 &\le&  q_1\csc^2\theta_1\left(\widetilde
M(6^{-1}+\lambda^{1-m})+3|V-\widetilde V|\right).
\end{eqnarray*}

Assume $i>1$ and the statement holds for $i-1$. By \eqref{prop2} and
induction, if the condition \eqref{hcond2} holds, we can get
\begin{eqnarray*}\label{hinduct-2}
\Gamma_i&\leq& \nonumber
(4V+23)c_1q_i\csc\theta_i\lambda^{i-m}+5c_1q_i\csc^2\theta_i\lambda^{i-1-m}+
\frac{\widetilde M}{6}(6^{1-i}+\lambda^{i-1-m})\\
&&+3|V-\widetilde V|\sum_{l=1}^{i-1}6^{-l}+3|V-\widetilde V|q_i\csc^2\theta_i\\
&\leq& q_i\csc^2\theta_i\left(\widetilde
M(6^{-i}+\lambda^{i-m})+3|V-\widetilde
V|\sum_{l=0}^{i-1}6^{-l}\right),
\end{eqnarray*}
so the conclusion holds. \hfill $\vartriangleright$

The following inequality is basic for us: for any $x,y>0$, we have
\begin{equation}\label{lndiff}
|\ln y-\ln x|\leq (x\wedge y)^{-1}|y-x|.
\end{equation}

As what have been done for  Cookie-cutter set, we have
\begin{eqnarray}\label{thm1-1}
\nonumber&&\left|\ln\left|\frac{\hat{h}'_m(x_1)}{\hat{h}'_0(x_1)}\right|-
\ln\left|\frac{\hat{\tilde{h}}'_m(x_2)}{\hat{\tilde{h}}'_0(x_2)}\right|
\right|\\
\nonumber&=&\left|\sum\limits_{i=0}^{m-1}\left(
\ln\left|\frac{\hat{h}'_{i+1}(x_1)}{\hat{h}'_{i}(x_1)}\right|-
\ln\left|\frac{\hat{\tilde{h}}'_{i+1}(x_2)}
{\hat{\tilde{h}}'_{i}(x_2)}\right|\right)\right|\\
\nonumber&\leq&\sum\limits_{i=0}^{m-1}\left|\left(
\ln\left|\frac{\hat{h}'_{i+1}(x_1)}{\hat{h}'_{i}(x_1)}\right|-
\ln\left|\frac{\hat{\tilde{h}}'_{i+1}(x_2)}
{\hat{\tilde{h}}'_{i}(x_2)}\right|\right)\right|\\
&\leq&\sum\limits_{i=0}^{m-1}\frac{3}{V\wedge\widetilde
V-8}\frac{\sin^2\theta_i}{q_i}\Gamma_i\\
\label{thm1-2}&\leq&\frac{1}{4}\sum\limits_{i=0}^{m-1}\frac{\sin^2\theta_i}{q_i}\Gamma_i\\
\label{thm1-3}&\leq& \widetilde
M\left(1+{36}/{25}+\lambda^2/(\lambda-1)^2\right)+
\begin{cases}
M m|V-\widetilde V|, & \text{by claim } 1\\
6m|V-\widetilde V|, & \text{by claim  } 2,
\end{cases}
\end{eqnarray}
where \eqref{thm1-1} is due to \eqref{lndiff} and Proposition
\ref{lm-2}; \eqref{thm1-2} is due to $V,\widetilde V>20$ and
\eqref{thm1-3} is due to \eqref{gamma0} and the two claims above.
Consequently \eqref{hresult} and \eqref{hresult1} follow.

\medskip

Now fix $0<i<m$, we are going to prove \eqref{prop2}.

For convenience, we denote
$z_{\pm}(\hat{h}_i(x_1),\hat{h}_{i-1}(x_1),V)$ as $z_\pm(x_1)$, and
denote
$z_{\pm}(\hat{\tilde{h}}_i(x_2),\hat{\tilde{h}}_{i-1}(x_2),\widetilde
V)$ as $z_\pm(x_2)$. Both $(\hat{h}_{i+1}(x_1)$, $\hat{h}_i(x_1)$,
$\hat{h}_{i-1}(x_1))$, and $(\hat{\tilde{h}}_{i+1}(x_2)$,
$\hat{\tilde{h}}_i(x_2)$, $\hat{\tilde{h}}_{i-1}(x_2))$ satisfy
\eqref{ladder-i} with the same $p_i$, so by using \eqref{exprderi},
the quantity
$$\begin{array}{l}\frac{\hat{h}'_{i+1}(x_1)}{\hat{h}'_{i}(x_1)}-
\frac{\hat{\tilde{h}}'_{i+1}(x_2)}{\hat{\tilde{h}}'_{i}(x_2)}\end{array}$$
is equal to
\begin{equation}\label{cha0}
\begin{array}{l}
\ \ \ [S'_{p_i+1}(\hat{h}_i(x_1))-S'_{p_i+1}(\hat{\tilde{h}}_i(x_2))]z_\pm(x_1)
 +[z_\pm(x_1)-z_\pm(x_2)]S'_{p_i+1}(\hat{\tilde{h}}_i(x_2))\\
\ \ \ -[S'_{p_i}(\hat{h}_i(x_1))-S'_{p_i}(\hat{\tilde{h}}_i(x_2))]\hat{h}_{i-1}(x_1)
 -[\hat{h}_{i-1}(x_1)-\hat{\tilde{h}}_{i-1}(x_2)]S'_{p_i}(\hat{\tilde{h}}_i(x_2))\\
\ \ \ +[S_{p_i+1}(\hat{h}_i(x_1))-S_{p_i+1}(\hat{\tilde{h}}_i(x_2))]
\frac{z'_\pm(x_1)}{\hat{h}'_i(x_1)}
 +[\frac{z'_\pm(x_1)}{\hat{h}'_i(x_1)}-\frac{z'_\pm(x_2)}{\hat{\tilde{h}}'_i(x_2)}]
 S_{p_i+1}(\hat{\tilde{h}}_i(x_2))\\
\ \ \ -[S_{p_i}(\hat{h}_i(x_1))-S_{p_i}(\hat{\tilde{h}}_i(x_2))]
\frac{\hat{h}'_{i-1}(x_1)}{\hat{h}'_i(x_1)}
-[\frac{\hat{h}'_{i-1}(x_1)}{\hat{h}'_i(x_1)}-
\frac{\hat{\tilde{h}}'_{i-1}(x_2)}{\hat{\tilde{h}}'_i(x_2)}]
S_{p_i}(\hat{\tilde{h}}_i(x_2)).
\end{array}
\end{equation}
There are eight terms in (\ref{cha0}), we will estimate them one by
one.

By proposition \ref{index},  $\hat{h}_i(x_1),\hat{\tilde{h}}_i(x_2)\in I_{p_i,l_i}$,  thus by
Proposition \ref{keyLW},
\begin{equation}\label{cha1}
\begin{array}{l}
\left|S_{p_i+1}(\hat{h}_i(x_1))-S_{p_i+1}(\hat{\tilde{h}}_i(x_2))\right|
\le q_i\csc^2\theta_i \Delta_i\\[8pt]
\left|S'_{p_i+1}(\hat{h}_i(x_1))-S'_{p_i+1}(\hat{\tilde{h}}_i(x_2))\right|
\le 4 q_i^2\csc^3\theta_i \Delta_i\\[8pt]
\left|S_{p_i}(\hat{h}_i(x_1))-S_{p_i}(\hat{\tilde{h}}_i(x_2))\right|
\le 2q_i\csc^2\theta_i \Delta_i\\[8pt]
\left|S'_{p_i}(\hat{h}_i(x_1))-S'_{p_i}(\hat{\tilde{h}}_i(x_2))\right|
\le 4 q_i^2\csc^3\theta_i \Delta_i.
\end{array}
\end{equation}

By \eqref{dec} and \eqref{d-cha}, we have
\begin{equation}\label{cha1-1}
\begin{array}{rcl}
\left|z_\pm(x_1)-z_\pm(x_2)\right|
&\le& \Delta_i+\Delta_{i-1}+|V-\widetilde V|\\
\left|\frac{z'_{\pm}(x_1)}{\hat{h}'_i(x_1)}\right|&=&
\left|z_1(x_1)+z_2(x_1)\frac{\hat{h}_{i-1}'(x_1)}{\hat{h}_i'(x_1)}\right|\ \le\ 2\\[6pt]
\left|\frac{z'_{\pm}(x_1)}{\hat{h}'_i(x_1)}-
\frac{z'_{\pm}(x_2)}{\hat{\tilde{h}}'_i(x_2)}\right|&\le&
|z_1(x_1)-z_1(x_2)|+\left|(z_2(x_1)-z_2(x_2))\frac{\hat{h}_{i-1}'(x_1)}{\hat{h}_i'(x_1)}\right|\\
&&+\left|z_2(x_2)(\frac{\hat{h}_{i-1}'(x_1)}{\hat{h}_i'(x_1)}-
\frac{\hat{\tilde{h}}_{i-1}'(x_2)}{\hat{\tilde{h}}_i'(x_2)})\right|
\end{array}
\end{equation}
and
\begin{equation}\label{cha2}
\begin{array}{l}
|z_1(\hat{h}_i(x_1),\hat{h}_{i-1}(x_1),V)-
z_1(\hat{\tilde{h}}_i(x_2),\hat{\tilde{h}}_{i-1}(x_2),\widetilde
V)|\le
\Delta_i+\Delta_{i-1}+|V-\widetilde V|\\[5pt]
|z_2(\hat{h}_i(x_1),\hat{h}_{i-1}(x_1),V)-
z_2(\hat{\tilde{h}}_i(x_2),\hat{\tilde{h}}_{i-1}(x_2),\widetilde
V)|\le
\Delta_i+\Delta_{i-1}+|V-\widetilde V|.\\
\end{array}\end{equation}
By a direct computation and Proposition \ref{lm-2},
\begin{equation}\label{cha3}
\begin{array}{rcl}
\left|\dfrac{\hat{h}'_{i-1}(x_1)}{\hat{h}'_i(x_1)}-
\dfrac{\hat{\tilde{h}}'_{i-1}(x_2)}{\hat{\tilde{h}}'_i(x_2)}\right|
&=&\left|\dfrac{\hat{h}'_{i-1}(x_1)}{\hat{h}'_i(x_1)}
\dfrac{\hat{\tilde{h}}'_{i-1}(x_2)}{\hat{\tilde{h}}'_i(x_2)}\right|
\Gamma_{i-1}\\
&\le&\dfrac{9\sin^4\theta_{i-1}}{(V-8)(\widetilde
V-8)q_{i-1}^2}\Gamma_{i-1}.
\end{array}
\end{equation}

Now we estimate the eight terms in \eqref{cha0} one by one. By
\eqref{dec} and \eqref{cha1}, the first term is bounded by
$$4 (V+2)q_i^2\csc^3\theta_i \Delta_i.$$
By \eqref{cha1-1} and Proposition \ref{keyLW}, the second term is
bounded by
$$q_i\csc^2\theta_i(\Delta_i+\Delta_{i-1}+|V-\widetilde V|).$$
By \eqref{cha1} and $|\hat{h}_{i-1}(x_1)|\le2$, the third term is bounded by
$$8q_i^2\csc^3\theta_i\Delta_i.$$
By Proposition \ref{keyLW}, the 4th term is bounded by
$$2q_i\csc^2\theta_i\Delta_{i-1}.$$
By \eqref{cha1} and \eqref{cha1-1}, the 5th term is bounded by
$$2q_i\csc^2\theta_i\Delta_i.$$
By \eqref{dec} and Proposition \ref{keyLW}, $|z_2(x_2)|\le1$ and
$|S_{p_i+1}(\hat{\tilde{h}}_{i}(x_2))|\le 1/4$, then by
$|\frac{\hat{h}'_{i-1}(x_1)}{\hat{h}'_i(x_1)}|\le 1/4$,
\eqref{cha1-1}, \eqref{cha2} and \eqref{cha3}, the 6th term is
bounded by
$$\begin{array}{l}
2\left(\Delta_i+ \Delta_{i-1}+|V-\widetilde V|\right)+
\dfrac{9\sin^4\theta_{i-1}}{(V-8)(\widetilde
V-8)q_{i-1}^2}\Gamma_{i-1}.
\end{array}$$
By \eqref{cha1} and $|\hat{h}'_{i-1}(x_1)|/|\hat{h}'_i(x_1)|\le
1/4$, the 7th term is bounded by
$$2q_i\csc^2\theta_i\Delta_i.$$
By Proposition \ref{keyLW} and \eqref{cha3}, the 8th term is bounded by
$$\dfrac{45\sin^4\theta_{i-1}}{4(V-8)(\widetilde V-8)q_{i-1}^2}\Gamma_{i-1}.$$
Take sum on the eight bounds, we get \eqref{prop2}. This proves the
proposition. \hfill $\Box$ \vskip 0.2cm

\noindent {\bf Proof of Theorem \ref{bvar}}.
Let $$B_n\subset B_{n-1}\subset\cdots\subset B_0$$ be a sequence of
spectral generating bands (with orders from $n$ to $0$), which form
an initial ladder. Let $(\hat{B}_i)_{i=0}^m$ be the corresponding
modified ladder, $(\hat{h}_i)_{i=0}^m$ the corresponding generating
polynomials. Note that $\hat{B}_0=B_0$ and
$\hat{h}'_0\equiv1$. To apply Proposition \ref{lip}, we only need to
verify \eqref{hcond2}.

Let $\lambda:=(V-8)/3.$ For any $0\leq i< m$ and $x,y\in
\hat{B}_{i+1}$, since $\hat{h}_i$ is monotone on $\hat{B}_i$, we
have
$$\begin{array}{rcl}
|\hat{h}_i(x)-\hat{h}_i(y)|&
=&\left|\displaystyle\int_x^y \frac{\hat{h}'_i(t)}
{\hat{h}'_{i+1}(t)} \hat{h}'_{i+1}(t) dt\right|\\
&\le& \lambda^{-1} q_i^{-1}\sin^2\theta_i \left|
\displaystyle\int_x^y \hat{h}'_{i+1}(t) dt\right|\\
&=& \lambda^{-1} q_i^{-1}\sin^2\theta_i |\hat{h}_{i+1}(x)-\hat{h}_{i+1}(y)|,\\
\end{array}$$
where the inequality is due to Proposition \ref{lm-2}.
Since $\hat{h}_m(\hat{B}_m)=[-2,2]$, for any $x_1,x_2\in \hat{B}_m$,
$|\hat{h}_m(x_1)-\hat{h}_m(x_2)|\le4$, hence we have
$$|\hat{h}_i(x_1)-\hat{h}_i(x_2)|\le 4\lambda^{i-m}
\prod_{l=i}^{m-1}q_l^{-1}\sin^2\theta_l\leq
4q_i^{-1}\sin^2\theta_i\lambda^{i-m},\quad \forall\, 0\leq i<m.$$

Now by Proposition \ref{lip} the result follows for some constant
$\xi$ which only depends on $V$. More explicitly notice that
$\lambda>4$, then by \eqref{xi-form} we can take
$$
\xi=\exp\left( 180V\right).
$$

\hfill $\Box$ \vskip 0.2cm

\noindent {\bf Proof of Corollary \ref{bdist}}\ Write $B=[a,b]$.
We know that $h$ is monotone on $B$ and $h(B)=[-2,2].$
By mean value theorem, there exists $x_0\in B$ such that
$$
4=|h(a)-h(b)|=|h^\prime(x_0)||B|.
$$
By Proposition \ref{bvar}, the result follows.
\hfill $\Box$


\section{Bounded covariation }\label{bdcov}

\begin{prop}\label{diff}
Assume $V,\widetilde V\ge 24.$ Suppose that $(\hat{B}_i)_{i=0}^m$ and
$(\hat{\tilde{B}}_i)_{i=0}^m$ are  modified ladders of
$\Sigma_{\alpha,V}$ and $\Sigma_{\alpha,\widetilde V}$
respectively with the same type sequence $(p_i)_{i=0}^{m-1}$, and
the same index sequence $(l_i)_{i=0}^{m-1}$. Let
$(\hat{h}_i)_{i=0}^m$ and $(\hat{\tilde{h}}_i)_{i=0}^m$ be their
corresponding generating polynomials. Then  we have
\begin{itemize}
\item[{\rm(i)}]\ There exists positive
 constant $c$ depending only on $V$ with $c\le18/37$ such that
for any $0<i<m$ and any $x_1\in \hat{B}_{i+1}$, $x_2\in \hat{\tilde{B}}_{i+1}$,
\begin{equation}\label{ediff}
 \Delta_i\le q_i^{-1}{\sin^2\theta_i}
\left(c\Delta_{i+1}+ c\Delta_{i-1}+|V-\widetilde V|\right).
\end{equation}
where $\Delta_i=|\hat{h}_{i}(x_1)-\hat{\tilde{h}}_i(x_2)|,
q_i=p_i+1$ and $\theta_i=l_i\pi/q_i.$

\item[{\rm (ii)}]\ Let $\lambda=\frac{1+\sqrt{1-4c^2}}{2c}$. Then there
exist absolute constants $c_1,c_2>1$ such that for any $x_1\in
\hat{B}_{m}$, there exists $x_2\in \hat{\tilde{B}}_{m}$ such that,
\begin{equation}\label{distdist}
\Delta_i\le  q_i^{-1}{\sin^2\theta_i}
 \left(c_1\lambda^{-i}+c_2|V-\widetilde V|\right), \quad 0< i< m.
\end{equation}
\item[{\rm (iii)}]\ There exists absolute constants $C_1,C_2,C_3>1$ such that
$$\eta^{-1} \frac{|\hat{\tilde{B}}_m|}{|\hat{\tilde{B}}_{0}|}\le
\frac{|\hat{B}_m|}{|\hat{B}_{0}|}\le \eta
\frac{|\hat{\tilde{B}}_m|}{|\hat{\tilde{B}}_{0}|},$$ where
$\eta=C_1\exp\left(C_2(V+\widetilde V)+C_3m|V-\widetilde V|\right).$
\end{itemize}
\end{prop}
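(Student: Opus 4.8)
The plan is to establish (i), (ii), (iii) in order, the core being (i), where the renormalisation identity \eqref{ladder-i} is ``inverted''. Fix $0<i<m$, $x_1\in\hat B_{i+1}$, $x_2\in\hat{\tilde B}_{i+1}$, and abbreviate the values at $x_1$ of $\hat h_{i-1},\hat h_i,\hat h_{i+1}$ by $u_1,y_1,w_1$ and the values at $x_2$ of $\hat{\tilde h}_{i-1},\hat{\tilde h}_i,\hat{\tilde h}_{i+1}$ by $u_2,y_2,w_2$, so that $\Delta_{i-1}=|u_1-u_2|$, $\Delta_i=|y_1-y_2|$, $\Delta_{i+1}=|w_1-w_2|$, $|u_j|\le 2$, and by Proposition \ref{index} $y_1,y_2\in I_{p_i,l_i}$. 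Writing \eqref{ladder-i} for both ladders (with the common $p_i$) and subtracting, I telescope $w_1-w_2$ to exhibit the principal term
\[
z_\pm(y_2,u_2,\widetilde V)\bigl(S_{p_i+1}(y_1)-S_{p_i+1}(y_2)\bigr)-u_2\bigl(S_{p_i}(y_1)-S_{p_i}(y_2)\bigr),
\]
its remaining contributions being $\bigl(z_\pm(y_1,u_1,V)-z_\pm(y_2,u_2,\widetilde V)\bigr)S_{p_i+1}(y_1)$ and $-(u_1-u_2)S_{p_i}(y_1)$. By the mean value theorem $S_{p_i+1}(y_1)-S_{p_i+1}(y_2)=S'_{p_i+1}(\zeta_1)(y_1-y_2)$ and $S_{p_i}(y_1)-S_{p_i}(y_2)=S'_{p_i}(\zeta_2)(y_1-y_2)$ with $\zeta_1,\zeta_2\in I_{p_i,l_i}$ (an interval); then \eqref{dec} ($|z_\pm|\ge\widetilde V-2$) and Proposition \ref{keyLW} ($|S'_{p_i+1}(\zeta_1)|\ge\tfrac{q_i}{3}\csc^2\theta_i$, $|u_2S'_{p_i}(\zeta_2)|\le 4q_i\csc^2\theta_i$) show that the coefficient of $y_1-y_2$ in the principal term has modulus $\ge\tfrac{q_i}{3}\csc^2\theta_i(\widetilde V-14)$, while (using $|z_{\cdot j}|\le 1$, $|S_{p_i+1}(y_1)|\le\tfrac14$, $|S_{p_i}(y_1)|\le\tfrac54$) $|w_1-w_2|$ together with the remaining terms is bounded by $\Delta_{i+1}+\tfrac14(\Delta_i+\Delta_{i-1}+|V-\widetilde V|)+\tfrac54\Delta_{i-1}$. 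Solving the resulting scalar inequality for $\Delta_i$ — this is exactly where $\widetilde V\ge 24$ is used — gives \eqref{ediff} with some $c=c(V)\le 18/37<\tfrac12$.

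\emph{Part (ii).} Given $x_1\in\hat B_m$, pick $x_2\in\hat{\tilde B}_m$ with $\hat{\tilde h}_m(x_2)=\hat h_m(x_1)$ (possible since $\hat{\tilde h}_m$ is monotone onto $[-2,2]\ni\hat h_m(x_1)$), so $\Delta_m=0$; moreover $\hat h_0(x_1),\hat{\tilde h}_0(x_2)\in[-2,2]$ gives $\Delta_0\le 4$. For these $x_1,x_2$, dropping the factor $q_i^{-1}\sin^2\theta_i\le 1$ in \eqref{ediff} gives $\Delta_i\le c\Delta_{i+1}+c\Delta_{i-1}+|V-\widetilde V|$ for $0<i<m$. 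With $\lambda=\tfrac{1+\sqrt{1-4c^2}}{2c}>1$ the larger root of $c\mu^2-\mu+c=0$ (so $c(\lambda+\lambda^{-1})=1$ and $c\lambda\le 1$), set $\bar D_i:=4\lambda^{-i}+(1-2c)^{-1}|V-\widetilde V|$; a direct check gives $\bar D_i=c\bar D_{i+1}+c\bar D_{i-1}+|V-\widetilde V|$, $\bar D_0\ge 4\ge\Delta_0$, $\bar D_m\ge 0=\Delta_m$. A discrete maximum principle then forces $\Delta_i\le\bar D_i$ on $0\le i\le m$: if $\Delta_i-\bar D_i$ had a positive interior maximum at $i_0$, the recursion would yield $\Delta_{i_0}-\bar D_{i_0}\le 2c(\Delta_{i_0}-\bar D_{i_0})$, impossible since $2c<1$. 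Substituting $\Delta_{i\pm 1}\le\bar D_{i\pm 1}$ back into \eqref{ediff} once and using $c(\lambda+\lambda^{-1})=1$ reinstates the weight and produces \eqref{distdist} with the absolute constants $c_1=4$ and $c_2=(1-2c)^{-1}\le 37$; note $\lambda$ is bounded away from $1$ because $c\le 18/37$.

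\emph{Part (iii).} Estimate \eqref{distdist} is precisely hypothesis \eqref{hcond1} of Proposition \ref{lip}(1), with the absolute $c_1,c_2$ and $\lambda>1$ just produced. Since \eqref{hcond1} is symmetric in the two ladders, applying \eqref{hresult} twice (once with the two ladders interchanged) and using $\hat h'_0\equiv\hat{\tilde h}'_0\equiv 1$ gives
\[
\xi^{-1}e^{-Mm|V-\widetilde V|}\ \le\ \frac{|\hat h'_m(x_1)|}{|\hat{\tilde h}'_m(x_2)|}\ \le\ \xi\,e^{Mm|V-\widetilde V|},
\]
where by \eqref{xi-form} (with $c_1=4$ and the lower bound on $\lambda$) one may take $\xi\le\exp(C(V+\widetilde V))$ with $C$ absolute, and $M$ as in \eqref{xi-form}. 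Now $\hat B_m=B_w$ and $\hat{\tilde B}_m$ are spectral generating bands with generating polynomials $\hat h_m,\hat{\tilde h}_m$, so Corollary \ref{bdist} gives $|\hat h'_m(x_1)|\,|\hat B_m|\in[(4e^{180V})^{-1},4e^{180V}]$ and $|\hat{\tilde h}'_m(x_2)|\,|\hat{\tilde B}_m|\in[(4e^{180\widetilde V})^{-1},4e^{180\widetilde V}]$; combining these with the displayed inequality and with $|\hat B_0|=|\hat{\tilde B}_0|=4$ yields
\[
\eta^{-1}\ \le\ \frac{|\hat B_m|/|\hat B_0|}{|\hat{\tilde B}_m|/|\hat{\tilde B}_0|}\ =\ \frac{|\hat B_m|}{|\hat{\tilde B}_m|}\ \le\ \eta,\qquad \eta=C_1\exp\!\bigl(C_2(V+\widetilde V)+C_3m|V-\widetilde V|\bigr),
\]
with $C_1=16$, $C_2=180+C$ and $C_3=M$, which is (iii).

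\emph{Main obstacle.} The crux is Part (i): one must show that $\hat h_i(x_1)-\hat{\tilde h}_i(x_2)$ enters $\hat h_{i+1}(x_1)-\hat{\tilde h}_{i+1}(x_2)$ with a coefficient of size $\asymp Vq_i\csc^2\theta_i$ that strictly dominates all the error terms. This is exactly what forces $V\ge 24$ instead of $V\ge 20$: it makes the constant $c$ in \eqref{ediff} strictly below $\tfrac12$, hence $c\mu^2-\mu+c=0$ has a real root $\lambda>1$, so the two-sided recursion in Part (ii) is subcritical and the maximum-principle argument closes with geometric rate $\lambda^{-i}$ (which is what Proposition \ref{lip}(1) then converts into the band-length comparison). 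The only mild technicality is that the mean-value points $\zeta_1,\zeta_2$ must be kept inside $I_{p_i,l_i}$, which is legitimate because $I_{p_i,l_i}$ is an interval.
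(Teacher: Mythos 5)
Your proposal is essentially correct, and parts (i) and (iii) follow the paper's own route: for (i) you perform the same telescoping of the two copies of \eqref{ladder-i}, isolate the term $z_\pm\cdot\bigl(S_{p_i+1}(\hat h_i(x_1))-S_{p_i+1}(\hat{\tilde h}_i(x_2))\bigr)$, bound it below by $\tfrac{q_i}{3}\csc^2\theta_i(\widetilde V-2)\Delta_i$ via Propositions \ref{index} and \ref{keyLW}, and absorb the same collection of error terms ($\tfrac14\Delta_i+\tfrac32\Delta_{i-1}+\tfrac14|V-\widetilde V|$); your identification of $V\ge 24$ as the condition forcing $c\le 18/37<\tfrac12$ matches the paper's $c=18/(4V-59)$ exactly. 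For (ii) you take a genuinely different and cleaner route: the paper rewrites $\Delta_i\le c(\Delta_{i+1}+\Delta_{i-1})+|V-\widetilde V|$ as a one-sided recursion for $\lambda\Delta_i-\Delta_{i+1}$ and runs two nested inductions (one forward telescoping, one backward from $\Delta_m=0$), whereas you exhibit the explicit supersolution $\bar D_i=4\lambda^{-i}+(1-2c)^{-1}|V-\widetilde V|$ of the linear recursion and invoke a discrete maximum principle, which closes in one step because $2c<1$ and even yields slightly better explicit constants ($c_1=4$, $c_2\le 37$ versus the paper's $c_1=8\lambda_0^2/(\lambda_0^2-1)$, $c_2=1+(\lambda_0^2+1)/(\lambda_0-1)^2$). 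Both arguments buy the same geometric decay $\lambda^{-i}$ needed as input \eqref{hcond1} to Proposition \ref{lip}.

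One slip in part (iii): you assert $\hat h_0'\equiv\hat{\tilde h}_0'\equiv 1$ and $|\hat B_0|=|\hat{\tilde B}_0|=4$. This holds only when the modified ladder starts at order $0$; in the generality of the proposition (and in its application to Theorem \ref{bco}) $\hat B_0$ is an arbitrary spectral generating band $B_w$, its generating polynomial $\hat h_0$ has large degree, and $|\hat B_0|\ne|\hat{\tilde B}_0|$ in general. Proposition \ref{lip} controls the ratio of $\hat h_m'/\hat h_0'$ to $\hat{\tilde h}_m'/\hat{\tilde h}_0'$, so you must convert \emph{both} $|\hat h_m'|$ and $|\hat h_0'|$ into band lengths: apply Corollary \ref{bdist} (together with Theorem \ref{bvar} to change evaluation points) not only to $\hat B_m,\hat{\tilde B}_m$ but also to $\hat B_0,\hat{\tilde B}_0$, giving $|\hat h_m'(\hat x)/\hat h_0'(\hat x)|\asymp |\hat B_0|/|\hat B_m|$ up to a factor $16\exp(720V)$, and likewise for the tilded ladder. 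This is exactly how the paper closes the argument; the repair costs one more application of the corollary you already cite and does not affect the form of $\eta$.
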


\noindent {\bf Proof}. (i) Take $0<i<m$ and $x_1\in \hat{B}_{i+1}$,
$x_2\in \hat{\tilde{B}}_{i+1}$.

For convenience, we denote
$z_{\pm}(\hat{h}_i(x_1),\hat{h}_{i-1}(x_1),V)$ as $z_\pm(x_1)$, and
also denote
$z_{\pm}(\hat{\tilde{h}}_i(x_2),\hat{\tilde{h}}_{i-1}(x_2),V)$ as
$z_\pm(x_2)$. Both $(\hat{h}_{i+1}(x_1)$, $\hat{h}_i(x_1)$,
$\hat{h}_{i-1}(x_1))$, and $(\hat{\tilde{h}}_{i+1}(x_2)$,
$\hat{\tilde{h}}_i(x_2)$, $\hat{\tilde{h}}_{i-1}(x_2))$ satisfy
\eqref{ladder-i} with the same $p_i$. So, we have
$$
\begin{array}{rcl}
\hat{h}_{i+1}(x_1)-\hat{\tilde{h}}_{i+1}(x_2)
&=&z_{\pm}(x_1)[S_{p_i+1}(\hat{h}_{i}(x_1))-S_{p_i+1}(\hat{\tilde{h}}_{i}(x_2))]\\
&&+[z_{\pm}(x_1)-z_{\pm}(x_2)]S_{p_i+1}(\hat{\tilde{h}}_{i}(x_2))\\
&&-\hat{h}_{i-1}(x_1)[S_{p_i}(\hat{h}_{i}(x_1))-S_{p_i}(\hat{\tilde{h}}_{i}(x_2))]\\
&&-[\hat{h}_{i-1}(x_1)-\hat{\tilde{h}}_{i-1}(x_2)]S_{p_i}(\hat{\tilde{h}}_{i}(x_2)).
\end{array}
$$

By Proposition \ref{index}, $\hat{h}_i(x_1)$,
$\hat{\tilde{h}}_i(x_2)\in I_{p_i,l_i}$, then by Proposition
\ref{keyLW},
$$\left|S_{p_i+1}(\hat{h}_i(x_1))-S_{p_i+1}(\hat{\tilde{h}}_i(x_2))\right|
\ge \frac{q_i}{3}\cdot \csc^2\theta_i \cdot \Delta_i.
$$
By Proposition \ref{keyLW} again,
$$
\left|S_{p_i}(\hat{h}_i(x_1))-S_{p_i}(\hat{\tilde{h}}_i(x_2))\right|
\le 2q_i \csc^2\theta_i \Delta_i.
$$
So by Proposition \ref{keyLW},
\eqref{cha1-1} and the above three formulas, we have
$$
\Delta_{i+1} \ge (\frac{V-2}{3}-4) q_i\csc^2\theta_i \Delta_i
-\frac{1}{4}\Delta_i
 -\frac{3}{2}\Delta_{i-1}-\frac{1}{4}|V-\widetilde V|.
$$
 Let $c:=18/(4V-59)$, then $0<c\le 18/37$ since $V\ge24$. We get
$$
\Delta_i\leq
q_i^{-1}\sin^2\theta_i\left(c\Delta_{i+1}+c\Delta_{i-1}+
|V-\widetilde V|\right).
$$

\medskip

(ii) Take any $x_1\in \hat{B}_m$. By
$\hat{h}_m(\hat{B}_m)=[-2,2]$, $\hat{\tilde{h}}_m(\hat{\tilde{B}}_m)=[-2,2]$,
there exists $x_2\in\hat{\tilde{B}}_m$ such that
$$\hat{h}_m(x_1)=\hat{\tilde{h}}_m(x_2).$$
Thus $\triangle_m=0$. Take any integer $i\in\{0,\cdots,m-1\}$. By
$\hat{h}_i(\hat{B}_m)\subset[-2,2]$,
$\hat{\tilde{h}}_i(\hat{\tilde{B}}_m)\subset[-2,2]$, we get
$$\Delta_i=|\hat{h}_i(x_1)-\hat{\tilde{h}}_i(x_2)|\le4.$$

 \eqref{ediff} implies that for $0<i<m$
\begin{equation}\label{Delta-i}
\triangle_{i}\le c(\triangle_{i+1}+\triangle_{i-1})+|V-\widetilde
V|.
\end{equation}
Notice that $\lambda=\frac{1+\sqrt{1-4c^2}}{2c}\ge 37/36$ is the larger
root of $x^2-x/c+1=0$. Write $c^\prime=\lambda+\lambda^{-1},$ hence
\eqref{Delta-i} can be rewritten as
\begin{equation}\label{ite-delta}
\lambda\triangle_i-\triangle_{i+1}\le
 \lambda^{-1}(\lambda \triangle_{i-1}-\triangle_{i})+c^\prime|V-\widetilde V|.
\end{equation}

\noindent{\bf Claim:} For $0< i< m$ we have
$$\Delta_i\leq
8\lambda^{-i}\sum_{k=0}^\infty\lambda^{-2k}+ c^\prime |V-\widetilde
V|\sum_{k=1}^\infty k\lambda^{-k}.
$$

\noindent $\vartriangleleft$ We show it by induction.

At first by using \eqref{ite-delta}, for $0<i<m$ we can get
$$
\lambda\Delta_i-\Delta_{i+1}\leq
\lambda^{-i}(\lambda\Delta_0-\Delta_1)+c^\prime|V-\widetilde
V|\sum_{k=0}^{i-1}\lambda^{-k}.
$$
Take $i=m-1$ and notice that $\Delta_i\leq 4, \Delta_m=0$ we get
$$
\Delta_{m-1}\leq 8\lambda^{1-m}+ c^\prime|V-\widetilde
V|\sum_{j=1}^{m-1}\lambda^{-j}.
$$

Assume  the result holds for $i+1$. Then
\begin{eqnarray*}
\Delta_i&\leq&
\lambda^{-1}\left(\Delta_{i+1}+\lambda^{-i}(\lambda\Delta_0-\Delta_1)+
c^\prime|V-\widetilde V|\sum_{k=0}^{i-1}\lambda^{-k}\right)\\
&\le&8\lambda^{-i}\sum_{k=0}^\infty\lambda^{-2k}+ c^\prime
|V-\widetilde V|\sum_{k=1}^\infty k\lambda^{-k}.
\end{eqnarray*}
\hfill $\vartriangleright$

Thus for $0< i<m$ we have
$$\Delta_i\leq
M_1\lambda^{-i} + c^\prime  M_2|V-\widetilde V|
$$
with $M_1=8\lambda^2/(\lambda^2-1)$ and $ M_2=\sum_{k=1}^\infty
k\lambda^{-k}=\lambda/(\lambda-1)^2.$ Since $\Delta_0\leq 4$ and
$\Delta_m=0,$ the inequality also holds for $i=0,m.$

Consequently for $0<i<m,$ by \eqref{ediff} we get
\begin{eqnarray*}
\Delta_i&\leq&
q_i^{-1}\sin^2\theta_i\left(c\Delta_{i+1}+c\Delta_{i-1}+
|V-\widetilde V|\right)\\
&\leq&q_i^{-1}\sin^2\theta_i
\left(M_1c(\lambda^{-(i+1)+\lambda^{-(i-1)}})+(1+2cc^\prime
M_2)|V-\widetilde V|
\right)\\
&\leq&q_i^{-1}\sin^2\theta_i\left(M_1\lambda^{-i}+(1+c^\prime
M_2)|V-\widetilde V| \right).
\end{eqnarray*}
Recall that  $\lambda\ge 37/36=: \lambda_0>1$, thus
$$
M_1\leq \frac{8\lambda_0^2}{\lambda_0^2-1}=:c_1\ \ \text{ and }\ \
1+c^\prime M_2\leq 1+\frac{\lambda_0^2+1}{(\lambda_0-1)^2}=: c_2.
$$
Consequently \eqref{distdist} holds with two absolute constants $c_1$ and $c_2$.

\bigskip

(iii) Proposition \ref{lip} and \eqref{distdist} imply that there
exist absolute constants $C_1^\prime,C_2^\prime,C_3^\prime>1$  such
that, for any $\hat{x}\in\hat{B}_m$, there exists
$\hat{y}\in\hat{\tilde{B}}_m$ such that
\begin{equation}\label{ratio}
\xi_1^{-1}\le\left|\frac{\hat{h}'_m(\hat{x})/\hat{h}'_0(\hat{x})}
{\hat{\tilde{h}}'_m(\hat{y})/\hat{\tilde{h}}'_0(\hat{y})}\right|\le
\xi_1,
\end{equation}
where $\xi_1=C_1^\prime\exp\left(C_2^\prime(V+\widetilde
V)+C_3^\prime Vm|V-\widetilde V|\right)$.

By the definition of generating polynomial, there exist
$x_1\in \hat{B}_m$, $x_2\in\hat{B}_0$ such that
$$|\hat{B}_m|\,|\hat{h}_m'(x_1)|=4,\ |\hat{B}_0|\,|\hat{h}_0'(x_2)|=4.$$

By Theorem \ref{bvar} and \ref{bdist}, we have
$$\frac{|\hat{B}_m|}{|\hat{B}_0|}=
\frac{|\hat{B}_m|\,|\hat{h}_m'(x_1)|}{|\hat{B}_0|\,|\hat{h}_0'(x_2)|}
\frac{|\hat{h}'_m(\hat{x})|}{|\hat{h}'_m(x_1)|}\,
\frac{|\hat{h}'_0(x_2)|}{|\hat{h}'_0(\hat{x})|}\,
\frac{|\hat{h}'_0(\hat{x})|}{|\hat{h}'_m(\hat{x})|} \le
16\exp(720V)\left|\frac{\hat{h}'_0(\hat{x})}{\hat{h}'_m(\hat{x})}\right|.
$$

By the same discussion, we have
$$\frac{|\hat{\tilde{B}}_m|}{|\hat{\tilde{B}}_0|}
\ge \frac{1}{16}\exp(-720\widetilde V)\left|
\frac{\hat{\tilde{h}}'_0(\hat{y})}{\hat{\tilde{h}}'_m(\hat{y})}\right|.
$$

Then by \eqref{ratio}, we have
$$\frac{|\hat{B}_m|}{|\hat{B}_0|}\le \eta\frac{|\hat{\tilde{B}}_m|}{|\hat{\tilde{B}}_0|}$$
with $\eta=C_1\exp\left(C_2(V+\widetilde
V)+C_3 Vm|V-\widetilde V|\right)$, where $C_1,C_2,C_3$ are still absolute constants.

The opposite direction of the inequality can be got by the same way.
\hfill $\Box$ \vskip 0.2cm

 \noindent {\bf Proof of Theorem \ref{bco}}\
 This is a direct consequence of Proposition \ref{diff} (iii).
 \hfill $\Box$

\noindent {\bf Proof of Corollary \ref{C-n}.}\
For each $n\in\mathscr{N}$, fix some $w^{(n)}\in \Omega_{l_n}$
such that $B_{w^{(n)}}$ is of type I and $a_{l_n+1}=n.$ Then define
\begin{equation}\label{xi-n}
\zeta_n:=\frac{|B_{w^{(n)}u}|}{|B_{w^{(n)}}|}.
\end{equation}
By applying Theorem \ref{bco}, we get the result.
If moreover $a_{l_{n}+1}=1$, then we know that $B_{w^{(n)}u}=B_{w^{(n)}}$, thus $\zeta_1$=1.
\hfill $\Box$

Now we can give the proof of Theorem \ref{lip-conti}.

\begin{prop}\label{cont-1}
For $V,\widetilde V\ge 24$, there exists an absolute constant $C>0$
such that
$$\begin{array}{l}
|s_*(V)-s_*(\widetilde V)|\le CV|V-\tilde V|,\\
|s^*(V)-s^*(\widetilde V)|\le CV|V-\tilde V|.\end{array}$$
\end{prop}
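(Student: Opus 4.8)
The plan is to bound $s_k(V)-s_k(\widetilde V)$ with an estimate uniform in large $k$, and then pass to $\liminf$ and $\limsup$. Fix $V,\widetilde V\ge 24$ and write $b_{k,\beta}(V)=\sum_{w\in\Omega_k}|B_w(V)|^\beta$ to display the coupling; the symbolic space $\Omega_k$ and the type/index data of every modified ladder depend only on $\alpha$, and only the lengths $|B_w|$ change with $V$. For $k\ge3$, \eqref{upper-bd} gives $|B_w(V)|\le4^{1-k/2}<1$, so $\beta\mapsto\ln b_{k,\beta}(V)$ is $C^1$ and strictly decreasing with slope $\le-(k/2-1)\ln4$ at every point (a weighted average of the $\ln|B_w(V)|$), and $s_k(V)$ is its unique zero. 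I would therefore first reduce the problem to the following: produce a lower bound $b_{k,\,s_k(V)-\varepsilon}(\widetilde V)\ge e^{-E_k}$ for a suitable shift $\varepsilon>0$; then, comparing with $\ln b_{k,\,s_k(\widetilde V)}(\widetilde V)=0$ and using the slope bound, one gets $s_k(V)-s_k(\widetilde V)\le\varepsilon+E_k/((k/2-1)\ln4)$ (trivially if $s_k(V)\le s_k(\widetilde V)$).

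To get such a bound I would compare $|B_w(V)|$ with $|B_w(\widetilde V)|$ for $w\in\Omega_k$ via Proposition \ref{diff}(iii), applied to the initial ladder from order $0$ to order $k$ through $B_w$ (as in the proof of Lemma \ref{lem-bc}): the two modified ladders have the same type and index sequences, both order-$0$ bands have length $4$, so (with the exponent $C_3Vm|V-\widetilde V|$ that the proof of that proposition actually produces) $|B_w(V)|\le C_1\exp\!\big(C_2(V+\widetilde V)+C_3\,V\,m(w)\,|V-\widetilde V|\big)|B_w(\widetilde V)|$, where $m(w)$ is the modified-ladder length. The difficulty is that $m(w)$ is not bounded over $\Omega_k$: from the modified-ladder construction of Section \ref{ladder}, a step out of a type-$I$ band (i.e.\ $e_{w_i}=e_{12}$) unfolds into $a_i-1$ rungs, so $m(w)\le k+\sum_{i\le k,\ e_{w_i}=e_{12}}(a_i-1)$. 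This is exactly compensated by the smallness of $|B_w|$: Lemma \ref{lem-bc} gives $|B_w(\widetilde V)|\le4\prod_{e_{w_i}=e_{12}}\!\big((\widetilde V-8)/3\big)^{-(a_i-1)}$. Setting $\varepsilon:=C_3V|V-\widetilde V|$ and splitting $|B_w(\widetilde V)|^\beta=|B_w(\widetilde V)|^{\beta-\varepsilon}|B_w(\widetilde V)|^{\varepsilon}$, each offending factor $e^{\beta C_3V(a_i-1)|V-\widetilde V|}=e^{\beta\varepsilon(a_i-1)}$ is paired with $\big((\widetilde V-8)/3\big)^{-\varepsilon(a_i-1)}$ from Lemma \ref{lem-bc}, and since $e^{\beta}\le e<(\widetilde V-8)/3$ for $\widetilde V\ge24$, $\beta\le1$, the product of these pairs over all type-$I$ steps is $\le1$. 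Summing over $w\in\Omega_k$ at $\beta=s_k(V)$, where $b_{k,\,s_k(V)}(V)=1$, then yields $1\le C_1\,4^{\varepsilon}\exp\!\big(C_2(V+\widetilde V)+C_3Vk|V-\widetilde V|\big)\,b_{k,\,s_k(V)-\varepsilon}(\widetilde V)$, i.e.\ $E_k=\ln C_1+\varepsilon\ln4+C_2(V+\widetilde V)+C_3Vk|V-\widetilde V|$.

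Combining the two steps, dividing by $(k/2-1)\ln4$ and letting $k\to\infty$, the $\ln C_1$, $C_2(V+\widetilde V)$ and $\varepsilon\ln4$ contributions vanish, so $\limsup_k(s_k(V)-s_k(\widetilde V))\le\varepsilon+2C_3V|V-\widetilde V|/\ln4=\big(1+2/\ln4\big)C_3V|V-\widetilde V|$; by symmetry ($e^{\beta}<(V-8)/3$ is used the same way) the reverse inequality holds with $\widetilde V$ in place of $V$, hence $\limsup_k|s_k(V)-s_k(\widetilde V)|\le C(V\vee\widetilde V)|V-\widetilde V|$ with $C=(1+2/\ln4)C_3$ absolute. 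Since $|\liminf a_k-\liminf b_k|\le\limsup|a_k-b_k|$, and similarly with $\limsup$, taking $\liminf$ and $\limsup$ in $k$ gives $|s_*(V)-s_*(\widetilde V)|\le C(V\vee\widetilde V)|V-\widetilde V|$ and $|s^*(V)-s^*(\widetilde V)|\le C(V\vee\widetilde V)|V-\widetilde V|$; on any bounded interval $[24,R]$ this is the asserted Lipschitz bound. The hard part is the trade-off in the second paragraph: the exponent in Proposition \ref{diff}(iii) degrades with $m(w)$, which genuinely blows up over $\Omega_k$ when the $a_i$ are large, but every unfolded rung through a type-$I$ band contracts $|B_w|$ by the definite factor $((V-8)/3)^{-1}$, so choosing the auxiliary exponent shift $\varepsilon$ proportional to $V|V-\widetilde V|$ — not a fixed small constant — makes the cancellation exact regardless of the size of the $a_i$; getting this bookkeeping right (the bound $m(w)\le k+\sum_{e_{w_i}=e_{12}}(a_i-1)$ matched against Lemma \ref{lem-bc}, with $e^{\beta}<(V-8)/3$) is the crux, the rest being routine monotonicity.
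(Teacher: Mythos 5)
Your argument is correct, and it reaches Proposition \ref{cont-1} by a route organized differently from the paper's, although both hinge on the same key input, Proposition \ref{diff}(iii) with the exponent $C_3Vm|V-\widetilde V|$. The paper sidesteps what you call the crux: in its proof it uses that every rung of the modified ladder contracts by at least $t_1^{-1}\le 1/4$, so $|B_w|,|\tilde B_w|\le 4^{1-m_w}$ with $m_w$ the modified-ladder length itself (together with $m_w\ge n/2$ from \eqref{modibd}); this single inequality bundles exactly the cancellation you engineer by matching $m(w)\le k+\sum_{e_{w_i}=e_{12}}(a_i-1)$ against the product bound of Lemma \ref{lem-bc}. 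The paper also handles the exponents differently: it sets $d=\limsup_n|s_n(V)-s_n(\widetilde V)|$, assumes $d\ln 2>2C_3V|V-\widetilde V|$, writes $|B_w|^{s_n}\le \eta^{\tilde s_n+d/2}\,4^{(1-m_w)d/2}|\tilde B_w|^{\tilde s_n}$ and derives $1\le C(V,\widetilde V)\exp\left[-C_3 n|V-\widetilde V|/4\right]$, a contradiction for large $n$; you instead fix the shift $\varepsilon=C_3V|V-\widetilde V|$ in advance and convert the lower bound $b_{k,\,s_k(V)-\varepsilon}(\widetilde V)\ge e^{-E_k}$ into a bound on $s_k(V)-s_k(\widetilde V)$ via the slope $\le-(k/2-1)\ln 4$ of $\beta\mapsto\ln b_{k,\beta}(\widetilde V)$, a direct rather than contradiction-based argument, and slightly more explicit about where the constant comes from. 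What your version buys is a transparent mechanism for why the unbounded unfolding of type-I steps is harmless; what the paper's buys is brevity, since $4^{1-m_w}$ makes the absorption uniform in $w$ with no bookkeeping over the $a_i$. Two cosmetic remarks: your final bound carries $V\vee\widetilde V$ instead of the stated $V$, which is immaterial (for $\widetilde V\le 2V$ it gives $2CV|V-\widetilde V|$, and when $|V-\widetilde V|\ge V\ge 24$ the left-hand side is at most $1\le V|V-\widetilde V|/576$, so the stated form follows after adjusting $C$); and the degenerate cases $s_k(V)-\varepsilon\le s_k(\widetilde V)$ or $s_k(V)<\varepsilon$, which you dispose of as trivial, are indeed trivial since $s_k\in[0,1]$.
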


\begin{proof}
For any $w\in\Omega_{n}$, let $B_w$ and $\tilde
B_w$ be the related bands of $\Sigma_{\alpha,V}$ and
$\Sigma_{\alpha,\widetilde V}$ respectively. Let $(B_i)_{i=0}^n$
and $(\tilde B_i)_{i=0}^n$ be the ladders  of $\Sigma_{\alpha,V}$
and $\Sigma_{\alpha,\widetilde V}$ respectively with
$B_n=B_w$ and $\tilde B_n=\tilde B_w.$ Let $(\hat
B_i)_{i=0}^{m_w}$ and $(\hat{\tilde B}_i)_{i=0}^{m_w}$ be
the related modified ladders. Then  by \eqref{modibd} we have $m_w\geq n/2$.  By \eqref{upper-bd},
$$
|B_w|,|\tilde B_w|\le 4^{1-m_w}.
$$

By Proposition \ref{diff} (iii), there exists absolute constants
$C_1,C_2,C_3>1$ such that
$$
\eta^{-1} \le
\frac{|{B}_w|}{|\tilde {B}_{w}|}\le \eta ,
$$ where
$\eta=C_1\exp\left(C_2(V+\widetilde V)+C_3Vm_\omega|V-\widetilde
V|\right).$

Write $s_n=s_n(V)$ and $\tilde s_n=s_n(\widetilde V).$ Let
$d:=\limsup_{n\to\infty}|s_n-\tilde s_n|$, then $d\leq 1$ and it is easy to show that
$$
|s_\ast(V)- s_\ast(\widetilde V)|, |s^\ast(V)- s^\ast(\widetilde
V)|\leq d.
$$

If $d=0$ the result holds trivially. So in the following we assume
$d>0$. Then there exist infinitely many $n$ such that $s_n\geq
\tilde s_n+d/2$ or $\tilde s_n\ge s_n+d/2.$

 At first we assume that there are
infinitely many $n$ such that $s_n\ge\tilde s_n +d/2$.  For those
$n$ big enough we have
\begin{eqnarray*}
1&=&\sum_{|w|=n}|B_w|^{s_n}\leq
\sum_{|w|=n}|B_w|^{\tilde s_n+d/2}\\
&\le& \sum_{|w|=n}\eta^{\tilde s_n+d/2}|\tilde
B_w|^{\tilde s_n+d/2} \le \sum_{|w|=n}\eta^{\tilde
s_n+d/2}4^{(1-m_w)d/2}
|\tilde B_w|^{\tilde s_n}\\
&\le& C(V,\widetilde V)\sum_{|w|=n} \exp[-m_w \left(d\ln2-C_3V(\tilde
s_n+d/2)|V-\tilde V|\right)]|\tilde B_w|^{\tilde s_n}.
\end{eqnarray*}

We claim that $d\ln 2\le 2C_3V|V-\widetilde V|.$ In fact if otherwise,
notice that $\tilde s_n, d\leq 1$ and $m_\omega\ge n/2$, we should
get
$$
1\le C(V,\widetilde V) \exp[-C_3n|V-\tilde V|/4]\sum_{|\omega|=n}|\tilde
B_\omega|^{\tilde s_n}=C(V,\widetilde V) \exp[-C_3n|V-\tilde V|/4],
$$ which leads to contradiction for large
$n$. So we have
$$
d \leq \frac{2C_3V|V-\tilde V|}{\ln2}.
$$

For the case that there are infinitely many $n$ such that $\tilde
s_n\geq s_n+d/2$, the argument is the same.
\end{proof}

\noindent {\bf Proof of Theorem \ref{lip-conti}}
It is a direct consequence of Proposition \ref{cont-1}.
\hfill $\Box$


\section{Gibbs like measure}\label{gibbs-meas}
Throughout this section we take $V\ge24$, $0\le\varepsilon<1/12$ and consider the set $E_\varepsilon$ defined in \eqref{truncE}.
We will construct a
 Gibbs like measure on $E_\varepsilon.$

For any $m\ge k$, $T=I,\ II,$ or $III$, define
$$
\Omega_m^{(k,T)}(\varepsilon)=\{w\in \Omega_m(\varepsilon):  e_{w_k}=(\ast,T)\},
$$
where $\Omega_m(\varepsilon)$ is defined in \eqref{truncO}.
For any $0<\beta<1$ define
$$
b_{m,\beta}^{(k,T)}(\varepsilon)=\sum_{w\in\Omega_{m}^{(k,T)}(\varepsilon)}|B_w|^\beta .
$$

Fix $0<\beta<1.$ At first we discuss  the relationship between
$b_{k-1,\beta}(\varepsilon)$ and $b_{k,\beta}(\varepsilon)$
(see \eqref{b-k-beta-epsilon} for definition).
As a preparation we define the following
sequence
$$
A_{0,\beta}(\varepsilon):=0\ \ \ A_{n,\beta}(\varepsilon):=
\sum_{(n+1)\varepsilon< j<(n+1)(1-\varepsilon)}
\frac{1}{(n+1)^\beta}\sin^{2\beta}\frac{j\pi}{n+1}\ \ \  (n\geq 1).
$$

\begin{lem}\label{A-n-beta}
$A_{n,\beta}:=A_{n,\beta}(0)\sim
 A_{n,\beta}(\varepsilon)\sim \ n^{1-\beta}.
$ And $A_{n,\beta}\sim A_{n+1,\beta}$ for $n\geq 1.$
\end{lem}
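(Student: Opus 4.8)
The plan is to recognize $A_{n,\beta}(\varepsilon)$ as essentially a Riemann sum for $\int_0^1\sin^{2\beta}(\pi t)\,dt$ rescaled by $(n+1)^{1-\beta}$, and to squeeze it between two constant multiples of $n^{1-\beta}$ for all $n\ge 1$, with comparison constants depending only on $\beta$ (using only $0\le\varepsilon<1/12<1/4$). The upper bound is immediate: since $0\le\sin^{2\beta}t\le 1$ and the sum runs over at most $n+1$ integers $j$,
$$
A_{n,\beta}(\varepsilon)=\frac{1}{(n+1)^\beta}\sum_{(n+1)\varepsilon<j<(n+1)(1-\varepsilon)}\sin^{2\beta}\frac{j\pi}{n+1}\ \le\ (n+1)^{1-\beta}\ \le\ 2^{1-\beta}\,n^{1-\beta}.
$$

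For the lower bound I would discard every index except those in the central block $\tfrac{n+1}{4}\le j\le\tfrac{3(n+1)}{4}$, where $\sin\frac{j\pi}{n+1}$ stays bounded below. First, because $\varepsilon<1/4$, every integer $j$ in this block satisfies $(n+1)\varepsilon<j<(n+1)(1-\varepsilon)$, so it genuinely occurs in the sum; this is exactly the point at which the truncation becomes harmless. Second, $\sin$ is $\ge\sin\tfrac{\pi}{4}=2^{-1/2}$ on $[\pi/4,3\pi/4]$, hence $\sin^{2\beta}\frac{j\pi}{n+1}\ge 2^{-\beta}$ for these $j$. Third, the number of integers in $[\tfrac{n+1}{4},\tfrac{3(n+1)}{4}]$ is at least $\tfrac{n+1}{2}-1$, which is $\ge\tfrac{n+1}{4}$ once $n\ge 3$; the remaining cases $n=1,2$ are read off directly from the definition (for $n=1$ the only admissible index is $j=1$, giving $A_{1,\beta}(\varepsilon)=2^{-\beta}$, and $n=2$ is computed similarly). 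Thus for $n\ge 3$,
$$
A_{n,\beta}(\varepsilon)\ \ge\ \frac{1}{(n+1)^\beta}\cdot 2^{-\beta}\cdot\frac{n+1}{4}\ =\ 2^{-\beta-2}\,(n+1)^{1-\beta}\ \ge\ 2^{-\beta-2}\,n^{1-\beta}.
$$

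Combining the two estimates yields $A_{n,\beta}(\varepsilon)\sim n^{1-\beta}$ with constants independent of $n$ (indeed independent of $\varepsilon\in[0,1/12)$); specializing to $\varepsilon=0$ gives $A_{n,\beta}=A_{n,\beta}(0)\sim n^{1-\beta}$ as well, and transitivity of $\sim$ gives $A_{n,\beta}(\varepsilon)\sim A_{n,\beta}$. The final assertion $A_{n,\beta}\sim A_{n+1,\beta}$ for $n\ge 1$ then follows at once from $A_{n,\beta},A_{n+1,\beta}\sim n^{1-\beta}$ together with $n^{1-\beta}\le(n+1)^{1-\beta}\le 2^{1-\beta}n^{1-\beta}$. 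I expect no genuine obstacle here; the only points deserving a moment's care are keeping the lower-bound constant uniform down to $n=1$ (handled by the explicit check of $n=1,2$) and verifying that the central block $[\tfrac{n+1}{4},\tfrac{3(n+1)}{4}]$ lies strictly inside the truncation window $\big((n+1)\varepsilon,(n+1)(1-\varepsilon)\big)$, which is precisely where the hypothesis $\varepsilon<1/4$ is used.
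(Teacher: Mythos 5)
Your proof is correct and rests on the same idea as the paper's, which simply observes that $A_{n,\beta}(\varepsilon)$ is $(n+1)^{1-\beta}$ times a Riemann sum for $\frac{1}{\pi}\int_{\varepsilon\pi}^{(1-\varepsilon)\pi}\sin^{2\beta}x\,dx$ and concludes from $\varepsilon<1/12$. Your version merely replaces that one-line appeal to the integral by explicit upper and lower bounds (the trivial bound $\sin^{2\beta}\le 1$ above, the central block $\tfrac{n+1}{4}\le j\le\tfrac{3(n+1)}{4}$ below, plus the checks at $n=1,2$), which makes the uniformity of the constants in $n$ and $\varepsilon$ transparent but is not a different method.
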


\begin{proof}
Since
\begin{eqnarray*}
A_{n,\beta}(\varepsilon)&=&
(n+1)^{1-\beta}\sum_{(n+1)\varepsilon< j
< (n+1)(1-\varepsilon)}\frac{1}{n+1}\sin^{2\beta}\frac{j\pi}{n+1}\\
&\sim&\frac{(n+1)^{1-\beta}}{\pi}
\int_{\varepsilon\pi}^{(1-\varepsilon)\pi}\sin^{2\beta} xdx.
\end{eqnarray*}
Since $\varepsilon<1/12$, the result follows.
\end{proof}

\begin{rem}\label{const}{\rm
Here the constants related to $``\sim"$ only depend on $\beta.$}
\end{rem}

\begin{prop}
 For any
$k\ge1$, we  have
\begin{equation}\label{three}
\frac{b_{k,\beta}^{(k,I)}(\varepsilon)}{b_{k,\beta}(\varepsilon)}\sim 1;\ \ \
\frac{b_{k,\beta}^{(k,II)}(\varepsilon)}{b_{k,\beta}(\varepsilon)}\sim
\frac{\zeta_{a_k}^{\beta}}{A_{a_k,\beta}};\ \ \
\frac{b_{k,\beta}^{(k,III)}(\varepsilon)}{b_{k,\beta}(\varepsilon)}\sim
\begin{cases}
1 & a_k>1\\
\frac{\zeta_{a_{k-1}}^{\beta}}{A_{a_{k-1},\beta}}& a_k=1
\end{cases},
\end{equation}
(where $\xi_n$ is defined in \eqref{xi-n}) and
\begin{equation}\label{b-k-b-k-1}
\frac{b_{k,\beta}(\varepsilon)}{b_{k-1,\beta}(\varepsilon)}\sim A_{a_k,\beta}.
\end{equation}
Moreover the constants related to $``\sim"$ only depend on $V$ and
$\beta.$
\end{prop}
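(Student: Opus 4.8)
The plan is to track the three partial sums $x_T^{(k)}:=b_{k,\beta}^{(k,T)}(\varepsilon)$ for $T\in\{I,II,III\}$, so that $b_{k,\beta}(\varepsilon)=x_{I}^{(k)}+x_{II}^{(k)}+x_{III}^{(k)}$; to set up a one-step recursion for the vector $(x_{I}^{(k)},x_{II}^{(k)},x_{III}^{(k)})$; to prove that the $I$-component always dominates the other two; and then to read off \eqref{three} and \eqref{b-k-b-k-1}. Throughout, all constants hidden in $\sim$, $\lesssim$, $\gtrsim$ depend only on $V$ and $\beta$, and I use Lemma \ref{A-n-beta} freely, i.e. $A_{n,\beta}(\varepsilon)\sim A_{n,\beta}\sim A_{n\pm1,\beta}\sim n^{1-\beta}$; in particular $A_{n,\beta}\gtrsim1$.

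\emph{The recursion.} Every $w\in\Omega_k(\varepsilon)$ is $w'u$ with $w'\in\Omega_{k-1}(\varepsilon)$ and $u=w_k$, and the set of admissible last symbols $u$ depends only on the type of $B_{w'}$. By bounded distortion and bounded variation (Corollary \ref{bdist}, Theorem \ref{bvar}) combined with Lemma \ref{quo-deri} when $e_u\ne e_{12}$, resp. by Corollary \ref{C-n} when $e_u=e_{12}$, one sees that, uniformly over $w'$ of a given type and over admissible $u=(e_u,\tau_{e_u}(a_k),l)$, the ratio $|B_{w'u}|/|B_{w'}|$ is $\sim(p+1)^{-1}\sin^2\frac{l\pi}{p+1}$ with $p=\tau_{e_u}(a_k)$ when $e_u\ne e_{12}$, and $\sim\zeta_{a_k}$ when $e_u=e_{12}$. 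Summing $\beta$-th powers over the admissible $u$ inside a fixed $B_{w'}$, and using the counts of subbands of each type of a $(k-1,\cdot)$-band together with the very definition of $A_{n,\beta}(\varepsilon)$, we obtain
$$x_{II}^{(k)}\sim\zeta_{a_k}^{\beta}\,x_{I}^{(k-1)},\qquad x_{I}^{(k)}\sim A_{a_k,\beta}\bigl(x_{II}^{(k-1)}+x_{III}^{(k-1)}\bigr),$$
together with $x_{III}^{(k)}\sim A_{a_k,\beta}\,x_{II}^{(k-1)}$ if $a_k=1$ and $x_{III}^{(k)}\sim A_{a_k,\beta}\bigl(x_{II}^{(k-1)}+x_{III}^{(k-1)}\bigr)$ if $a_k>1$. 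The first few values of $k$ are treated directly.

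\emph{Domination of $x_I$.} I will verify, in this order and each time for all $k$: (i) $x_{III}^{(k)}\lesssim x_{I}^{(k)}$, which is immediate from the recursion in both cases $a_k=1$ and $a_k>1$; (ii) $x_{I}^{(k)}\lesssim x_{II}^{(k)}+x_{III}^{(k)}$ --- if $a_k>1$ because then $x_{III}^{(k)}\sim x_{I}^{(k)}$, and if $a_k=1$ because $x_{I}^{(k)}\sim x_{II}^{(k-1)}+x_{III}^{(k-1)}\lesssim x_{I}^{(k-1)}+x_{II}^{(k-1)}\sim x_{II}^{(k)}+x_{III}^{(k)}$, using (i) at level $k-1$ and $\zeta_1=1$; (iii) $x_{II}^{(k)}\lesssim x_{I}^{(k)}$, since $x_{II}^{(k)}\sim\zeta_{a_k}^{\beta}x_{I}^{(k-1)}\le x_{I}^{(k-1)}$ because $0<\zeta_n\le1$, while $x_{I}^{(k)}\sim A_{a_k,\beta}\bigl(x_{II}^{(k-1)}+x_{III}^{(k-1)}\bigr)\gtrsim x_{I}^{(k-1)}$ by $A_{n,\beta}\gtrsim1$ and (ii) at level $k-1$. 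Hence $b_{k,\beta}(\varepsilon)\sim x_{I}^{(k)}$ --- the first assertion of \eqref{three} --- and also $x_{II}^{(k-1)}+x_{III}^{(k-1)}\sim b_{k-1,\beta}(\varepsilon)$.

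\emph{Conclusion and the main obstacle.} Now \eqref{b-k-b-k-1} reads $b_{k,\beta}(\varepsilon)\sim x_{I}^{(k)}\sim A_{a_k,\beta}\bigl(x_{II}^{(k-1)}+x_{III}^{(k-1)}\bigr)\sim A_{a_k,\beta}\,b_{k-1,\beta}(\varepsilon)$; the $II$-ratio in \eqref{three} follows from $x_{II}^{(k)}\sim\zeta_{a_k}^{\beta}x_{I}^{(k-1)}\sim\zeta_{a_k}^{\beta}b_{k-1,\beta}(\varepsilon)$ divided by $b_{k,\beta}(\varepsilon)\sim A_{a_k,\beta}b_{k-1,\beta}(\varepsilon)$; and the $III$-ratio is $\sim1$ when $a_k>1$ (as $x_{III}^{(k)}\sim x_{I}^{(k)}\sim b_{k,\beta}(\varepsilon)$), while for $a_k=1$ one has $x_{III}^{(k)}\sim x_{II}^{(k-1)}\sim\zeta_{a_{k-1}}^{\beta}b_{k-2,\beta}(\varepsilon)$ and $b_{k,\beta}(\varepsilon)\sim b_{k-1,\beta}(\varepsilon)\sim A_{a_{k-1},\beta}b_{k-2,\beta}(\varepsilon)$, giving the ratio $\zeta_{a_{k-1}}^{\beta}/A_{a_{k-1},\beta}$. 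The delicate point is the domination step: an $e_{12}$ edge contracts by a factor $\zeta_n$ that is exponentially small in $n$ (Lemma \ref{quo-deri}), whereas type $I$ bands are regenerated out of type $II$ and type $III$ bands only with the polynomially large factor $A_{a_k,\beta}\sim a_k^{1-\beta}$, so the two scales look badly mismatched; they are reconciled precisely by the trivial bounds $0<\zeta_n\le1$ and $A_{n,\beta}\gtrsim1$, which is exactly what is needed for steps (i)--(iii) to carry no loss that could accumulate in $k$.
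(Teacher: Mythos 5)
Your proposal is correct and follows essentially the same route as the paper: the same decomposition of $b_{k,\beta}(\varepsilon)$ by the type of the last symbol, the same one-step transfer relations obtained from Lemma \ref{quo-deri}, Corollary \ref{C-n} and bounded distortion, and the same key facts $\zeta_n\le1$, $A_{n,\beta}\gtrsim1$, with the $a=1$ cases handled by stepping back one more level (the paper writes this as the explicit two-step iteration \eqref{b-k-k-2}, you package it as your inequality (ii)). The only difference is bookkeeping --- your three uniform-in-$k$ dominance inequalities versus the paper's \eqref{b-k-13}, \eqref{b-k-21h}, \eqref{b-k-21} --- and your argument correctly avoids any accumulation of constants, so no gap.
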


\begin{proof}
By the definition and Lemma \ref{A-n-beta}, $A_{n,\beta}\sim
A_{n+1,\beta}$ for $n\geq 1$. By Proposition \ref{lm-2} we have
\begin{equation}\label{b-k-1}
\begin{array}{rcl}
b_{k,\beta}^{(k,I)}(\varepsilon)&=&\sum_{w\in\Omega_{k}^{(k,I)}(\varepsilon)}
|B_w|^\beta\\
&=&\sum_{w\in\Omega_{k-1}^{(k-1,II)}(\varepsilon)}
\sum_{j=\lceil\varepsilon(a_k+2)\rceil}^{\lfloor (1-\varepsilon)(a_k+2)\rfloor}
|B_{w\ast(e_{21},a_{k}+1,j)}|^\beta  \\
&&+ \sum_{w\in\Omega_{k-1}^{(k-1,III)}(\varepsilon)}
\sum_{j=\lceil\varepsilon (a_k+1)\rceil}^{\lfloor (1-\varepsilon)(a_k+1)\rfloor}
|B_{w\ast( e_{31},a_k,j)}|^\beta  \\
&\sim&\sum_{w\in\Omega_{k-1}^{(k-1,II)}(\varepsilon)}|B_w|^\beta
\sum_{j=\lceil\varepsilon(a_k+2)\rceil}^{\lfloor (1-\varepsilon)(a_k+2)\rfloor}
(a_k+2)^{-\beta}\sin^{2\beta}\frac{j\pi}{a_k+2}\\
&&+
\sum_{w\in\Omega_{k-1}^{(k-1,III)}(\varepsilon)}|B_w|^\beta
\sum_{j=\lceil\varepsilon (a_k+1)\rceil}^{\lfloor (1-\varepsilon)(a_k+1)\rfloor}
(a_k+1)^{-\beta}\sin^{2\beta}\frac{j\pi}{a_k+1}\\
&=&A_{a_k+1,\beta}\cdot b_{k-1,\beta}^{(k-1,II)}(\varepsilon) + A_{a_k,\beta}
\cdot b_{k-1,\beta}^{(k-1,III)}(\varepsilon)\\
&\sim&A_{a_k,\beta}\left( b_{k-1,\beta}^{(k-1,II)}(\varepsilon) +
b_{k-1,\beta}^{(k-1,III)}(\varepsilon)\right).
\end{array}
\end{equation}

Similarly we have
\begin{equation}\label{b-k-3}
b_{k,\beta}^{(k,III)}(\varepsilon)
\ \ \sim \ \ A_{a_k,\beta}\cdot b_{k-1,\beta}^{(k-1,II)}(\varepsilon) +
A_{a_k-1,\beta} \cdot b_{k-1,\beta}^{(k-1,III)}(\varepsilon).
\end{equation}
We can see
\begin{equation}\label{b-k-13}
\begin{array}{ll}
b_{k,\beta}^{(k,I)}(\varepsilon)\sim b_{k,\beta}^{(k,III)}(\varepsilon),& \mbox{ if } a_k>1,\\
b_{k,\beta}^{(k,I)}(\varepsilon)\gtrsim b_{k,\beta}^{(k,III)}(\varepsilon),&
\mbox{ if } a_k=1.\\
\end{array}
\end{equation}

On the other hand by Corollary \ref{C-n} we get

\begin{equation}\label{b-k-2}
\begin{array}{rcl}
b_{k,\beta}^{(k,II)}(\varepsilon)&=&\sum_{w\in\Omega_{k}^{(k,II)}(\varepsilon)}
|B_w|^\beta =\sum_{w\in\Omega_{k-1}^{(k-1,I)}(\varepsilon)}
|B_{w\ast((I,II),1,1)}|^\beta\\
&\sim&\zeta_{a_k}^{\beta}\sum_{w\in\Omega_{k-1}^{(k-1,I)}(\varepsilon)}|B_\omega|^\beta
=\zeta_{a_k}^{\beta}\cdot b_{k-1,\beta}^{(k-1,I)}(\varepsilon).
\end{array}
\end{equation}

We  remark that for the three relations above, the constants related
to $``\sim"$  only depend on $V$ and $\beta.$

By iterating \eqref{b-k-1}, \eqref{b-k-3} and \eqref{b-k-2}, we get
\begin{equation}\label{b-k-k-2}
\begin{array}{rcl}
b_{k,\beta}^{(k,I)}(\varepsilon)&\sim&
A_{a_k,\beta}\cdot b_{k-1,\beta}^{(k-1,II)}(\varepsilon) +
A_{a_k,\beta} \cdot b_{k-1,\beta}^{(k-1,III)}(\varepsilon)\\
&\sim& A_{a_k,\beta}\cdot \zeta_{a_{k-1}}^{\beta}\cdot
b_{k-2,\beta}^{(k-2,I)}(\varepsilon)\\
&&+ A_{a_k,\beta}\left(A_{a_{k-1},\beta}\cdot
b_{k-2,\beta}^{(k-2,II)}(\varepsilon) + A_{a_{k-1}-1,\beta} \cdot
b_{k-2,\beta}^{(k-2,III)}(\varepsilon)\right)\\
b_{k,\beta}^{(k,III)}(\varepsilon)&\sim&
A_{a_k,\beta}\cdot b_{k-1,\beta}^{(k-1,II)}(\varepsilon) +
A_{a_k-1,\beta} \cdot b_{k-1,\beta}^{(k-1,III)}(\varepsilon)\\
&\sim& A_{a_k,\beta}\cdot \zeta_{a_{k-1}}^{\beta}\cdot
b_{k-2,\beta}^{(k-2,I)}(\varepsilon)\\
&&+ A_{a_k-1,\beta}\left(A_{a_{k-1},\beta}\cdot
b_{k-2,\beta}^{(k-2,II)}(\varepsilon) + A_{a_{k-1}-1,\beta} \cdot
b_{k-2,\beta}^{(k-2,III)}(\varepsilon)\right)\\
b_{k,\beta}^{(k,II)}(\varepsilon)&\sim&
\zeta_{a_k}^{\beta}\cdot b_{k-1,\beta}^{(k-1,I)}(\varepsilon)\\
&\sim&\zeta_{a_k}^{\beta}A_{a_{k-1},\beta}\left(
b_{k-2,\beta}^{(k-2,II)}(\varepsilon) + b_{k-2,\beta}^{(k-2,III}(\varepsilon)\right).
\end{array}
\end{equation}

We now show that
\begin{equation}\label{b-k-21h}
{b_{k,\beta}^{(k,II)}}(\varepsilon)/{b_{k,\beta}^{(k,I)}(\varepsilon)}
\lesssim{\zeta_{a_k}^{\beta}}/{A_{a_k,\beta}}.
\end{equation}

If $a_{k-1}>1$, by \eqref{b-k-13} we have $b_{k-1,\beta}^{(k-1,I)}(\varepsilon)\sim
b_{k-1,\beta}^{(k-1,III)}(\varepsilon)$. Then by \eqref{b-k-2} we have
$$
\frac{b_{k,\beta}^{(k,II)}(\varepsilon)}{b_{k,\beta}^{(k,I)}(\varepsilon)}\sim
\frac{\zeta_{a_k}^{\beta}\cdot
b_{k-1,\beta}^{(k-1,I)}(\varepsilon)}{b_{k,\beta}^{(k,I)}(\varepsilon)}\lesssim
\frac{\zeta_{a_k}^{\beta}\cdot
b_{k-1,\beta}^{(k-1,III)}(\varepsilon)}{A_{a_k,\beta}\cdot
b_{k-1,\beta}^{(k-1,III)}(\varepsilon)} =\frac{\zeta_{a_k}^{\beta}}{A_{a_k,\beta}}.
$$

If $a_{k-1}=1$, recalling that $\zeta_1=1$ and
$A_{1,\beta}=2^{-\beta}\sim 1$, then by \eqref{b-k-k-2} we get
$$
\begin{cases}
b_{k,\beta}^{(k,I)}(\varepsilon)&
\sim A_{a_k,\beta}\left(b_{k-2,\beta}^{(k-2,I)}(\varepsilon)+
b_{k-2,\beta}^{(k-2,II)}(\varepsilon)\right),\\
b_{k,\beta}^{(k,II)}(\varepsilon)&\sim
\zeta_{a_k}^{\beta}\left(b_{k-2,\beta}^{(k-2,II)}(\varepsilon)+
b_{k-2,\beta}^{(k-2,III)}(\varepsilon)\right).
\end{cases}
$$
By \eqref{b-k-13} we get  $b_{k-2,\beta}^{(k-2,I)}(\varepsilon)
\gtrsim b_{k-2,\beta}^{(k-2,III)}(\varepsilon)$, thus
\eqref{b-k-21h} holds.

We show further that
\begin{equation}\label{b-k-21}
{b_{k,\beta}^{(k,II)}(\varepsilon)}/{b_{k,\beta}^{(k,I)}(\varepsilon)}\sim
{\zeta_{a_k}^{\beta}}/{A_{a_k,\beta}}.
\end{equation}
In fact, by $\zeta_{a_k}\leq 1$, Lemma \ref{A-n-beta} and \eqref{b-k-21h},
we see $b_{k,\beta}^{(k,II)}(\varepsilon)\lesssim b_{k,\beta}^{(k,I)}(\varepsilon)$
for any $k>0$, and also by \eqref{b-k-13}, we have
\begin{eqnarray*}
b_{k,\beta}^{(k,I)}(\varepsilon)&\sim& A_{a_k,\beta}\left(
b_{k-1,\beta}^{(k-1,II)}(\varepsilon) +
b_{k-1,\beta}^{(k-1,III)}(\varepsilon)\right)\\
&\lesssim&A_{a_k,\beta}\left( b_{k-1,\beta}^{(k-1,I)}(\varepsilon) +
 b_{k-1,\beta}^{(k-1,I)}(\varepsilon)\right)\\
&\sim& A_{a_k,\beta} b_{k-1,\beta}^{(k-1,I)}(\varepsilon).
\end{eqnarray*}
Together with \eqref{b-k-2}, we get the other direction of \eqref{b-k-21}.

By \eqref{b-k-13} and \eqref{b-k-21}, we have
\begin{equation}\label{b-k-a1}
b_{k,\beta}(\varepsilon)=b_{k,\beta}^{(k,I)}(\varepsilon)+
b_{k,\beta}^{(k,II)}(\varepsilon)+b_{k,\beta}^{(k,III)}(\varepsilon)
\sim b_{k,\beta}^{(k,I)}(\varepsilon),
\end{equation}
which implies the first and the second formula of \eqref{three},
also the third formula in case of $a_{k}>1$.
If $a_{k}=1$, by \eqref{b-k-3} and \eqref{b-k-21},
$$
\begin{array}{l}
b_{k,\beta}^{(k,III)}(\varepsilon)\sim b_{k-1,\beta}^{(k-1,II)}(\varepsilon)\sim
\frac{\zeta_{a_{k-1}}^{\beta}}{A_{a_{k-1},\beta}}b_{k-1,\beta}^{(k-1,I)}(\varepsilon)\\
b_{k,\beta}(\varepsilon)\sim b_{k,\beta}^{(k,I)}(\varepsilon)\sim
b_{k,\beta}^{(k,II)}(\varepsilon)\sim b_{k-1,\beta}^{(k-1,I)}(\varepsilon)
\end{array}
$$
Thus the third formula of \eqref{three} hold in the case of  $a_k=1$,

Combine \eqref{b-k-1} and \eqref{b-k-a1}.
If $a_{k-1}>1$, \eqref{b-k-13} and \eqref{b-k-21} implies \eqref{b-k-b-k-1}.
If $a_{k-1}=1$, by \eqref{b-k-21}, we see
${b_{k-1,\beta}^{(k-1,II)}}(\varepsilon)/{b_{k-1,\beta}^{(k-1,I)}(\varepsilon)}\sim 1$,
and then we still have \eqref{b-k-b-k-1}.

By Remark \ref{const} and the remark given after the three
relations, all the constants related to $``\sim, \lesssim,\gtrsim"$
only depend on $V$ and $\beta.$
\end{proof}

\begin{prop}
For any  $m\ge k+3$, we have
\begin{equation}\label{cki}
\begin{cases}
\frac{b_{m,\beta}^{(k,I)}(\varepsilon)}{b_{m,\beta}(\varepsilon)}&\sim
\frac{\zeta_{a_{k+1}}^\beta}{A_{a_{k+1},\beta}};\\
\frac{b_{m,\beta}^{(k,II)}(\varepsilon)}{b_{m,\beta}(\varepsilon)}&\sim
\frac{\zeta_{a_k}^\beta}{A_{a_k,\beta}};\\
\frac{b_{m,\beta}^{(k,III)}(\varepsilon)}{b_{m,\beta}(\varepsilon)}&\sim
\begin{cases}
1& a_k>1,a_{k+1}>1;\\
\frac{\zeta_{a_{k+2}}^\beta}{A_{a_{k+2},\beta}}& a_k>1,a_{k+1}=1;\\
\frac{\zeta_{a_{k-1}}^\beta}{A_{a_{k-1},\beta}}& a_k=1,a_{k+1}>1;\\
\frac{\zeta_{a_{k-1}}^\beta}{A_{a_{k-1},\beta}}
\frac{\zeta_{a_{k+2}}^\beta}{A_{a_{k+2},\beta}}&
a_k=1,a_{k+1}=1.
\end{cases}
\end{cases}
\end{equation}

\end{prop}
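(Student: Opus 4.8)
The plan is to reduce the statement about the ``stabilized'' ratios $b_{m,\beta}^{(k,T)}(\varepsilon)/b_{m,\beta}(\varepsilon)$ for $m\ge k+3$ to the already-established ``one-step'' ratios $b_{k,\beta}^{(k,T)}(\varepsilon)/b_{k,\beta}(\varepsilon)$ of \eqref{three}, via an iteration on the index that tracks the type at level $k$. The key observation is that for a word $w\in\Omega_m^{(k,T)}(\varepsilon)$ one can split $w=w'\ast v$ at level $k$, where $w'\in\Omega_k^{(k,T)}(\varepsilon)$ and $v$ runs over the admissible continuations. By the bounded covariation estimate (Theorem \ref{bco}, applied as in the proof of Proposition~\ref{dim-converge}), the sum $\sum_{w'\ast v}|B_{w'\ast v}|^\beta$ factors, up to the absolute constant $\eta$, as $|B_{w'}|^\beta$ times a quantity depending only on the suffix structure from level $k$ onward; more precisely, the ratio $\sum_v |B_{w'\ast v}|^\beta / |B_{w'}|^\beta$ depends (up to $\sim$) only on the type $T$ of $B_{w'}$ and on $a_{k+1},a_{k+2},\dots$, not on $w'$ itself. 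Hence
\[
b_{m,\beta}^{(k,T)}(\varepsilon)\ \sim\ R_{m}^{(T)}\cdot b_{k,\beta}^{(k,T)}(\varepsilon),
\]
where $R_m^{(T)}$ is a ``tail weight'' attached to type $T$ at level $k$, and dividing two such relations for types $T$ and $T'$ and summing over $T'$ gives
\[
\frac{b_{m,\beta}^{(k,T)}(\varepsilon)}{b_{m,\beta}(\varepsilon)}\ \sim\
\frac{R_m^{(T)}\, b_{k,\beta}^{(k,T)}(\varepsilon)}{\sum_{T'} R_m^{(T')}\, b_{k,\beta}^{(k,T')}(\varepsilon)}.
\]

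The next step is to pin down the tail weights $R_m^{(I)},R_m^{(II)},R_m^{(III)}$ up to $\sim$. The cleanest route is not to compute them directly but to compare them using the same recursions \eqref{b-k-1}, \eqref{b-k-3}, \eqref{b-k-2} that were used to prove \eqref{three}, now read ``from the $k$-th coordinate forward'': a type-$II$ band at level $k$ is forced (via $e_{12}$) into a type-$I$ band at level $k+1$ carrying the factor $\zeta_{a_{k+1}}^\beta$, while type-$I$ and type-$III$ bands at level $k$ branch into $\sim a_{k+1}$ type-$I$ and type-$III$ bands at level $k+1$ carrying a factor $\sim A_{a_{k+1},\beta}$ (this is exactly Lemma~\ref{quo-deri} together with Corollary~\ref{bdist}). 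Propagating this two or three levels and using Lemma~\ref{A-n-beta} (so that $A_{n\pm1,\beta}\sim A_{n,\beta}$, and $A_{1,\beta}\sim1$, $\zeta_1=1$) yields $R_m^{(I)}\sim A_{a_{k+1},\beta}$, $R_m^{(II)}\sim\zeta_{a_{k+1}}^\beta$ (since type-$II$ at level $k$ behaves like type-$I$ at level $k{+}1$ up to the factor $\zeta_{a_{k+1}}^\beta / A_{a_{k+1},\beta}$ times $A_{a_{k+1},\beta}$, i.e. $\sim\zeta_{a_{k+1}}^\beta$), and $R_m^{(III)}\sim A_{a_{k+1},\beta}$ when $a_{k+1}>1$ while $R_m^{(III)}\sim \zeta_{a_{k+2}}^\beta$ when $a_{k+1}=1$ (because a type-$III$ band with $a_{k+1}=1$ contains no type-$III$ child and only one type-$I$ child, forcing the $e_{12}$ step at level $k+2$). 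One then substitutes the already-known values of $b_{k,\beta}^{(k,T)}(\varepsilon)/b_{k,\beta}(\varepsilon)$ from \eqref{three}, which are $\sim1$, $\sim\zeta_{a_k}^\beta/A_{a_k,\beta}$, and ($\sim1$ or $\sim\zeta_{a_{k-1}}^\beta/A_{a_{k-1},\beta}$ according as $a_k>1$ or $a_k=1$) for $T=I,II,III$ respectively.

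Finally, I would assemble the four cases of \eqref{cki} by a bookkeeping computation: in each case the denominator $\sum_{T'} R_m^{(T')} b_{k,\beta}^{(k,T')}(\varepsilon)$ is dominated (up to $\sim$) by its type-$I$ term $R_m^{(I)} b_{k,\beta}^{(k,I)}(\varepsilon)\sim A_{a_{k+1},\beta}\, b_{k,\beta}(\varepsilon)$ — this uses $\zeta_n\le1$ and $A_{n,\beta}\sim n^{1-\beta}$ exactly as in the derivation of \eqref{b-k-a1} — so that $b_{m,\beta}^{(k,T)}(\varepsilon)/b_{m,\beta}(\varepsilon)\sim R_m^{(T)}/A_{a_{k+1},\beta}\cdot b_{k,\beta}^{(k,T)}(\varepsilon)/b_{k,\beta}(\varepsilon)$, and plugging in the values above gives the five-case formula after simplification (e.g. when $a_k>1$, $a_{k+1}>1$: $R_m^{(III)}/A_{a_{k+1},\beta}\sim1$ and $b_{k,\beta}^{(k,III)}/b_{k,\beta}\sim1$, giving $\sim1$; when $a_k=1$, $a_{k+1}=1$ the two suppression factors $\zeta_{a_{k-1}}^\beta/A_{a_{k-1},\beta}$ and $\zeta_{a_{k+2}}^\beta/A_{a_{k+2},\beta}$ multiply). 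The requirement $m\ge k+3$ is precisely what gives room for the tail to pass through the ``two-step'' phenomena at levels $k{+}1$ and $k{+}2$ (the $a_{k+1}=1$ and, when $a_k=1$, the $a_{k-1}$ dependence) before stabilizing. The main obstacle I anticipate is the careful case analysis establishing that the denominator is always comparable to its type-$I$ contribution and that the tail weights are independent of the prefix $w'$ up to the uniform constant from Theorem~\ref{bco} — i.e., making the factorization rigorous uniformly in $w'$ — rather than any single estimate; once that is in hand, the rest is Lemma~\ref{A-n-beta} bookkeeping.
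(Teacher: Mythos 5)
Your overall strategy is the same as the paper's: split each $w\in\Omega_m^{(k,T)}(\varepsilon)$ at level $k$, use bounded covariation (Theorem \ref{bco} and Corollary \ref{C-n}) to show that the tail sum depends, up to uniform constants, only on the type $T$ at level $k$, compute the type-dependent tail weights by the same one-step recursions that gave \eqref{three}, and then combine with \eqref{three}. The paper does exactly this, introducing tail weights $c_{k+1}^{(I)},c_{k+1}^{(II)},c_{k+1}^{(III)}$ for the three types at level $k+1$, deriving $c_{k+1}^{(I)}/c_{k+1}^{(II)}\sim\zeta_{a_{k+2}}^\beta/A_{a_{k+2},\beta}$, and proving the key comparison $c_{k+1}^{(I)}+c_{k+1}^{(III)}\sim c_{k+1}^{(II)}$.

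However, your computation of the tail weights contains a genuine error: you have reversed the roles of types I and II in the transition structure. The edge $e_{12}=(I,II)$ means that a type-$I$ band at level $k$ contains exactly \emph{one} child at level $k+1$, which is of type $II$ and has length ratio $\sim\zeta_{a_{k+1}}$ (Corollary \ref{C-n}); it is the type-$II$ (and type-$III$) bands that branch into $\sim a_{k+1}$ children of types I and III with total $\beta$-weight $\sim A_{a_{k+1},\beta}$. Hence the correct weights are $R_m^{(I)}\sim\zeta_{a_{k+1}}^{\beta}\,c$ and $R_m^{(II)}\sim A_{a_{k+1},\beta}\,c$ for a common factor $c=c_{k+1}^{(II)}$ --- the opposite of what you assert (and your justification, ``a type-$II$ band at level $k$ is forced via $e_{12}$ into a type-$I$ band at level $k+1$,'' misreads the direction of the edge; it is also inconsistent with your own, correct, description of the $III\to I\to II$ chain in the case $a_{k+1}=1$). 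Carried through, your swapped weights would give $b_{m,\beta}^{(k,I)}(\varepsilon)/b_{m,\beta}(\varepsilon)\sim1$ and $b_{m,\beta}^{(k,II)}(\varepsilon)/b_{m,\beta}(\varepsilon)\sim\zeta_{a_k}^\beta\zeta_{a_{k+1}}^\beta/(A_{a_k,\beta}A_{a_{k+1},\beta})$, contradicting the first two lines of \eqref{cki}. Relatedly, the denominator $\sum_{T'}R_m^{(T')}b_{k,\beta}^{(k,T')}(\varepsilon)$ is indeed $\sim A_{a_{k+1},\beta}\,c\,b_{k,\beta}(\varepsilon)$, but it is dominated by the type-$II$ or type-$III$ contribution, not the type-$I$ one: the type-$I$ term is only $\zeta_{a_{k+1}}^\beta c\,b_{k,\beta}(\varepsilon)$, which can be much smaller since $\zeta_{a_{k+1}}\le1$ while $A_{a_{k+1},\beta}\sim a_{k+1}^{1-\beta}$ by Lemma \ref{A-n-beta}. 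Once the weights are corrected, the rest of your plan goes through and reproduces the paper's argument.
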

\begin{proof}
Take any $\sigma_2\in\Omega_{k+1}^{(k,I)}(\varepsilon)$, then $B_{\sigma_2}$ is
a band of type $(k+1,II)$. Take
$\sigma_1,\sigma_3\in\Omega_{k+1}^{(k,II)}(\varepsilon)$ such that
$B_{\sigma_1}$ is a band of type $(k+1,I)$ and $B_{\sigma_3}$ is a
band of type $(k+1,III)$.
For any $p\le m$ and any $T=I,II,III$ define
$$
\Omega_{p,m}^{(T)}(\varepsilon)=\{w_{p}\cdots w_m\in\prod_{j=p}^m \mathscr{E}_{a_j}(\varepsilon)
\text{ admissible }: e_{w_p}=(T,\ast)\},
$$
where $\mathscr{E}_{a_j}(\varepsilon)$ is defined in \eqref{trunc}.

Define
$$\begin{array}{rcl}
c_{k+1}^{(I)}&=&\Sigma_{\tau\in\Omega_{k+2,m}^{(I)}(\varepsilon)}
\frac{|B_{\sigma_1*\tau}|^\beta}{|B_{\sigma_1}|^\beta}\\
c_{k+1}^{(II)}&=&\Sigma_{\tau\in\Omega_{k+2,m}^{(II)}(\varepsilon)}
\frac{|B_{\sigma_2*\tau}|^\beta}{|B_{\sigma_2}|^\beta}\\
c_{k+1}^{(III)}&=&\Sigma_{\tau\in\Omega_{k+2,m}^{(III)}(\varepsilon)}
\frac{|B_{\sigma_3*\tau}|^\beta}{|B_{\sigma_3}|^\beta}.
\end{array}$$
We can also define $c_{k+2}^{(I)}$,  $c_{k+2}^{(II)}$,
$c_{k+2}^{(III)}$ in an analogous way.
Analogous to the arguments of \eqref{b-k-1},\eqref{b-k-3} and \eqref{b-k-2}, we have
\begin{eqnarray*}
c_{k+1}^{(I)}&\sim& \zeta_{a_{k+2}}^\beta c_{k+2}^{(II)}\\
c_{k+1}^{(II)}&\sim&
A_{a_{k+2}+1,\beta}\ c_{k+2}^{(I)}+A_{a_{k+2},\beta}\ c_{k+2}^{(III)}\\
c_{k+1}^{(III)}&\sim&
A_{a_{k+2},\beta}\ c_{k+2}^{(I)}+A_{a_{k+2}-1,\beta}\ c_{k+2}^{(III)}.
\end{eqnarray*}
And consequently
$$
\frac{c_{k+1}^{(I)}}{c_{k+1}^{(II)}}\sim\frac{ \zeta_{a_{k+2}}^\beta
}{A_{a_{k+2},\beta}};\ \ \frac{c_{k+1}^{(III)}}{c_{k+1}^{(II)}}\sim
\begin{cases}
1&a_{k+2}>1\\
\frac{ \zeta_{a_{k+3}}^\beta }{A_{a_{k+3},\beta}}& a_{k+2}=1
\end{cases}
$$

Write
$$
\rho:=(e_{12},1,1),\ \  \theta_j:=(e_{21},a_{k+1}+1,j)\ \ \text{ and }\ \
\phi_j:=(e_{23},a_{k+1},j).
$$ By
Theorem \ref{bco} and Corollary \ref{C-n} we have
\begin{eqnarray*}
b_{m,\beta}^{(k,I)}(\varepsilon)&=&\sum_{w\in\Omega_{m}^{(k,I)}(\varepsilon)}|B_{w}|^\beta\\
&=&\sum_{\sigma\in\Omega_{k}^{(k,I)}(\varepsilon)}
|B_{\sigma}|^\beta\frac{|B_{\sigma*\rho}|^\beta}{|B_{\sigma}|^\beta}
\sum_{\tau\in\Omega_{k+2,m}^{(k+1,II)}(\varepsilon)}\frac{|B_{\sigma*\rho*\tau}
|^\beta}{|B_{\sigma*\rho}|^\beta}\\
&\sim&\sum_{\sigma\in\Omega_{0,k}^{(k,I)}(\varepsilon)}|B_{\sigma}|^\beta
\frac{|B_{\sigma_2}|^\beta}{|B_{\sigma_2^-}|^\beta}
\sum_{\tau\in\Omega_{k+2,m}^{(k+1,II)}(\varepsilon)}
\frac{|B_{\sigma_2*\tau}|^\beta}{|B_{\sigma_2}|^\beta}\\
&\sim&b_{k,\beta}^{(k,I)}(\varepsilon) \zeta_{a_{k+1}}^\beta c_{k+1}^{(II)},\\
b_{m,\beta}^{(k,II)}(\varepsilon)&=&
\Sigma_{\sigma\in\Omega_{0,m}^{(k,II)}(\varepsilon)}|B_{\sigma}|^\beta\\
&=&\Sigma_{\sigma\in\Omega_{0,k}^{(k,II)}(\varepsilon)}|B_{\sigma}|^\beta
\Big(\sum_{j=\lceil \varepsilon(a_{k+1}+2)\rceil}^{\lfloor(1-\varepsilon)(a_{k+1}+2)\rfloor}
\Sigma_{\tau\in\Omega_{k+2,m}^{(k+1,I)}(\varepsilon)}
\frac{|B_{\sigma*\theta_j*\tau}|^\beta}{|B_{\sigma}|^\beta}\\
&&\ \ \ \ \ \ \ \ \ \ \ \ \ \ \ \ \ \ \
+\sum_{j=\lceil \varepsilon (a_{k+1}+1)\rceil}^{\lfloor(1-\varepsilon)(a_{k+1}+1)\rfloor}
\Sigma_{\tau\in\Omega_{k+2,m}^{(k+1,III)}(\varepsilon)}
\frac{|B_{\sigma*\phi_j*\tau}|^\beta}{|B_{\sigma}|^\beta}\Big)\\
&\sim&\Sigma_{\sigma\in\Omega_{0,k}^{(k,II)}(\varepsilon)}|B_{\sigma}|^\beta\left(
A_{a_{k+1}+1,\beta}\ c_{k+1}^{(I)}+
A_{a_{k+1},\beta}\ c_{k+1}^{(III)}\right)\\
&\sim&b_{k,\beta}^{(k,II)}(\varepsilon)
A_{a_{k+1},\beta}(c_{k+1}^{(I)}+c_{k+1}^{(III)}),\\
b_{m,\beta}^{(k,III)}(\varepsilon)&\sim&b_{k,\beta}^{(k,III)}(\varepsilon)\left(
A_{a_{k+1},\beta}\ c_{k+1}^{(I)}+ A_{a_{k+1}-1,\beta}\ c_{k+1}^{(III)}\right),
\end{eqnarray*}
where $\sigma_2^-$ is the word obtained by deleting the last letter of $\sigma_2.$

We claim that $c_{k+1}^{(I)}+c_{k+1}^{(III)}\sim c_{k+1}^{(II)}.$ In
fact at first we note that $\zeta_n^\beta/A_{n,\beta}\lesssim 1$ for
any $n\geq 1.$ Thus
\begin{eqnarray*}
c_{k+1}^{(I)}+c_{k+1}^{(III)}&\sim&\left(\frac{
\zeta_{a_{k+2}}^\beta }{A_{a_{k+2},\beta}}+
\begin{cases}1,&\text{ if }a_{k+2}>1\\
\zeta_{a_{k+3}}^\beta/A_{a_{k+3},\beta},& \text{ if }a_{k+2}=1
\end{cases}
\right) c_{k+1}^{(II)}\\
&\sim&
\begin{cases}
\left(\frac{ \zeta_{a_{k+2}}^\beta }{A_{a_{k+2},\beta}}+1\right)
c_{k+1}^{(II)},&\text{ if }a_{k+2}>1\\
\left(\frac{ \zeta_{a_{k+3}}^\beta
}{A_{a_{k+3},\beta}}+2^\beta\right)c_{k+1}^{(II)},&\text{ if } a_{k+2}=1
\end{cases}\\
&\sim&c_{k+1}^{(II)}.
\end{eqnarray*}

Write $\Theta_k:=b_{k,\beta}^{(k,I)}c_{k+1}^{(II)}$. As a result we
get
\begin{eqnarray*}
b_{m,\beta}^{(k,I)}(\varepsilon)&\sim& \zeta_{a_{k+1}}^\beta \Theta_k;\\
b_{m,\beta}^{(k,II)}(\varepsilon)&\sim&\frac{ \zeta_{a_{k}}^\beta
}{A_{a_k,\beta}}  A_{a_{k+1},\beta}\Theta_k;\\
b_{m,\beta}^{(k,III)}(\varepsilon)&\sim&
\begin{cases}
 A_{a_{k+1},\beta}\Theta_k & a_k>1;a_{k+1}>1;\\
\frac{ \zeta_{a_{k+2}}^\beta }{A_{a_{k+2},\beta}}\Theta_k & a_k>1;a_{k+1}=1;\\
\frac{ \zeta_{a_{k-1}}^\beta }{A_{a_{k-1},\beta}} A_{a_{k+1},\beta}\Theta_k & a_k=1;a_{k+1}>1;\\
\frac{ \zeta_{a_{k-1}}^\beta }{A_{a_{k-1},\beta}}
\frac{ \zeta_{a_{k+2}}^\beta }{A_{a_{k+2},\beta}}\Theta_k & a_k=1;a_{k+1}=1.\\
\end{cases}
\end{eqnarray*}
By a simple computation the result follows.
\end{proof}

We will prove the following theorem,
which is Theorem \ref{gibbs0} when $\varepsilon=0$.

\begin{theo}[Existence of Gibbs like measures]\label{gibbs}
For any $0<\beta<1$, $0\le \varepsilon<1/12$, there exists a probability measure $\mu_{\beta,\varepsilon}$
supported on $E_\varepsilon$ such that

 if $w\in\Omega_{k}^{(k,I)}(\varepsilon)$, let $u=(e_{12},1,1)$,  then
\begin{equation}\label{gibbs1}
\mu_{\beta,\varepsilon}(B_{w})\sim\frac{\zeta_{a_{k+1}}^\beta}{ a_{k+1}^{1-\beta}}
\frac{|B_{w}|^\beta}{b_{k,\beta}(\varepsilon)}\sim
\frac{|B_{wu}|^\beta}{b_{k+1,\beta}(\varepsilon)}.
\end{equation}

If $w\in\Omega_{k}^{(k,II)}(\varepsilon)$, then
\begin{equation}\label{gibbs2}
\mu_{\beta,\varepsilon}(B_{w})\sim
\frac{|B_{w}|^\beta}{b_{k,\beta}(\varepsilon)}.
\end{equation}

If $w\in\Omega_{k}^{(k,III)}(\varepsilon)$, then
\begin{equation}\label{gibbs3}
\mu_{\beta,\varepsilon}(B_{w})\sim
\begin{cases}
\frac{|B_{w}|^\beta}{b_{k,\beta}(\varepsilon)}& a_{k+1}>1;\\
\frac{\zeta_{a_{k+2}}^\beta}{a_{k+2}^{1-\beta}}
\frac{|B_{w}|^\beta}{b_{k,\beta}(\varepsilon)}& a_{k+1}=1.
\end{cases}
\end{equation}
\end{theo}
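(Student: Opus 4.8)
The plan is to build $\mu_{\beta,\varepsilon}$ as a Kolmogorov-type limit of finitely-additive set functions defined on the cylinder sets $\{B_w : w\in\Omega_n(\varepsilon)\}$, and then verify the asymptotic formulas \eqref{gibbs1}--\eqref{gibbs3} from the combinatorial estimates already established in the two preceding propositions, namely \eqref{three}, \eqref{b-k-b-k-1} and \eqref{cki}, together with Theorem \ref{bco} (bounded covariation) and Corollary \ref{C-n}. First I would fix $n$ and assign to each $w\in\Omega_n(\varepsilon)$ the weight $m_n(B_w):=|B_w|^\beta/b_{n,\beta}(\varepsilon)$, so that $\sum_{w\in\Omega_n(\varepsilon)}m_n(B_w)=1$. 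These are \emph{not} consistent as $n$ varies, which is exactly why the measure is only ``weak Gibbs''; the key point is to show they are \emph{comparable} in the right way. Concretely, for $w\in\Omega_k(\varepsilon)$ and $m>k$ one writes
$$
\sum_{\substack{v\in\Omega_m(\varepsilon)\\ v|_k=w}}|B_v|^\beta
=|B_w|^\beta\sum_{\substack{v\in\Omega_m(\varepsilon)\\ v|_k=w}}\frac{|B_v|^\beta}{|B_w|^\beta},
$$
and by Theorem \ref{bco} the inner sum depends on $w$, up to a bounded multiplicative constant $\eta^{\pm\beta}$, only through the type $T$ of $B_w$ and (via Corollary \ref{C-n}, because an $(k,I)$-band forces the next symbol $u=(e_{12},1,1)$ with ratio $\sim\zeta_{a_{k+1}}$) through $a_{k+1}$ and, when $a_{k+1}=1$, through $a_{k+2}$. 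Thus $m_m(B_w):=\sum_{v|_k=w}m_m(B_v)$ satisfies, for all $m\ge k+3$,
$$
m_m(B_w)\sim \frac{|B_w|^\beta}{b_{k,\beta}(\varepsilon)}\cdot
\frac{b_{m,\beta}^{(k,T)}(\varepsilon)/b_{k,\beta}^{(k,T)}(\varepsilon)}{b_{m,\beta}(\varepsilon)/b_{k,\beta}(\varepsilon)},
$$
and the second factor is $\sim 1$ by comparing \eqref{three} with \eqref{cki} — this is where the two propositions are used decisively, and where the constants become independent of $m$.

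Next I would extract a measure. Since for each fixed $k$ the sequence $\bigl(m_m(B_w)\bigr)_{m\ge k+3}$ is bounded above and below by constants times a fixed quantity, a diagonal argument over the countable family $\{B_w:w\in\Omega_\ast(\varepsilon)\}$ produces a subsequence $m_j\to\infty$ along which $\mu(B_w):=\lim_j m_{m_j}(B_w)$ exists for every $w$. One checks the consistency relations $\mu(B_w)=\sum_{wu\in\Omega_{|w|+1}(\varepsilon)}\mu(B_{wu})$ (they pass to the limit because each is a finite sum) and $\mu(B_I)+\mu(B_{III})$ normalises to $1$; together with the fact that $\max\{|B_w|:w\in\Omega_n(\varepsilon)\}\to 0$ by \eqref{upper-bd}, the Carathéodory/Kolmogorov extension theorem gives a Borel probability measure $\mu_{\beta,\varepsilon}$ with $\mu_{\beta,\varepsilon}(B_w)=\mu(B_w)$, supported on $E_\varepsilon=\bigcap_n\bigcup_{w\in\Omega_n(\varepsilon)}B_w$. (Here one also needs that distinct bands of the same order are disjoint, which is stated in Section \ref{structure}, so the cylinders form a genuine semi-algebra.)

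Finally I would read off the three displayed asymptotics. For $w\in\Omega_k^{(k,I)}(\varepsilon)$ the estimate above together with $b_{k,\beta}^{(k,I)}(\varepsilon)\sim b_{k,\beta}(\varepsilon)$ and $\zeta_{a_{k+1}}^\beta/A_{a_{k+1},\beta}\sim\zeta_{a_{k+1}}^\beta/a_{k+1}^{1-\beta}$ (Lemma \ref{A-n-beta}) gives the first $\sim$ in \eqref{gibbs1}; the second $\sim$ follows because $|B_{wu}|\sim\zeta_{a_{k+1}}|B_w|$ by Corollary \ref{C-n} and $b_{k,\beta}(\varepsilon)/b_{k+1,\beta}(\varepsilon)\sim a_{k+1}^{\beta-1}$ by \eqref{b-k-b-k-1} and Lemma \ref{A-n-beta}. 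The type-$II$ case \eqref{gibbs2} is immediate since $b_{k,\beta}^{(k,II)}$ cancels. For type $III$, \eqref{cki} distinguishes $a_{k+1}>1$ (ratio $\sim 1$) from $a_{k+1}=1$ (ratio $\sim\zeta_{a_{k+2}}^\beta/A_{a_{k+2},\beta}\sim\zeta_{a_{k+2}}^\beta/a_{k+2}^{1-\beta}$), which is exactly \eqref{gibbs3}. The main obstacle is the first paragraph: proving that the $m$-dependence of $\sum_{v|_k=w}|B_v|^\beta/|B_w|^\beta$ is absorbed into $\sim$-constants that do not blow up as $a_{k+1},a_{k+2}\to\infty$. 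This is precisely what forces one to combine bounded covariation with the sharp $A_{n,\beta}\sim n^{1-\beta}$ comparisons rather than the cruder uniform bounds available in the bounded-type setting of \cite{FLW}, and care is needed so that the final constants depend only on $\beta$ and $V$, as asserted.
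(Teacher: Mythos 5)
Your proposal follows essentially the same route as the paper's proof: define the level-$m$ weights $\mu_{\beta,\varepsilon,m}(B_w)=|B_w|^\beta/b_{m,\beta}(\varepsilon)$, use bounded covariation to replace $\sum_{v|_k=w}|B_v|^\beta/|B_w|^\beta$ by the type-dependent quantity $b_{m,\beta}^{(k,T)}(\varepsilon)/b_{k,\beta}^{(k,T)}(\varepsilon)$ up to uniform constants, evaluate the resulting correction via \eqref{three} and \eqref{cki}, and pass to a weak limit. One sentence should be corrected: the factor $\bigl(b_{m,\beta}^{(k,T)}(\varepsilon)/b_{m,\beta}(\varepsilon)\bigr)\big/\bigl(b_{k,\beta}^{(k,T)}(\varepsilon)/b_{k,\beta}(\varepsilon)\bigr)$ is $\sim 1$ only for type $II$ (and type $III$ with $a_{k+1}>1$); for type $I$ it is $\sim\zeta_{a_{k+1}}^\beta/A_{a_{k+1},\beta}$, which is precisely the extra factor in \eqref{gibbs1} that your final paragraph then correctly produces, so the slip is expository rather than structural.
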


\begin{proof}
For any $0<\beta<1$ and $m>0$, we define a probability
$\mu_{\beta,\varepsilon,m}$ on $\mathbb R$  such that for any
$w\in\Omega_{m}(\varepsilon)$,
$$\mu_{\beta,\varepsilon,m}(B_{w})=
\frac{|B_{w}|^\beta}{b_{m,\beta}(\varepsilon)},$$
where $\mu_{\beta,\varepsilon,m}$ is uniformly distributed on each band
$B_w$ for any $w\in\Omega_{m}(\varepsilon)$.

Fix any $k\ge 1.$ For  $T\in\{I,II,III\}$ and $w \in \Omega_{k}^{(k,T)}(\varepsilon)$,
We will prove that
$\mu_{\beta,\varepsilon,m}(B_{w})$ satisfy \eqref{gibbs1},\eqref{gibbs2}
or \eqref{gibbs3} respectively for $m\ge k+3$. Then by taking any weak limit of
$\{\mu_{\beta,\varepsilon,m}\}_{m>0}$, we prove the theorem.

For any $\sigma\in\Omega_{k}^{(k,T)}(\varepsilon)$, by bounded covariation  we have
$$\begin{array}{rcl}
\mu_{\beta,\varepsilon,m}(B_{w})&=&
\frac{1}{b_{m,\beta}(\varepsilon)}
\sum\limits_{\tau\in\Omega_{k+1,m}^{(k,T)}(\varepsilon)}|B_{w*\tau}|^\beta\\
&=&\frac{|B_{w}|^\beta}
{b_{m,\beta}(\varepsilon)}\sum\limits_{\tau\in\Omega_{k+1,m}^{(k,T)}(\varepsilon)}
\frac{|B_{w*\tau}|^\beta}{|B_{w}|^\beta}\\
&\sim&\frac{|B_{w}|^\beta}
{b_{m,\beta}(\varepsilon)}\sum\limits_{\tau\in\Omega_{k+1,m}^{(k,T)}(\varepsilon)}
\frac{|B_{\sigma*\tau}|^\beta}{|B_{\sigma}|^\beta}.
\end{array}
$$
Hence
$$|B_{\sigma}|^\beta \mu_{\beta,\varepsilon,m}(B_{w})\sim\frac{|B_{w}|^\beta}
{b_{m,\beta}(\varepsilon)}
\sum\limits_{\tau\in\Omega_{k+1,m}^{(k,T)}(\varepsilon)}|B_{\sigma*\tau}|^\beta.$$
Take sum on both sides for all $\sigma\in\Omega_{k}^{(k,T)}(\varepsilon)$, we get
$$b_{k,\beta}^{(k,T)}(\varepsilon)\mu_{\beta,\varepsilon,m}(B_{w})
\sim|B_{w}|^\beta \frac{b_{m,\beta}^{(k,T)}(\varepsilon)}{b_{m,\beta}(\varepsilon)},$$
which implies that
$$\mu_{\beta,\varepsilon,m}(B_{w})\sim\frac{|B_{w}|^\beta}{b_{k,\beta}(\varepsilon)}
\frac{b_{k,\beta}(\varepsilon)}{b_{k,\beta}^{(k,T)}(\varepsilon)}
\frac{b_{m,\beta}^{(k,T)}(\varepsilon)}{b_{m,\beta}(\varepsilon)}.$$ Combining with
\eqref{three} and \eqref{cki},

 if $w$ has type $I$ then
$$
\mu_{\beta,\varepsilon,m}(B_{w})\sim\frac{\zeta_{a_{k+1}}^\beta}{A_{a_{k+1},\beta}}
\frac{|B_{w}|^\beta}{b_{k,\beta}(\varepsilon)}.
$$

If $w$ has type $II$, then
$$
\mu_{\beta,\varepsilon,m}(B_{w})\sim
\frac{|B_{w}|^\beta}{b_{k,\beta}(\varepsilon)}.
$$

If $w$ has type $III$, then
$$
\mu_{\beta,\varepsilon,m}(B_{w})\sim
\begin{cases}
\frac{|B_{w}|^\beta}{b_{k,\beta}(\varepsilon)}& a_{k+1}>1;\\
\frac{\zeta_{a_{k+2}}^\beta}{A_{a_{k+2},\beta}}
\frac{|B_{w}|^\beta}{b_{k,\beta}(\varepsilon)}& a_{k+1}=1.
\end{cases}
$$
Thus we get \eqref{gibbs2}, \eqref{gibbs3} and the first relation of \eqref{gibbs1}.

 To get the second relation of \eqref{gibbs1}, we proceed as follows.
Write $u=(e_{12},1,1).$ If $w\in \Omega_{k}^{(k,I)}(\varepsilon)$, then
$w*u\in \Omega_{k+1}^{(k+1,II)}(\epsilon)$ and
$\mu(B_{w*u})=\mu(B_w).$ Now the result follows   by  applying
\eqref{gibbs2}.
\end{proof}

\noindent
{\bf Acknowledgements}. Liu and Qu are supported by the National Natural Science Foundation of China, No. 11371055.  Qu is supported by the National Natural Science Foundation of China, No. 11201256.   Wen is supported by the National Natural Science Foundation of China, No.11271223. The authors thank Morningside Center of Mathematics for partial support.

\end{document}